\numberwithin{equation}{section}
\newtheorem{theorem}{Theorem}[section]
\newtheorem{lemma}[theorem]{Lemma}
\newtheorem{corollary}[theorem]{Corollary}
\newtheorem{proposition}[theorem]{Proposition}
\newtheorem{definition}[theorem]{Definition}
\newtheorem{remark}[theorem]{Remark}
\newcommand{\rd}{\mathrm{d}}
\newcommand{\bR}{\mathbb{R}}
\newcommand{\Sd}{\mathbb{S}}
\newcommand{\vs}{v_{*}}
\newcommand{\am}{a_{-}}
\newcommand{\ap}{a_{+}}
\newcommand{\xie}{\xi_{e}}
\newcommand{\p}{\varphi}
\newcommand{\K}{\mathcal{K}}
\newcommand*{\im}{\mathop{}\!\mathrm{i}}
\newcommand*{\e}{\mathop{}\!\mathrm{e}}
\begin{document}

\title{Measure Valued Solution to the Spatially Homogeneous Boltzmann Equation with Inelastic Long-Range Interactions}

	\author{Kunlun Qi\footnote{Department of Mathematics, City University of Hong Kong, Hong Kong, People's Republic of China (kunlun.qi@my.cityu.edu.hk).}}    
\date{}

\maketitle

%\todo{Irene: need address and support.}

\begin{abstract}
This paper is to study the inelastic Boltzmann equation without Grad's angular cutoff assumption, where the well-posedness theory of the solution to the initial value problem is established for the Maxwellian molecules in a space of probability measure defined by Cannone-Karch in [\textit{Comm.~Pure.~Appl.~Math.} \textbf{63} (2010), 747-778] via Fourier transform and the infinite energy solutions are not a priori excluded as well. Meanwhile, the geometric relation of the inelastic collision mechanism is introduced to handle the strong singularity of the non-cutoff collision kernel. Moreover, we extend the self-similar solution to the Boltzmann equation with infinite energy shown by Bobylev-Cercignani in [\textit{J.~Stat.~Phy.} \textbf{106} (2002), 1039-1071] to the inelastic case by a constructive approach, which is also proved to be the large-time asymptotic steady solution with the help of asymptotic stability result in a certain sense.
\end{abstract}

{\small 
{\bf Key words.} Boltzmann Equation, Fourier transform, Non-cutoff assumption, Inelasticity,  Probability measure, Self-similarity.

{\bf AMS subject classifications.} Primary 35Q20, 76P05; Secondary  35H20, 82B40, 82C40.

}

\tableofcontents

\section{Introduction}
\label{sec:intro}

\textit{1.1 The inelastic Boltzmann equation.}
In recent years, the kinetic equations have been widely used in the granular materials and other industrial applications, where the interactions are described by inelastic collisions\cite{Cercignani1995,Villani2006granularmaterials}. Hence, in this paper, we consider the inelastic homogeneous Boltzmann equation in $ \mathbb{R}^{3} $,
\begin{equation}\label{fQ}
\partial_{t} f(t,v) = Q_{e}(f,f) (t,v),
\end{equation}
with the non-negative initial condition,
\begin{equation}\label{fQinitial}
f(0,v) = F_{0}(v),
\end{equation}
where the unknown $ f=f(t,v) $ is regarded as the density function of a probability distribution, or more generally, a probability measure; and the initial datum $ F_{0} $ is also assumed to be a non-negative probability measure on $ \mathbb{R}^{3} $. The right hand side of \eqref{fQ} is the inelastic Boltzmann collision operator, which is more conveniently defined in the weak formulation \cite{GambaDiffusively} that 
\begin{equation}\label{weak}
\begin{split}
&\int_{\bR^{3}} Q_{e}(g,f)(v)\phi(v) \rd v\\
=& \int_{\bR^{3}} \int_{\bR^{3}} \int_{\Sd^{2}} B(|v-v_*|,\sigma) g(\vs) f(v) \left[ \phi(v')  - \phi(v) \right] \,\rd \sigma \,\rd \vs \,\rd v\\
=& \frac{1}{2}\int_{\bR^{3}} \int_{\bR^{3}} \int_{\Sd^{2}} B(|v-v_*|,\sigma) g(\vs) f(v) \left[ \phi(v') + \phi(\vs') - \phi(v) - \phi(\vs) \right] \,\rd \sigma \,\rd \vs \,\rd v
\end{split}
\end{equation}
where $ \phi(v) $ is a test function, and $ e\in \left[0,1\right] $ is the so-called restitution coefficient ($ e=1 $ denotes elastic collision and $ e=0 $ denotes sticky collision), which is common to be chosen as constant \cite{Villani2006granularmaterials};
the main advantage of the particular weak form is that the inelastic collision law can be only manifested in the test function $ \phi(v') $ and $ \phi(\vs') $, where the post-collisional velocities $ v', v'_{*} $ (with $ v,\vs $ taken as the pre-collisional velocities) including restitution coefficient $ e $ are 
\begin{equation}\label{ve}
\left\{
\begin{array}{lr}
v' =  \frac{v+\vs}{2} + \frac{1-e}{4}(v-\vs) + \frac{1+e}{4}|v-\vs|\sigma &\\
v'_{*} = \frac{v+\vs}{2} - \frac{1-e}{4}(v-\vs) - \frac{1+e}{4}|v-\vs|\sigma.&
\end{array}
\right.
\end{equation}

%\footnote{For the sake of completeness, the detailed derivation of the weak form is included in the appendix \ref{sub:inelasticweak}.} 

%\begin{equation}
%Q_{e}(g,f) = \int_{\bR^{3}} \int_{\Sd^{2}} B(|v-v_*|,\sigma) \left[ Jg('\vs)f('v)- g(\vs)f(v) \right] \,\rd \sigma \,\rd \vs,
%\end{equation}
%where the post-collisional velocities $ \vs,v $ (with $ 'v, '\vs $ taken as the pre-collisional velocities) are 
%\begin{equation}
%\left\{
%\begin{array}{lr}
%'v = \frac{v+\vs}{2} - \frac{1-e}{4e}(v-\vs) + \frac{1+e}{4e}|v-\vs|\sigma  &  \\
%'\vs = \frac{v+\vs}{2} + \frac{1-e}{4e}(v-\vs) - \frac{1+e}{4e}|v-\vs|\sigma, &  
%\end{array}
%\right.
%\end{equation}
% since the following discussion will be based on the weak formulation of the collision operator, here we generally denote $ J $ as the Jacobian of the transform from $ (v,\vs) $ to $ ('v ,'\vs) $, and one can refer to \cite{WeiZhang2012} for more detailed representation forms in non-constant restitution coefficient case.
%\begin{equation}
%J = \left|\frac{\partial(\tilde{v} ,\tilde{v}_{*})}{\partial(v,\vs)} \right|
%\end{equation}

\textit{1.2 The collision kernel.}
The collision kernel $B$ is a non-negative function that depends only on $|v-v_*|$ and cosine of the deviation angle $\theta$, whose specific form can be determined from the intermolecular potential using classical scattering theory \cite{Cercignani}. For example, in the case of inverse power law potentials $U(r) = r^{-(\mathrm{s}-1)}, 2< \mathrm{s} < \infty$, where $r$ is the distance between two interacting particles, $B$ can be separated as the kinetic part and angular part:
\begin{equation}\label{Bb}
B(|v-v_*|,\sigma) = b(\cos\theta) \Phi(|v-v_*|), \quad \cos\theta=\frac{\sigma\cdot (v-v_*)}{|v-v_*|},
\end{equation} 
where kinetic collision part $\Phi(|v-v_*|)=|v-v_*|^{\gamma}$, $\gamma = \frac{\mathrm{s}-5}{\mathrm{s}-1}$, includes hard potential $ (\gamma>0) $, Maxwellian molecule $ (\gamma =0) $ and soft potential $ (\gamma<0) $. Besides, the angular collision part $b(\cos\theta)$ is an implicitly defined function, asymptotically behaving as, when $\theta\rightarrow 0^{+}$,
\begin{equation}\label{noncutoffnu}
\sin \theta b(\cos\theta) \big|_{\theta\rightarrow 0^{+}} \sim K\theta^{-1-\nu}, \quad \nu = \frac{2}{s-1}, \quad 0<\nu<2 \quad \text{and} \quad  K >0,
\end{equation}
i.e., it has a \textit{non-integrable singularity} when the deviation angle $ \theta $ is small. The kernel (\ref{Bb}) encompasses a wide range of potentials, among which we mention two extreme cases: $\mathrm{s}=\infty$, $\gamma=1$, $\nu=0$ corresponds to the hard spheres, and $\mathrm{s}=2$, $\gamma=-3$, $\nu=2$ corresponds to the Coulomb interaction \cite{Villani02}.

Here we will consider the \textit{Maxwellian kernel} $ B(|v-v_{*}|, \sigma) = b\left(\frac{v-v_{*}}{|v-v_{*}|}\cdot \sigma\right)  = b(\cos\theta) $, which implies that $ B $ does not depend on $ |v-v_{*}| $.
%, such that the inelastic Boltzmann collision operator can be rewritten as
%\begin{equation}
%Q_{e}(g,f) = \int_{\bR^{3}} \int_{\Sd^{2}} b(\cos\theta) \left[ Jg(\tilde{v}_{*})f(\tilde{v})- g(\vs)f(v) \right] \rd \sigma \rd \vs
%\end{equation}
The range of deviation angle $ \theta $, namely the angle between pre- and post-collisional velocities, is a full interval $ \left[0,\pi\right] $, but it is customary to restrict it to $ \left[0,\pi/2\right] $ mathematically, replacing $ b(\cos\theta) $ by its ``symmetrized" version \cite{morimoto2016measure}:
\begin{equation}
\left[ b(\cos\theta) + b(\cos\left(\pi-\theta\right))\right]\mathbf{1}_{0\leq \theta \leq \frac{\pi}{2}}.
\end{equation}
As it has been long known, the main difficulty in establishing the well-posedness result for Boltzmann equation is that the singularity of the collision kernel $ b $ is not locally integrable in $ \sigma\in\Sd^{2} $. To avoid this, Harold Grad gave the integrable assumption \cite{GradCutoff} on the collision kernel by a ``cutoff " near singularity. However, here we introduce the full singularity condition for the collision kernel with \textit{non-cutoff assumption},
\begin{equation}\label{noncutoffb}
\exists \alpha_{0} \in (0,2], \quad \text{such that} \quad \int_{0}^{\frac{\pi}{2}} \sin^{\alpha_{0}}\left(\frac{\theta}{2}\right) b(\cos\theta) \sin\theta \rd \theta< \infty,
\end{equation}
which can handle the strongly singular kernel $ b $ in \eqref{noncutoffnu} with some $ 0 < \nu < 2 $ and $ \alpha_{0}\in(\nu, 2 ] $. Besides, we further illustrate that the non-cutoff assumption \eqref{noncutoffb} can be rewritten as
\begin{equation}\label{noncutoffs}
(1-s)^{\frac{\alpha_{0}}{2}} b(s) \in L^{1}[0,1), \quad \text{for $\alpha_{0}\in\left(0,2\right]$},
\end{equation}
by means of the transformation of variable $ s = \cos\theta $ in the symmetric version of $ b $. As mentioned in \cite[Remark 1]{morimoto2012remark}, the full \textit{non-cutoff assumption} \eqref{noncutoffb}, or equivalently \eqref{noncutoffs}, is the extension of the mild non-cutoff assumption of the collision kernel $ b $ used in \cite{cannone2010infinite}, namely,
\begin{equation}\label{mildcutoff}
\left( 1 - s \right)^{\frac{\alpha_{0}}{4}}\left( 1 + s \right)^{\frac{\alpha_{0}}{4}} b(s) \in L^{1}\left(-1,1\right), \quad \text{for $\alpha_{0}\in\left(0,2\right]$}. 
\end{equation}

\textit{1.3 Conservative and dissipative law.}
We also introduce another type of representation for the post-collisional velocities $ v'$ and $\vs' $, that is called the $ \omega $-form, 
\begin{equation}
\left\{
\begin{array}{lr}
v' = v - \frac{1+e}{2} \left[ (v-\vs)\cdot \omega \right]\omega &  \\
\vs' = \vs + \frac{1+e}{2} \left[ (v-\vs)\cdot \omega \right]\omega, &  
\end{array}
\right.
\end{equation}
from which, we can easily verify the conservation of momentum and dissipation of energy:
\begin{equation}
v+\vs = v'+\vs', \quad |v'|^{2} + |\vs'|^{2} - |v|^{2} - |\vs|^{2} = -\frac{1-e^{2}}{2} \left[ (v-\vs)\cdot \omega \right] \leq 0.
\end{equation}
%Note that in this case $ '\vs, 'v $ do not coincide with $ \vs', v' $ since the inelastic collisions are not revertible. Also note that,
Moreover, we also have
\begin{equation}\label{QEcon}
\int_{\bR^{3}} Q_{e}(f,f)(v) \,\rd v = 0, \quad \int_{\bR^{3}} Q_{e}(f,f)(v) v \,\rd v = 0,
\end{equation}
but
\begin{equation}\label{QEdiss}
\int_{\bR^{3}} Q_{e}(f,f)(v) |v|^{2} \,\rd v \leq 0.
\end{equation}

\section{Main Results}
\label{sec:mainresults}

\subsection{Motivation}
Although in the last decades the granular materials has become a popular subject in physical research (for more detailed physical introduction to the kinetic equation in granular material, we refer to \cite{Brilliantov2004}), the mathematical kinetic theory of granular gases is still young and restrictive. For the inelastic Boltzmann equation, most of results are shown in the frame work of \textit{Grad's cutoff assumption} (mainly collision kernel $ b $ is constant) to best knowledge of the author. The three dimensional inelastic Boltzmann equation with \textit{Maxwellian kernel} was first studied by Bobylev-Carrillo-Gamba in \cite{Bobylev2000inelastic}, where the well-posedness theory has been established.  On the other hand, there are lots of work for the so-called \textit{inelastic hard sphere model} as well, where the collision kernel is modified by multiplying the Maxwellian kernel $ b $ with the function of relative velocity. For this model, we refer to the a series of complete work \cite{MM2006hardsphere1,MM2006hardsphere2} by Mischler-Mouhot, where they systematically studied the existence, uniqueness and tail behavior for \textit{inelastic hard sphere} but still with constant angular part $ b $. Besides, some relevant non-constant restitution model \cite{AL2010,AL2014CMP} or Vlasov-Poisson-Boltzmann System \cite{CH2014} are referred for more detailed physical motivation for the inelastic model.

Hence, our first contribution here is expected to systematically establish the well-posed theory of the complete inelastic Boltzmann equation with long-range interaction, handling the \textit{non-cutoff assumption} \eqref{noncutoffb}, if the initial datum is a probability measure (since $ f $ in \eqref{fQ} itself is density function, it is natural to consider the measure valued solution). As usual, we first recall some classical work in the elastic case: starting from late 1990s, Toscani and coauthors have systematically studied the elastic homogeneous equation with finite energy in \cite{Toscani1999, Toscani1995,ToscaniVillani1999}.
In \cite{cannone2010infinite}, Cannone-Karch presented the existence and uniqueness of elastic Boltzmann equation with \textit{Maxwellian molecule} in a space of probability measure defined via Fourier transform, which didn't exclude infinite energy solution, but merely handled the mild singularity of collision kernel. Fortunately, Morimoto extended their results to the strong singularity as well as proving some smoothing effect in \cite{morimoto2012remark}. Meanwhile, Lu-Mouhot showed existence of weak measure valued solution without angular cutoff for hard potential, having finite mass and energy, as well as strong stability and uniqueness under cutoff assumption in \cite{LuMouhot2012,LuMouhot2015}. In more general non-cutoff case (including hard potential and soft potential, finite energy and infinite energy),  Cho-Morimoto-Wang-Yang also studied the measure valued solution with corresponded moment and smoothing  property in their series paper \cite{morimoto2016measure, MWY2015,CMWY2016}. 

Another attractive aspect of the Boltzmann equation is its self-similarity properties, especially in the sense of asymptotic state. Precisely speaking: (i) In the regime of elastic case with Maxwellian kernel, the well-known H-theorem implies the solution to Boltzmann equation tends to the Maxwellian equilibrium as time goes to infinity, if the initial energy is finite. However, when initial energy is infinite, the asymptotic state shall be described by the self-similar solution firstly obtained by Bobylev-Cercignani in \cite{BC2002selfsimilarapplication} and the asymptotic convergence has been proved by Cannone-Karch \cite{cannone2010infinite} and Morimoto-Yang-Zhao \cite{MYZ2017convergence} in the weak and strong sense respectively.
(ii) For the inelastic Boltzmann equation, Bobylev-Cercignani and Bisi-Carrillo-Toscani studied self-similar solutions, long-time behavior respectively in \cite{BC2003} and \cite{BCT2006} for the \textit{cutoff Maxwellian kernel}.
Besides, the convergence to self-similarity for the \textit{inelastic cutoff hard sphere} was further proved by Mischler-Mouhot in
\cite{MM2009inelasticlimit}. More recently, 
Bobylev-Cercignani-Gamba in \cite{BCG2008lecture,BCG2009selfsimilar} developed a more general approach to prove a family of self-similar solutions in radially symmetric case and in \cite{BLM2015electronicProb} Federico-Lucia-Daniel analyzed the long-time asymptotic behavior for \textit{inelastic cutoff Maxwellian kernel} by the probabilistic method, where we also refer to good summary about the convergence results under various circumstances in \cite[Sec 1.2]{BLM2015electronicProb}. Based on the existed work, our contribution in this part is to develop a constructive method in proving the existence of self-similar solution to the inelastic Boltzmann equation with certain singular collision kernel, which attracts all solutions with specific initial conditions in the sense of our defined norm. 

Apart from the work mentioned above, we refer to classical review by Villani \cite{Villani02} for further references in cutoff case and the recent review by Alexandre  \cite{Alexreview2009} under non-cutoff assumption.

\subsection{Main Theorems}

Considering that any solution to be found is a probability measure for any $ t \geq 0 $ after normalization, we 
denote $ P_{0}(\mathbb{R}^{3}) $ as the set of all positive probability measures on $ \mathbb{R}^{3} $ and further $ P_{\alpha}(\mathbb{R}^{3}) $ as the set of probability measures on $ \mathbb{R}^{3} $ with finite moments up to the order $ \alpha\in[0,2] $, which implies the possible existence of infinite energy solution, more precisely,
\begin{equation}\label{Palpha}
\begin{split}
P_{\alpha}(\mathbb{R}^{3}) = \{ f\in P_{0}(\mathbb{R}^{3})& \big|  \int_{\mathbb{R}^{3}} f \,\rd v=1, \ \int_{\mathbb{R}^{3}} |v|^{\alpha} f  \,\rd v < \infty\\
&\text{and if}\ \alpha > 1, \int_{\mathbb{R}^{3}} v_{j} f \,\rd v = 0, \ j = 1,2,3 \}
\end{split}
\end{equation}
see more complete definition of measure valued solution in \cite{morimoto2016measure}. Then the space $ \mathcal{K} $ is constructed to include characteristic functions, see Definition \ref{definitionK}, which consists of the Fourier transformation of probability measures thanks to the Bochner Theorem \cite{cannone2013selfsimilar}. 

Let the Fourier transform of $ f $ be defined by
\begin{equation}
\p(t,\xi) := \mathcal{F}(f)(t,\xi) = \int_{\mathbb{R}^{3}} \e^{-\im v\cdot\xi} f(t,v) \,\rd v,
\end{equation}
it follows that the ``inelastic" version Bobylev identity\footnote{For the sake of completeness, the rigour proof of this identity is presented in the appendix \ref{sub:fourier+}.} can be written as,
\begin{equation}\label{IBE}
\partial_{t} \varphi(t,\xi) = \int_{\Sd^{2}} b\left(\frac{\xi\cdot\sigma}{|\xi|}\right) \left[ \varphi(t,\xie^{+},)\varphi(t,\xie^{-}) - \varphi(t,\xi)\varphi(t,0) \right] \,\rd\sigma,
\end{equation}
where, unlike the elastic case, the $ \xi^{+} $ and $ \xi^{-} $ are defined as
\begin{equation}
\xie^{+} = \frac{\xi}{2} + \frac{1-e}{4}\xi + \frac{1+e}{4}|\xi|\sigma, \quad
\xie^{-} = \frac{\xi}{2} - \frac{1-e}{4}\xi - \frac{1+e}{4}|\xi|\sigma.
\end{equation}
For the sake of convenience, we introduce shorthand parameters $ a_{+} = \frac{1+e}{2} $ and $ a_{-} = \frac{1-e}{2} $, such that, 
\begin{equation}\label{xie+}
\xie^{+} = \left( \frac{1}{2} + \frac{\am}{2} \right)\xi + \frac{\ap}{2}|\xi|\sigma,
\end{equation}
\begin{equation}\label{xie-}
\xie^{-} = \left( \frac{1}{2} - \frac{\am}{2} \right)\xi - \frac{\ap}{2}|\xi|\sigma,
\end{equation}
%\begin{equation}
%\left\{
%\begin{aligned}
%\xie^{+} = \left( \frac{1}{2} + \frac{\am}{2} \right)\xi + \frac{\ap}{2}|\xi|\sigma,\label{xie+} &  \\
%\xie^{-} = \left( \frac{1}{2} - \frac{\am}{2} \right)\xi - \frac{\ap}{2}|\xi|\sigma, \label{xie-}&  
%\end{aligned}
%\right.
%\end{equation}

\begin{equation}
\xie^{+} + \xie^{-} = \xi, \quad |\xie^{+}|^{2} + |\xie^{-}|^{2} = \frac{1+\ap^{2}+ \am^{2}}{2}|\xi|^{2} + \ap\am|\xi|^{2} \frac{\xi\cdot\sigma}{|\xi|}.
\end{equation}

\begin{remark}
	To check this, one can compare \eqref{xie+}-\eqref{xie-} with the elastic case by selecting $ e=1 $, which implies that $ \ap=1, \am=0 $, then 
	\begin{equation}
	\xi^{+} = \frac{\xi + |\xi|\sigma}{2}, \quad \xi^{-} = \frac{\xi - |\xi|\sigma}{2}, \quad |\xi^{+}|^{2} + |\xi^{-}|^{2} =|\xi|^{2}. \notag
	\end{equation}
	which is consistent with the well-known relations of elastic collision.
\end{remark}

Therefore, benefiting from the simple form of Bobylev identity, here our main object will be the equation \eqref{IBE} associated with the following initial condition:
\begin{equation}\label{initial}
\p(0,\xi) = \p_{0}(\xi) = \int_{\mathbb{R}^{3}} \e^{-\im v\cdot \xi} \rd F_{0}(v),
\end{equation}
where if $ \p_{0} \in \mathcal{K}^{\alpha} $ defined as \eqref{Kalpha} is the Fourier transform of a probability measure $ F_{0} $ satisfying \eqref{Palpha}, then the corresponding solution $ \p= \p(t,\xi) $ to \eqref{IBE}-\eqref{initial} is the Fourier transform of a solution $ f = f(t,v) $ to the original initial value problem \eqref{fQ}-\eqref{fQinitial}, see more explanations in \cite{cannone2010infinite}.

Now we are in a position to state our main theorem on the well-posedness of the solution $ \p $ to the initial value problem \eqref{IBE}-\eqref{initial}.

\begin{theorem} \label{noncutoffwellposedness}
	\emph{(Well-posedness under non-cutoff assumption)} Assume that $ e\in (0,1] $ and the collision kernel $ b $ satisfies the non-cutoff assumption \eqref{noncutoffb} for some $ \alpha_{0}\in\left[0,2\right] $, then for each $ \alpha\in\left[\alpha_{0},2\right]$ and initial condition $ \p_{0}\in\K^{\alpha} $, there exists a solution $ \p\in C\left(  \left[0,\infty\right), \K^{\alpha} \right)$ to the initial value problem \eqref{IBE}-\eqref{initial} and the solution $ \p $ is unique in the space $ C\left(  \left[0,\infty\right), \K^{\alpha_{0}} \right) $. \\
	Furthermore, for two solutions $ \p, \tilde{\p} \in C\left( \left[0,\infty\right), \K^{\alpha} \right) $ corresponding to the initial datum $ \p_{0}, \tilde{\p}_{0} $ respectively, we have the stability result, for every $ t \geq 0 $,
	\begin{equation}\label{stabilitynoncutoff}
	\left\| \p(t,\cdot)-\tilde{\p}(t,\cdot) \right\|_{\alpha} \leq \e^{\lambda_{e,\alpha}t} \left\| \p_{0}-\tilde{\p}_{0} \right\|_{\alpha},
	\end{equation}
	where the finite parameter $ \lambda_{e,\alpha} $ is defined as,
	\begin{equation}\label{lamdadefinitionmain}
	\lambda_{e,\alpha} \equiv \int_{\Sd^{2}} b\left(\frac{\xi\cdot\sigma}{|\xi|}\right) \left(\frac{|\xie^{+}|^{\alpha}+|\xie^{-}|^{\alpha}}{|\xi|^{\alpha}} -1 \right) \,\rd \sigma.
	\end{equation}
\end{theorem}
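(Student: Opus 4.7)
The plan is to work entirely on the Fourier side with the Bobylev identity \eqref{IBE}. Since any $\p \in \K$ satisfies $\p(t,0)=1$ and $\|\p(t,\cdot)\|_{L^{\infty}}\le 1$, the equation takes the form
\begin{equation*}
\partial_{t}\p(t,\xi)=\int_{\Sd^{2}} b\!\left(\tfrac{\xi\cdot\sigma}{|\xi|}\right)\bigl[\p(t,\xie^{+})\p(t,\xie^{-})-\p(t,\xi)\bigr]\rd\sigma,
\end{equation*}
and the non-integrable singularity of $b$ near $\theta=0$ is neutralised by the cancellation in the bracket: from \eqref{xie+}--\eqref{xie-} one has $\xie^{+}\to\xi$ and $\xie^{-}\to 0$ as $\sigma\to\xi/|\xi|$, so the integrand vanishes there. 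The first step is to make this quantitative by showing that $\lambda_{e,\alpha}$ in \eqref{lamdadefinitionmain} is finite for every $\alpha\in[\alpha_{0},2]$. Using the explicit formulas \eqref{xie+}--\eqref{xie-} and rotational invariance, I reduce to a one-dimensional integral in $\theta$ and expand $(|\xie^{+}|^{\alpha}+|\xie^{-}|^{\alpha})/|\xi|^{\alpha}-1$; its leading behaviour as $\theta\to 0^{+}$ turns out to be $O(\sin^{\alpha}(\theta/2))$, so integrability against $b(\cos\theta)\sin\theta\rd\theta$ follows from \eqref{noncutoffb} and the condition $\alpha\ge\alpha_{0}$.

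Equipped with this a priori control, existence is obtained by a Banach fixed-point argument applied to the Duhamel version
\begin{equation*}
\p(t,\xi)=\p_{0}(\xi)+\int_{0}^{t}\!\int_{\Sd^{2}} b\bigl[\p(s,\xie^{+})\p(s,\xie^{-})-\p(s,\xi)\bigr]\rd\sigma\,\rd s
\end{equation*}
in the closed subset of $C([0,T];\K^{\alpha})$ consisting of characteristic functions. Both the self-mapping and contraction estimates reduce to the bilinear bound obtained by factoring $\p^{+}\p^{-}-\tilde\p^{+}\tilde\p^{-}=(\p^{+}-\tilde\p^{+})\p^{-}+\tilde\p^{+}(\p^{-}-\tilde\p^{-})$, combining $|\p|,|\tilde\p|\le 1$ with $|\p(\eta)-\tilde\p(\eta)|\le\|\p-\tilde\p\|_{\alpha}|\eta|^{\alpha}$, dividing by $|\xi|^{\alpha}$ and invoking the previous step. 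Positive-definiteness of the iterates is preserved because they are Fourier transforms of the nonnegative measure flow built from \eqref{fQ}, and the lifetime can be taken $T=+\infty$ thanks to the forthcoming a priori stability bound.

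Stability, and hence uniqueness in $C([0,\infty);\K^{\alpha_{0}})$ by specialising to $\alpha=\alpha_{0}$, is then the core computation. Subtracting the equations for $\p$ and $\tilde\p$, using the bilinear splitting above and the same pointwise bounds, I get after dividing by $|\xi|^{\alpha}$ and passing to the supremum in $\xi\neq 0$
\begin{equation*}
\frac{\rd}{\rd t}\|\p(t)-\tilde\p(t)\|_{\alpha}\le \lambda_{e,\alpha}\,\|\p(t)-\tilde\p(t)\|_{\alpha},
\end{equation*}
and Gr\"onwall's lemma yields exactly \eqref{stabilitynoncutoff}.

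The main obstacle is the uniform-in-$\xi$ asymptotic estimate $(|\xie^{+}|^{\alpha}+|\xie^{-}|^{\alpha})/|\xi|^{\alpha}-1=O(\sin^{\alpha}(\theta/2))$ with a constant independent of the restitution coefficient $e\in(0,1]$. In the elastic case $e=1$ this is immediate from $|\xie^{+}|^{2}+|\xie^{-}|^{2}=|\xi|^{2}$, but for $e<1$ the mixed term $\ap\am|\xi|^{2}\,\xi\cdot\sigma/|\xi|$ in the energy identity survives and must be dominated by exploiting simultaneously the concavity of $x\mapsto x^{\alpha/2}$ for $\alpha\le 2$ and the explicit moduli $|\xie^{\pm}|$ given by \eqref{xie+}--\eqref{xie-}. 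This is the one place where the inelastic geometry genuinely complicates the classical Cannone--Karch--Morimoto argument, and it is precisely where the threshold $\alpha\ge\alpha_{0}$ matches the non-cutoff assumption \eqref{noncutoffb}.
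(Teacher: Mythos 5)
Your computation that $\lambda_{e,\alpha}$ is finite is correct and matches the paper's Lemma 4.1 (indeed $|\xi_e^+|^2/|\xi|^2=1-a_+(1+a_-)\sin^2(\theta/2)$, so the gain over $1$ is $O(\sin^2(\theta/2))$ and the $|\xi_e^-|$ contribution is $a_+^\alpha\sin^\alpha(\theta/2)$). The genuine gap is in both the existence and the stability steps, and it is the same gap: under the non-cutoff assumption the collision operator is \emph{not} Lipschitz on $\mathcal{K}^\alpha$. Your bilinear splitting gives
\begin{equation*}
\bigl|\bigl[\p^{+}\p^{-}-\p(\xi)\bigr]-\bigl[\tilde\p^{+}\tilde\p^{-}-\tilde\p(\xi)\bigr]\bigr|\le \|\p-\tilde\p\|_{\alpha}\bigl(|\xie^{+}|^{\alpha}+|\xie^{-}|^{\alpha}+|\xi|^{\alpha}\bigr),
\end{equation*}
and after dividing by $|\xi|^{\alpha}$ and integrating in $\sigma$ this produces $\gamma_{e,\alpha}+\gamma_{2}$, not $\lambda_{e,\alpha}=\gamma_{e,\alpha}-\gamma_{2}$; both $\gamma_{e,\alpha}$ and $\gamma_{2}$ are infinite when $b$ is non-integrable. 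The cancellation that tames the singularity for a \emph{single} characteristic function (the paper's Lemma 3.6, which uses $\p(0)=1$, the symmetrization $\p(\xie^{+})+\p(\tilde\xie^{+})-2\p(\zeta_{e})$ and the geometry of Figure 1) does not survive when you take the difference of two arbitrary elements of $\mathcal{K}^{\alpha}$: there is no inequality of the form $|a+b-c|\le|a|+|b|-|c|$. Consequently your fixed-point map on the Duhamel formula is not a contraction (its increment is not even finite), and your differential inequality $\frac{\rd}{\rd t}\|\p-\tilde\p\|_{\alpha}\le\lambda_{e,\alpha}\|\p-\tilde\p\|_{\alpha}$ is asserted rather than derived — the loss term must be kept as an (infinitely strong) damping and cannot be moved to the right-hand side in absolute value.

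This is precisely why the paper takes a different route: it first proves well-posedness for the truncated kernels $b_{n}=\min\{b,n\}$ by Banach fixed point (where $\gamma_{2}<\infty$ and the map $\mathcal{P}$ with the $\e^{-\gamma_{2}t}$ weight is a genuine contraction and manifestly preserves positive definiteness), then obtains existence in the non-cutoff case by uniform bounds, equicontinuity and Ascoli--Arzel\`a, and proves the stability estimate \eqref{stabilitynoncutoff} by splitting $\Sd^{2}$ into a neighbourhood $\Omega_{\epsilon}$ of the singular direction — where the remainder terms are controlled by Lemma 3.6 and vanish as $\epsilon\to0^{+}$ — and its complement, where $\gamma_{\epsilon}$ and $\gamma_{e,\alpha,\epsilon}$ are finite and the Gr\"onwall argument applies, before letting $\epsilon\to0^{+}$ so that only the finite difference $\gamma_{e,\alpha,\epsilon}-\gamma_{\epsilon}\to\lambda_{e,\alpha}$ survives. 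A secondary issue: your justification of positive definiteness of the iterates by appealing to ``the nonnegative measure flow built from \eqref{fQ}'' is circular, since the existence of that flow is what is being proved; the paper instead uses that $\mathcal{G}_{e}[\p]$ is positive definite and that the Duhamel iterates are limits of positive combinations of positive definite functions.
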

Note that the quantity will $ \lambda_{e,\alpha} $ appears systematically in the rest of the paper, which nearly play the same role as corresponded parameter $ \lambda_{\alpha} $ in elastic case \cite{cannone2010infinite}, defined by
\begin{equation}
\lambda_{\alpha} \equiv \int_{\Sd^{2}} b\left(\frac{\xi\cdot\sigma}{|\xi|}\right) \left(\frac{|\xi^{+}|^{\alpha}+|\xi^{-}|^{\alpha}}{|\xi|^{\alpha}} -1 \right)\rd \sigma = 2\pi \int_{0}^{\frac{\pi}{2}} b(\cos\theta) \left(\sin^{\alpha}\frac{\theta}{2} + \cos^{\alpha}\frac{\theta}{2} - 1 \right) \sin\theta\rd\theta.
\end{equation}
and $ \lambda_{e,\alpha} = \lambda_{\alpha}$ if and only if the restitution coefficient $ e=1 $. More important properties of $ \lambda_{e,\alpha} $ and another parameter $ \gamma_{e,\alpha} $ as \eqref{gammae} will be discussed in the Lemma \ref{parametergl} below.

The complete proof of Theorem \ref{noncutoffwellposedness} will be presented in section \ref{sec:noncutoff} by a delicate compact argument, which is based on the well-posed theory under cutoff assumption firstly given in the section \ref{sec:cutoff}. The uniqueness conclusion is guaranteed by stability result under non-cutoff assumption \eqref{noncutoffb}.

Besides that, in order to study the large time behaviour of a class of solution to system \eqref{IBE}-\eqref{initial},
we also consider the self-similar scaling $ \p\left(\xi,t\right) =  \Phi\left( \xi \e^{\mu t} \right)$ such that we can reduce the study of self-similar solution to the study of stationary solution to the following rescaled equation:
\begin{equation}\label{Phi1}
\mu \eta \cdot \nabla \Phi(\eta) = \int_{\Sd^{2}} b\left(\frac{\eta\cdot\sigma}{|\eta|}\right) \left[ \Phi(\eta_{e}^{+})\Phi(\eta_{e}^{-}) - \Phi(\eta)\Phi(0)\right]\rd\sigma,
\end{equation}
which is obtained by substituting the profile $ \Phi\left( \xi \e^{\mu t} \right) $ into \eqref{IBE} and the variable $ \eta^{+} $ and $ \eta^{-} $ have the analogous definition as the vector in \eqref{xie+}-\eqref{xie-}. In fact, we claim that the coefficients $ \mu $ can be determined by ;
\begin{equation}\label{muea}
\mu = \mu_{e,\alpha} = \frac{\lambda_{e,\alpha}}{\alpha},
\end{equation}
which will be shown in the proof of the following Theorem \ref{SteadyExistence} in section \ref{sec:selfsimilar}. In contrast with the general method in \cite{BCG2008lecture,BCG2009selfsimilar}, we apply a totally different constructive approach to obtain the self-similar solution for the inelastic Boltzmann equation with infinite energy motivated by \cite{BC2002selfsimilarapplication}.

\begin{theorem}\label{SteadyExistence}
	\emph{(Existence of self-similar solution)} Assume that $ e\in (0,1] $ and the collision kernel $ b $ satisfies the non-cutoff assumption \eqref{noncutoffb} for some $ \alpha\in\left(0,2\right) $. For each constant $ K <0 $ and $ \mu_{e,\alpha}$ defined in \eqref{muea}, there exists a radially symmetric solution $ \Phi(\eta) = \Phi(\left| \eta \right|) = \Phi^{(\alpha)}_{e,K} \in \mathcal{K}^{\alpha} $ to the equation \eqref{Phi1} satisfying 
	\begin{equation}\label{Phiasym}
	\lim\limits_{\left| \eta \right|\rightarrow 0 }\frac{ \Phi^{(\alpha)}_{e,K}\left(\eta\right)-1}{\left| \eta \right|^{\alpha}} = K,
	\end{equation}
	where $ K $ is the coefficient $ \Psi_{1}^{(\alpha)} $ of \eqref{Psiconclusion}.
\end{theorem}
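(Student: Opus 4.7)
The plan follows Bobylev--Cercignani's constructive strategy \cite{BC2002selfsimilarapplication} adapted to the inelastic case: reduce to a scalar problem via radial symmetry, expand in a formal power series, solve the resulting coefficient recursion, prove convergence, and finally verify that the limit is the Fourier transform of a probability measure. First, I would seek a solution of the form $\Phi(\eta) = \Psi(x)$ with $x = |\eta|^{\alpha}$. Since \eqref{xie+}--\eqref{xie-} imply $|\eta_e^{\pm}| = c_{\pm}(s)|\eta|$ with explicit $c_{\pm}$ depending only on $s = \eta\cdot\sigma/|\eta|$, one has $\eta\cdot\nabla\Phi(\eta) = \alpha x\Psi'(x)$ and $\Phi(\eta_e^{\pm}) = \Psi(c_{\pm}(s)^{\alpha} x)$; using $\Psi(0)=1$, equation \eqref{Phi1} reduces to the scalar integro-differential equation
\[
\mu\alpha x\,\Psi'(x) \;=\; \int_{\mathbb{S}^{2}} b(s)\bigl[\Psi(c_{+}(s)^{\alpha} x)\,\Psi(c_{-}(s)^{\alpha} x) - \Psi(x)\bigr]\,d\sigma.
\]

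Substituting the ansatz $\Psi(x) = 1 + \sum_{n\geq 1}\Psi_{n} x^{n}$ and matching coefficients of $x^{n}$ yields the recursion
\[
\bigl(\mu\alpha\, n - \lambda_{e,n\alpha}\bigr)\Psi_{n} \;=\; \sum_{k=1}^{n-1}\Psi_{k}\Psi_{n-k}\,\beta_{k,n-k}^{(\alpha)}, \quad \beta_{k,n-k}^{(\alpha)} := \int_{\mathbb{S}^{2}} b(s)\,c_{+}(s)^{k\alpha}c_{-}(s)^{(n-k)\alpha}\,d\sigma.
\]
The $n=1$ equation forces $\mu\alpha = \lambda_{e,\alpha}$, justifying the choice \eqref{muea}, while leaving $\Psi_{1}$ free; the prescribed asymptotic behavior \eqref{Phiasym} pins down $\Psi_{1} = K$. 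For $n\geq 2$ one must verify the non-degeneracy $n\lambda_{e,\alpha} - \lambda_{e,n\alpha} \neq 0$, a strict convexity/monotonicity property of $\beta\mapsto\lambda_{e,\beta}$ on $(0,2]$ that I expect to extract from Lemma \ref{parametergl}; each $\Psi_{n}$ is then uniquely determined by $\Psi_{1},\dots,\Psi_{n-1}$.

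Next, an induction gives the a priori bound $|\Psi_{n}|\leq M|K|^{n}$ for a constant $M=M(e,\alpha,b)$: the key is that $\beta_{k,n-k}^{(\alpha)}$ is controlled uniformly by the $\lambda_{e,\cdot}$'s while $|n\lambda_{e,\alpha} - \lambda_{e,n\alpha}|$ grows linearly in $n$, so that the Cauchy convolution on the right-hand side is tamed by the divisor on the left. This yields a positive radius of convergence for $\sum\Psi_{n}x^{n}$, and a standard continuation argument extends $\Psi$ to an analytic profile on $[0,\infty)$ with $\Psi(0)=1$ and leading behavior $\Psi(x) = 1 + Kx + O(x^{2})$.

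The main obstacle is the final step: checking that $\Phi(\eta) = \Psi(|\eta|^{\alpha})$ actually belongs to $\mathcal{K}^{\alpha}$, i.e., is the characteristic function of a probability measure, which is far from automatic from the series expansion. Following \cite{cannone2010infinite,BC2002selfsimilarapplication}, I would realize $\Phi$ as the large-time asymptotic limit, in the $\|\cdot\|_{\alpha}$-norm, of the rescaled flow $\widetilde{\varphi}(t,\eta) := \varphi(t,\eta\,\e^{-\mu_{e,\alpha} t})$, where $\varphi(t,\cdot)$ is the solution provided by Theorem \ref{noncutoffwellposedness} starting from the Fourier transform of an $\alpha$-stable law (whose characteristic function behaves like $1 + K|\eta|^{\alpha} + o(|\eta|^{\alpha})$ near the origin, positive-definiteness requiring precisely $K<0$, which explains the sign hypothesis). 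Each $\widetilde{\varphi}(t,\cdot)$ lies in $\mathcal{K}^{\alpha}$, and since this class is closed in $\|\cdot\|_{\alpha}$ by Bochner's theorem, so is its limit. Uniqueness of the Taylor coefficients at the origin identifies this limit with the formal power series constructed above, thereby placing $\Phi \in \mathcal{K}^{\alpha}$ and completing the proof.
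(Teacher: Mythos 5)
Your overall strategy is the same as the paper's (both are the Bobylev--Cercignani construction: radial reduction, power-series ansatz, coefficient recursion with the divisor $n\lambda_{e,\alpha}-\lambda_{e,n\alpha}\geq (n-1)\lambda_{e,\alpha}$, induction bounds, identification of the limit as a characteristic function), and your derivation of the recursion and of $\mu\alpha=\lambda_{e,\alpha}$ from the $n=1$ equation is correct. But there are two concrete gaps. First, you drop the $\Gamma(np+1)$ normalization that the paper builds into the ansatz \eqref{uform}--\eqref{psiform}. With plain coefficients your induction can only deliver a geometric bound $|\Psi_{n}|\lesssim B^{n}$, hence a \emph{finite} radius of convergence in $x=|\eta|^{\alpha}$, and the ``standard continuation argument'' you invoke to extend $\Psi$ to all of $[0,\infty)$ does not exist for this nonlinear, nonlocal equation: analyticity on a disk gives no global extension. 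The paper's Mittag--Leffler-type normalization is exactly what turns the same geometric bound on $\Psi_{n}^{(p)}$ (Lemma \ref{technical1}, whose proof needs the Beta--Gamma identities \eqref{GammaBeta} to control $\tfrac{1}{n-1}\sum B_{e,p}(i,j)$) into a series converging for \emph{every} $x\geq 0$, so the profile is globally defined without any continuation step.

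Second, your route to $\Phi\in\mathcal{K}^{\alpha}$ presupposes that the rescaled flow $\widetilde{\varphi}(t,\cdot)$ started from an $\alpha$-stable law \emph{converges} in $\|\cdot\|_{\alpha}$ as $t\to\infty$; closedness of $\mathcal{K}^{\alpha}$ then does the rest. But that convergence is the crux and is not supplied by anything you cite: Theorem \ref{AsymptoticStability} only says two solutions with matching leading asymptotics approach \emph{each other}, which gives no Cauchy property in $t$ for a single trajectory unless a stationary solution in $\mathcal{K}^{\alpha}$ is already known to exist --- precisely what is being proved. The paper closes this loop constructively in Proposition \ref{theoremx}: integrating the coefficient ODEs \eqref{recurrence} gives \eqref{psin}, the positivity $\gamma_{e,n}(p)>0$ forces $\psi_{n}(t)\to\Psi_{n}$, the bounds \eqref{psiuniform} are uniform in $t$ and summable against $x^{np}/\Gamma(np+1)$, so $\psi(t,\cdot)\to\Psi$ pointwise; since each $\psi(t,\cdot)$ is a characteristic function and the limit is continuous at the origin with value $1$, $\Phi^{(\alpha)}_{e,K}\in\mathcal{K}^{\alpha}$. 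You need either this explicit coefficient-wise convergence or some substitute for it; as written, the final step of your proposal assumes what it must prove.
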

The complete proof will be given in section \ref{sec:selfsimilar}.
\begin{remark}
	Note that the negativity of constant $ K $ is definite, though its value is not strictly determined, which has the same reason as elastic case that $ \Phi^{(\alpha)}_{e,K} $ is proved to be characteristic function satisfying $ \left|\Phi^{(\alpha)}_{e,K}(\eta) \leq 1 \right| $ as well, see \cite[Remark 6.3]{cannone2010infinite} for more details.
\end{remark}

On the other hand, it is more convenient to work in self-similar variables to study the role that the self-similar profile $ \Phi $ plays in large time behavior, which means that, given a solution $ \p\left(t,\xi\right) $, we consider another new function
\begin{equation}
\phi_{e}^{(\alpha)}\left(t,\xi\right) = \p\left( t, \xi\e^{-\mu_{e,\alpha} t}\right), 
\end{equation}
therefore, we can reduce the initial value problem \eqref{IBE}-\eqref{initial} to the following new initial value problem,
\begin{equation}\label{phi}
\partial_{t} \phi_{e}^{(\alpha)}\left(t, \xi\right) + \mu_{e,\alpha} \xi\cdot\nabla\phi_{e}^{(\alpha)}\left(t, \xi\right) = \int_{\Sd^{2}} b\left(\frac{\xi\cdot\sigma}{|\xi|}\right) \left[ \phi_{e}^{(\alpha)}\left(t,\xie^{+}\right)\phi_{e}^{(\alpha)}\left(t,\xie^{-}\right) - \phi_{e}^{(\alpha)}\left(t,\xi\right)\phi_{e}^{(\alpha)}\left(t,0\right) \right] \rd\sigma,
\end{equation}
with the following initial datum
\begin{equation}\label{phii}
\phi\left(0,\xi\right) = \phi_{0}\left(\xi\right) = \p_{0}\left(\xi\right).
\end{equation}
Note that, in the new variable, the self-similar profiles $ \Phi $ is claimed as stationary solutions to the initial value problem above \eqref{phi}-\eqref{phii}.
Before showing that, we give the following stability result with respect to the rescaled initial value problem \eqref{phi}-\eqref{phii}. 

\begin{theorem}\label{AsymptoticStability}
	\emph{(Asymptotic stability of rescaled equation)} Assume that $ e\in (0,1] $ and the collision kernel $ b $ satisfies the non-cutoff assumption \eqref{noncutoffb} for $ \alpha_{0}\in \left(0,2\right) $. Let $ \alpha\in\left[\alpha_{0},2\right) $ and suppose that the two initial datums $ \phi_{0}, \tilde{\phi}_{0} \in \mathcal{K}^{\alpha} $ satisfy the following condition
	\begin{equation}\label{phiinitial}
	\lim\limits_{|\xi|\rightarrow 0} \frac{\phi_{0}(\xi)-\tilde{\phi}_{0}(\xi)}{|\xi|^{\alpha}} = 0,
	\end{equation}
	then the corresponded solutions $ \phi_{e}^{(\alpha)}(t,\xi), \tilde{\phi}_{e}^{(\alpha)}(t,\xi) $ to the rescaled initial value problem \eqref{phi}-\eqref{phii} approach each other in the following sense:
	\begin{equation}
	\lim\limits_{t\rightarrow \infty} \left\| \phi_{e}^{(\alpha)}(t,\cdot) - \tilde{\phi}_{e}^{(\alpha)}(t,\cdot) \right\|_{\alpha} = 0.
	\end{equation}
\end{theorem}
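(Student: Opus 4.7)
The plan is to pass back to the unrescaled Fourier variables $\eta=\xi\e^{-\mu_{e,\alpha}t}$, in which the rescaled difference translates to
\begin{equation*}
\frac{|\phi_{e}^{(\alpha)}(t,\xi)-\tilde{\phi}_{e}^{(\alpha)}(t,\xi)|}{|\xi|^{\alpha}}=\e^{-\lambda_{e,\alpha}t}\,\frac{|V(t,\eta)|}{|\eta|^{\alpha}}, \qquad V(t,\eta):=\p(t,\eta)-\tilde{\p}(t,\eta),
\end{equation*}
by virtue of $\mu_{e,\alpha}\alpha=\lambda_{e,\alpha}$, where $\p,\tilde{\p}$ are the unscaled solutions of \eqref{IBE}--\eqref{initial} with data $\phi_{0},\tilde{\phi}_{0}$. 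For an arbitrary threshold $R>0$ I would split the supremum defining $\|\phi_{e}^{(\alpha)}(t)-\tilde{\phi}_{e}^{(\alpha)}(t)\|_{\alpha}$ into $\{|\xi|\le R\}$ and $\{|\xi|>R\}$. On the outer region the trivial bound $|V|\le 2$ gives a contribution $\le 2R^{-\alpha}$, arbitrarily small for $R$ large. On the inner region the preimage satisfies $|\eta|\le R\e^{-\mu_{e,\alpha}t}$, which shrinks to $0$ as $t\to\infty$ since $\mu_{e,\alpha}>0$ on the admissible range $\alpha\in[\alpha_{0},2)$ (this is where Lemma \ref{parametergl} enters).

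The next key step is to establish a localized version of the global stability estimate \eqref{stabilitynoncutoff}: for every $r>0$,
\begin{equation*}
\sup_{0<|\eta|\le r}\frac{|V(t,\eta)|}{|\eta|^{\alpha}}\le\e^{\lambda_{e,\alpha}t}\sup_{0<|\eta|\le r}\frac{|V(0,\eta)|}{|\eta|^{\alpha}}.
\end{equation*}
What enables this localization is the structural fact that, once $b$ is symmetrized to $\theta\in[0,\pi/2]$, a direct computation from \eqref{xie+}--\eqref{xie-} using $\ap+\am=1$ gives $|\xie^{+}|,|\xie^{-}|\le|\xi|$ for every $\sigma\in\Sd^{2}$. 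Consequently the Bobylev collision operator, evaluated at any $\eta$ with $|\eta|\le r$, only samples $V$ at points of the same ball, so the Gronwall argument already used to prove \eqref{stabilitynoncutoff} in Theorem \ref{noncutoffwellposedness} transfers verbatim with each supremum replaced by its truncated counterpart. Combining the inner estimate with the $\e^{-\lambda_{e,\alpha}t}$ prefactor collapses the exponentials and yields
\begin{equation*}
\sup_{0<|\xi|\le R}\frac{|\phi_{e}^{(\alpha)}(t,\xi)-\tilde{\phi}_{e}^{(\alpha)}(t,\xi)|}{|\xi|^{\alpha}}\le\sup_{0<|\eta|\le R\e^{-\mu_{e,\alpha}t}}\frac{|\phi_{0}(\eta)-\tilde{\phi}_{0}(\eta)|}{|\eta|^{\alpha}},
\end{equation*}
whose right-hand side tends to $0$ by the hypothesis \eqref{phiinitial}. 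Given $\varepsilon>0$, one picks $R$ so that $2R^{-\alpha}<\varepsilon/2$ and then takes $t$ large enough to drive the inner supremum below $\varepsilon/2$, delivering the desired asymptotic convergence.

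The principal technical obstacle will be the rigorous derivation of the localized stability estimate. One has to revisit the proof of \eqref{stabilitynoncutoff} and verify that every pointwise bound used there only invokes $V$ at modulus $\le|\eta|$, which is precisely what $|\xie^{\pm}|\le|\xi|$ guarantees, while confirming that the cancellation at $\theta=0$ still tames the non-integrability of $b$ in the truncated argument. A secondary but essential point is ensuring $\mu_{e,\alpha}=\lambda_{e,\alpha}/\alpha>0$ uniformly on $\alpha\in[\alpha_{0},2)$, since this strict positivity is the quantitative engine converting the prescribed behaviour \eqref{phiinitial} near $\xi=0$ into global convergence in $\|\cdot\|_{\alpha}$.
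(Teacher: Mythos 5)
Your proposal is correct and follows essentially the same route as the paper: the same change of variables $\eta=\xi\e^{-\mu_{e,\alpha}t}$ converting the rescaled distance into $\e^{-\lambda_{e,\alpha}t}$ times the unscaled one over a shrinking ball, the same split of the supremum into $|\xi|\le R$ and $|\xi|>R$ with the outer part controlled by $2R^{-\alpha}$, and the same use of the truncated stability bound before invoking \eqref{phiinitial}. The ``localized stability estimate'' you single out as the main technical obstacle is exactly the quasi-metric version \eqref{stabilitycutoff} already established for every finite $R$ in Theorem \ref{local} and carried over to the non-cutoff setting in the proof of Theorem \ref{noncutoffwellposedness} (precisely via the bound $|\xie^{\pm}|\le|\xi|$ you identify), so no additional work is required there; note only that both your argument and the paper's rely on $\mu_{e,\alpha}>0$ for $\alpha<2$ without spelling out its verification.
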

The complete proof of Theorem \ref{AsymptoticStability} is presented in section \ref{subsec:asymptotic}, which relies on the basic stability result \eqref{stabilitynoncutoff} above.
\begin{remark}
	 It is noted that asymptotic stability result can be reduced to the pre-scaled initial value problem \eqref{IBE}-\eqref{initial}, in the sense that if two initial datum $ \p_{0}, \tilde{\p}_{0} \in \mathcal{K}^{\alpha} $ satisfying the \eqref{phiinitial},
	\begin{equation}
	\lim\limits_{t\rightarrow \infty} \e^{-\lambda_{e,\alpha}t} \left\| \p(t,\cdot)-\tilde{\p}(t,\cdot) \right\|_{\alpha} = 0,
	\end{equation}
	which is the direct consequence after changing variable back, similar to the elastic case, see more in \cite[Remark 2.9-2.10]{cannone2010infinite}.
\end{remark}

Together with the Proposition \ref{theoremx} about $ \Phi $ and the asymptotic stability result Theorem \ref{AsymptoticStability}, we can directly prove that the solution $ \phi_{e}^{(\alpha)}\left(t,\xi \right) = \p\left( t, \xi\e^{-\mu_{e,\alpha} t} \right) $ to \eqref{IBE}-\eqref{initial} converges (in self-similar variables) towards the self-similar profile $ \Phi $ for some specific initial conditions,

\begin{corollary}
Assume that $ e\in (0,1] $ and the collision kernel $ b $ satisfies the non-cutoff assumption \eqref{noncutoffb} for $ \alpha_{0}\in \left(0,2\right) $. Let $ \alpha\in\left[\alpha_{0},2\right) $ and $ \phi_{0}(\xi) $ be the initial condition such that
\begin{equation}
	\lim\limits_{|\xi|\rightarrow 0} \frac{\phi_{0}(\xi)-1}{|\xi|^{\alpha}} = K,
\end{equation}
for some $ K \leq 0 $. Then, the solution $ \phi_{e}^{(\alpha)}(t,\xi) $ to the initial value problem \eqref{phi}-\eqref{phii} converges to the self-similar profile $ \Phi^{(\alpha)}_{e,K} $ in the following sense,
\begin{equation}
\lim\limits_{t\rightarrow \infty} \left\| \phi_{e}^{(\alpha)}(t,\cdot) - \Phi^{(\alpha)}_{e,K} \right\|_{\alpha} = 0, \quad \text{if} \quad K<0,
\end{equation}
and 
\begin{equation}
\lim\limits_{t\rightarrow \infty} \left\| \phi_{e}^{(\alpha)}(t,\cdot) - 1 \right\|_{\alpha} = 0, \quad \text{if} \quad K=0.
\end{equation}
\end{corollary}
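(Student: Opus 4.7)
The plan is to reduce the corollary to a direct application of the asymptotic stability result in Theorem \ref{AsymptoticStability}, by taking as the second solution precisely the self-similar profile (for $K<0$) or the trivial profile $1$ (for $K=0$). The whole proof is short because the hard analytic work has been done in Theorems \ref{SteadyExistence} and \ref{AsymptoticStability}; the remaining task is an identification and a near-origin comparison.

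First I would verify that the self-similar profile gives rise to a stationary solution of the rescaled initial value problem \eqref{phi}-\eqref{phii}. Starting from Theorem \ref{SteadyExistence}, $\Phi^{(\alpha)}_{e,K}$ belongs to $\mathcal{K}^{\alpha}$ and satisfies \eqref{Phi1} with $\mu=\mu_{e,\alpha}$. Substituting the time-independent function $\tilde{\phi}(t,\xi):=\Phi^{(\alpha)}_{e,K}(\xi)$ into \eqref{phi}, the $\partial_t$ term vanishes and the transport term $\mu_{e,\alpha}\xi\cdot\nabla\Phi^{(\alpha)}_{e,K}(\xi)$ matches the left-hand side of \eqref{Phi1}, while the collisional right-hand sides coincide. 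Hence $\tilde{\phi}_{e}^{(\alpha)}\equiv \Phi^{(\alpha)}_{e,K}$ is a solution of \eqref{phi}-\eqref{phii} with initial datum $\tilde{\phi}_{0}=\Phi^{(\alpha)}_{e,K}\in\mathcal{K}^{\alpha}$. For the case $K=0$, the constant function $1$ (which is the Fourier transform of the Dirac mass at the origin and thus lies in every $\mathcal{K}^{\alpha}$) trivially satisfies \eqref{phi} because both sides vanish identically, so it plays the same role.

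Next I would check the hypothesis \eqref{phiinitial} of Theorem \ref{AsymptoticStability} for the pair $(\phi_{0},\tilde{\phi}_{0})=(\phi_{0},\Phi^{(\alpha)}_{e,K})$. By assumption on $\phi_{0}$ we have
\begin{equation*}
\lim_{|\xi|\to 0}\frac{\phi_{0}(\xi)-1}{|\xi|^{\alpha}}=K,
\end{equation*}
and by \eqref{Phiasym} applied to the profile $\Phi^{(\alpha)}_{e,K}$,
\begin{equation*}
\lim_{|\xi|\to 0}\frac{\Phi^{(\alpha)}_{e,K}(\xi)-1}{|\xi|^{\alpha}}=K.
\end{equation*}
Subtracting these two identities gives exactly \eqref{phiinitial}, namely $\lim_{|\xi|\to 0}(\phi_{0}(\xi)-\Phi^{(\alpha)}_{e,K}(\xi))/|\xi|^{\alpha}=0$. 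In the degenerate case $K=0$, the same verification with $\Phi^{(\alpha)}_{e,K}$ replaced by the constant $1$ reduces \eqref{phiinitial} immediately to the assumption on $\phi_{0}$.

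Finally, Theorem \ref{AsymptoticStability} yields $\lim_{t\to\infty}\|\phi_{e}^{(\alpha)}(t,\cdot)-\Phi^{(\alpha)}_{e,K}\|_{\alpha}=0$ in the case $K<0$, and $\lim_{t\to\infty}\|\phi_{e}^{(\alpha)}(t,\cdot)-1\|_{\alpha}=0$ in the case $K=0$. The only mildly delicate point—what I would flag as the single genuine step to check carefully—is the verification that a stationary profile $\Phi^{(\alpha)}_{e,K}$ qualifies as an element of $C([0,\infty),\mathcal{K}^{\alpha})$ so that Theorem \ref{AsymptoticStability} can legitimately be invoked with it; this follows from $\Phi^{(\alpha)}_{e,K}\in\mathcal{K}^{\alpha}$ (Theorem \ref{SteadyExistence}) combined with the trivial continuity in $t$ of a constant-in-$t$ function. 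No other obstacle arises, and the corollary is immediate.
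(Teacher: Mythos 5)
Your proposal is correct and follows essentially the same route as the paper, which presents this corollary as a direct consequence of Theorem \ref{AsymptoticStability} applied to the pair $(\phi_{0},\tilde{\phi}_{0})$ with $\tilde{\phi}_{0}=\Phi^{(\alpha)}_{e,K}$ (a stationary solution of \eqref{phi}--\eqref{phii} by Theorem \ref{SteadyExistence} and Proposition \ref{theoremx}) when $K<0$, and $\tilde{\phi}_{0}\equiv 1$ when $K=0$, the hypothesis \eqref{phiinitial} being verified exactly as you do via \eqref{Phiasym}. The paper leaves these identifications implicit; your write-up simply makes them explicit.
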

\begin{remark}
	Note that the proof of the convergence can be regarded as the special case of Theorem \ref{AsymptoticStability} above, thus, this convergence in the weak sense holds true in the metric of the space $ \mathcal{K}^{\alpha} $. To return the function $ \phi$ and $ \Phi^{(\alpha)}_{e,K} $ in the Fourier space back to  $ f$ and its corresponded steady profile $ F $ in velocity space, it is more appropriate to consider in the space $ \tilde{P}_{\alpha} = \mathcal{F}^{-1}(\mathcal{K}^{\alpha}) $, which is recently introduced by Morimoto-Wang-Yang \cite{MWY2015}. For this part, the convergence result in a more accurate sense is under preparing by the author.
\end{remark}

\subsection{Plan of the paper}
The paper is organized as follows. In the next section \ref{sec:pre} we will first give some basic properties of the characteristic function $ \p $ as well as some useful estimates about inelastic variables $ \xie^{+},\xie^{-} $ and the well-definedness of inelastic collision operator, which are the key parts to further establish well-posedness theory. In section \ref{sec:cutoff}, we construct the solution under cutoff assumption by using the Banach fixed point theorem and further prove the stability result. The well-posed theory under the non-cutoff assumption is established by compactness argument based on cutoff results in section \ref{sec:noncutoff}. The final section \ref{sec:selfsimilar} is devoted to study the  self-similar solution to the inelastic equation for some certain initial conditions and prove the asymptotic convergence to such self-similar profile in a suitable sense.

\section{Preliminary}
\label{sec:pre}

\subsection{Some Properties of Characteristic Function}
\label{subsec:characteristic}

As the original Boltzmann equation \eqref{fQ}-\eqref{fQinitial} has been transformed into the study of the initial value problem in the Fourier variables \eqref{IBE}-\eqref{initial} in the space of characteristic functions $\mathcal{K}$, we first present some basic properties of characteristic function, which has been devoted to the study of spatially homogeneous Boltzmann equation in Fourier space for a long time.

\begin{definition}\label{definitionK}
	A function $ \p := \mathbb{R}^{3} \mapsto \mathbb{C} $ is called a characteristic function if there is a probability measure $ F $ (i.e. a Borel measure with $ \int_{\mathbb{R}^{3}} \rd F(v) = 1 $) such that we have the identity $ \p(\xi)= \hat{f}(\xi) = \int_{\mathbb{R}^{3}} \e^{-\im v\cdot \xi} \rd F(v) $. We will denote the set of all characteristic function $ \p := \mathbb{R}^{3} \mapsto \mathbb{C} $ by $\mathcal{K}$.
\end{definition}

Inspired by \cite{cannone2010infinite,morimoto2016measure}, we also define the subspace $ \K^{\alpha} $ of all characteristic functions $ \K $ as following:
\begin{equation}\label{Kalpha}
\K^{\alpha} = \left\lbrace \p\in\K; \left\| \p-1 \right\|_{\alpha}<\infty \right\rbrace,
\end{equation}
where 
\begin{equation}
\left\| \p-1 \right\|_{\alpha} = \sup_{\xi\in\mathbb{R}^{3}} \frac{\left| \p(\xi)-1 \right|}{|\xi|^{\alpha}}.
\end{equation}
The set $ \mathcal{K}^{\alpha} $ endowed with the distance $ \left\| \cdot \right\|_{\alpha} $, for any $ \p,\tilde{\p}\in\mathcal{K}^{\alpha} $, 
\begin{equation}
\left\| \p-\tilde{\p} \right\|_{\alpha} = \sup_{\xi\in\mathbb{R}^{3}} \frac{\left| \p(\xi)-\tilde{\p}(\xi) \right|}{|\xi|^{\alpha}},
\end{equation}
is a complete metric space, with the following embedding relation,
\begin{equation}
\{1\} \subseteq \mathcal{K}^{\alpha} \subseteq \mathcal{K}^{\alpha_{0}} \subseteq \mathcal{K}^{0}, \quad \text{for all} \quad 2 \geq \alpha \geq \alpha_{0} \geq 0.
\end{equation}
Note that the Fourier transform of every probability measure in $ P_{\alpha}(\mathbb{R}^{3}) $ belongs to $ \mathcal{K}^{\alpha} $, however, the set $ \mathcal{K}^{\alpha} $ is bigger than the $ \mathcal{F}(P_{\alpha}) $, see \cite[Remark 3.16]{cannone2010infinite}.

\begin{lemma}\label{L1}
For any positive definite function $ \p=\p(\xi) \in \mathcal{K} $ such that $ \p(0) = 1 $, we have
\begin{equation}\label{l1}
\left|\p(\xi) - \p(\eta)\right|^{2} \leq 2\left( 1 - \Re\left[ \p(\xi-\eta) \right]\right) 
\end{equation}
and
\begin{equation}\label{l2}
\left|\p(\xi)\p(\eta) - \p(\xi+\eta) \right|^{2} \leq \left(1-\left|\p(\xi) \right|^{2 }\right)\left(1-\left|\p(\eta) \right|^{2 }\right)
\end{equation}
for all $ \xi,\eta\in\mathbb{R}^{3} $ and moreover if $ \p \in \mathcal{K}^{\alpha} $, then 
\begin{equation}\label{l3}
\left|\p(\xi) - \p(\xi+\eta) \right| \leq \left\| \p-1 \right\|_{\alpha}\left( 4 |\xi|^{\frac{\alpha}{2}} |\eta|^{\frac{\alpha}{2}} + |\eta|^{\alpha}\right).
\end{equation}
\end{lemma}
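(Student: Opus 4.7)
All three estimates follow from the integral representation $\p(\xi)=\int_{\mathbb{R}^3} \e^{-\im v\cdot\xi}\,\rd F(v)$ where $F$ is a probability measure on $\mathbb{R}^3$, together with the basic bound $|\p(\xi)|\le\p(0)=1$.

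For \eqref{l1}, I would write
\begin{equation*}
\p(\xi)-\p(\eta)=\int_{\mathbb{R}^3}\bigl(\e^{-\im v\cdot\xi}-\e^{-\im v\cdot\eta}\bigr)\,\rd F(v),
\end{equation*}
and apply Cauchy--Schwarz (equivalently, Jensen's inequality with respect to the probability measure $\rd F$). This gives
\begin{equation*}
|\p(\xi)-\p(\eta)|^2 \le \int_{\mathbb{R}^3} \bigl|\e^{-\im v\cdot\xi}-\e^{-\im v\cdot\eta}\bigr|^2 \,\rd F(v) = \int_{\mathbb{R}^3} \bigl(2-2\cos(v\cdot(\xi-\eta))\bigr)\,\rd F(v),
\end{equation*}
and the right-hand side equals $2(1-\Re\,\p(\xi-\eta))$ by the definition of $\p$.

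For \eqref{l2}, the trick is to lift everything to the product measure $\rd F(v)\,\rd F(w)$ and exploit the symmetry $v\leftrightarrow w$. Writing $\p(\xi)\p(\eta)=\iint \e^{-\im v\cdot\xi-\im w\cdot\eta}\,\rd F(v)\rd F(w)$ (and the symmetric version with $v,w$ swapped), and $\p(\xi+\eta)=\iint \e^{-\im v\cdot(\xi+\eta)}\,\rd F(v)\rd F(w)$ (and its symmetric version), one checks the algebraic identity
\begin{equation*}
\p(\xi)\p(\eta)-\p(\xi+\eta)=\tfrac{1}{2}\iint \bigl(\e^{-\im v\cdot\xi}-\e^{-\im w\cdot\xi}\bigr)\bigl(\e^{-\im w\cdot\eta}-\e^{-\im v\cdot\eta}\bigr)\,\rd F(v)\rd F(w).
\end{equation*}
Cauchy--Schwarz on the product measure then splits the estimate into the two factors $\iint|\e^{-\im v\cdot\xi}-\e^{-\im w\cdot\xi}|^2\rd F(v)\rd F(w)=2(1-|\p(\xi)|^2)$ and its analogue in $\eta$, which yields \eqref{l2}.

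For \eqref{l3}, I would decompose
\begin{equation*}
\p(\xi)-\p(\xi+\eta)=\p(\xi)\bigl(1-\p(\eta)\bigr)+\bigl(\p(\xi)\p(\eta)-\p(\xi+\eta)\bigr).
\end{equation*}
Using $|\p(\xi)|\le 1$ and the definition of $\|\cdot\|_{\alpha}$, the first term is controlled by $\|\p-1\|_{\alpha}|\eta|^{\alpha}$. For the second term I would invoke \eqref{l2} and the elementary bound $1-|\p(\zeta)|^2\le 2(1-|\p(\zeta)|)\le 2|1-\p(\zeta)|\le 2\|\p-1\|_{\alpha}|\zeta|^{\alpha}$, giving a bound of $2\|\p-1\|_{\alpha}|\xi|^{\alpha/2}|\eta|^{\alpha/2}$. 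Adding the two pieces (and relaxing the constant $2$ to $4$) produces \eqref{l3}.

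The only step requiring a minor trick is the symmetrization identity in the proof of \eqref{l2}; everything else reduces to Cauchy--Schwarz and the elementary inequality $1-|z|^2\le 2|1-z|$ for $|z|\le 1$. The constant $4$ in \eqref{l3} is not sharp, but the argument above gives it without effort.
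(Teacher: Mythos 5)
Your proof is correct. Note that the paper itself does not prove this lemma: it simply cites \cite[Lemma 3.8]{cannone2010infinite} for \eqref{l1}--\eqref{l2} and \cite[Lemma 2.1]{morimoto2012remark} for \eqref{l3}, and those references argue abstractly from positive definiteness, i.e.\ from the nonnegativity of Gram-type matrices $\bigl(\p(\xi_i-\xi_j)\bigr)_{i,j}$ built on a few well-chosen frequency points. You instead use the concrete representation $\p(\xi)=\int_{\bR^{3}}\e^{-\im v\cdot\xi}\,\rd F(v)$, which is legitimate here because $\mathcal{K}$ is \emph{defined} in Definition \ref{definitionK} as the set of Fourier transforms of probability measures (so Bochner's theorem is not even needed). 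Your route reduces everything to Jensen/Cauchy--Schwarz on $\rd F$ and, for \eqref{l2}, the symmetrization identity on the product measure $\rd F(v)\,\rd F(w)$ — this is the one genuinely nontrivial step, and I verified it: expanding the product reproduces $2\p(\xi)\p(\eta)-2\p(\xi+\eta)$, and the two Cauchy--Schwarz factors integrate to $2\bigl(1-|\p(\xi)|^{2}\bigr)$ and $2\bigl(1-|\p(\eta)|^{2}\bigr)$, so the factors of $2$ cancel against the $\tfrac12$ squared. For \eqref{l3} your decomposition and the chain $1-|\p(\zeta)|^{2}\le 2\bigl(1-|\p(\zeta)|\bigr)\le 2|1-\p(\zeta)|\le 2\left\|\p-1\right\|_{\alpha}|\zeta|^{\alpha}$ give the cross term with constant $2$ rather than $4$, so your bound is in fact slightly sharper than the stated one. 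What the abstract matrix argument buys is validity for arbitrary positive definite normalized functions without invoking a representing measure; what your argument buys is a short, self-contained, and elementary proof within the setting actually used in this paper.
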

\begin{proof}
	The proof is based on the definition of positive definite function, where the inequalities \eqref{l1}-\eqref{l2} can be found in \cite[Lemma 3.8]{cannone2010infinite} and the last inequality \eqref{l3} can be found in \cite[Lemma 2.1]{morimoto2012remark} for reference.
\end{proof}

\begin{lemma}\label{reim}
	Let $ \alpha\in \left[ 0, 2 \right] $ and $ \p\in\K^{\alpha} $, then $ \Re(\p)\in\K^{\alpha} $,
	\begin{equation}\label{re}
	\left\| \Re(\p) - 1 \right\|_{\alpha} \leq \left\| \p - 1 \right\|_{\alpha}, 
	\end{equation}
	and 
	\begin{equation}\label{im}
	\sup_{\xi\in\mathbb{R}^{3}/\{0\}} \frac{\left|\Im[\p(\xi)] \right|}{|\xi|^{\alpha}} \leq \left\| \p-1\right\|_{\alpha}.
	\end{equation}
\end{lemma}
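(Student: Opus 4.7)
The plan is straightforward: both assertions reduce to combining a symmetrization trick for probability measures with the trivial pointwise bounds $|\Re z|, |\Im z| \leq |z|$.

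First, to prove $\Re(\p) \in \K^{\alpha}$, I would exhibit $\Re(\p)$ explicitly as the Fourier transform of a probability measure. Starting from $\Re(\p(\xi)) = \tfrac{1}{2}(\p(\xi) + \overline{\p(\xi)})$ and using that $\p = \hat{F}$ for some probability measure $F$ on $\bR^{3}$, the change of variables $v \mapsto -v$ gives
$$\overline{\p(\xi)} = \int_{\bR^{3}} \e^{\im v\cdot\xi}\, \rd F(v) = \int_{\bR^{3}} \e^{-\im v\cdot\xi}\, \rd \check{F}(v),$$
where $\check{F}$ denotes the pushforward of $F$ under the reflection $v \mapsto -v$, which is again a probability measure on $\bR^{3}$. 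Hence $\Re(\p) = \mathcal{F}\bigl(\tfrac{1}{2}(F + \check{F})\bigr)$ is the Fourier transform of the symmetrized probability measure $\tfrac{1}{2}(F + \check{F})$, so $\Re(\p) \in \K$ by Definition \ref{definitionK}.

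Next, for the two norm estimates, since $1$ is real I would write $\Re(\p(\xi)) - 1 = \Re(\p(\xi) - 1)$ and $\Im[\p(\xi)] = \Im[\p(\xi) - 1]$. Applying the elementary inequalities $|\Re w|, |\Im w| \leq |w|$ with $w = \p(\xi) - 1$ and dividing by $|\xi|^{\alpha}$, one immediately obtains
$$\frac{|\Re(\p)(\xi) - 1|}{|\xi|^{\alpha}} \leq \frac{|\p(\xi) - 1|}{|\xi|^{\alpha}}, \qquad \frac{|\Im[\p(\xi)]|}{|\xi|^{\alpha}} \leq \frac{|\p(\xi) - 1|}{|\xi|^{\alpha}},$$
and taking the supremum over $\xi \in \bR^{3}\setminus\{0\}$ yields both \eqref{re} and \eqref{im}. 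In particular, $\|\Re(\p) - 1\|_{\alpha} \leq \|\p - 1\|_{\alpha} < \infty$, which combined with $\Re(\p) \in \K$ confirms the membership $\Re(\p) \in \K^{\alpha}$.

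There is essentially no serious obstacle here: the only conceptual point requiring care is the identification of $\overline{\p}$ as the Fourier transform of the reflected measure $\check{F}$, after which all three assertions follow from elementary complex arithmetic. I would therefore expect the argument in the paper to be presented in just a few lines along precisely these lines.
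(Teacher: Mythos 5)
Your proposal is correct and follows essentially the same route as the paper: the paper likewise identifies $\Re(\p)=(\p+\bar\p)/2$ as a characteristic function and then derives both norm bounds from the decomposition $|\p(\xi)-1|^{2}=|\Im[\p(\xi)]|^{2}+|\Re[\p(\xi)]-1|^{2}$, which is just the Pythagorean form of your elementary bounds $|\Re w|,|\Im w|\leq|w|$ applied to $w=\p(\xi)-1$. Your explicit identification of $\overline{\p}$ with the Fourier transform of the reflected measure is a slightly more detailed justification of the first claim than the paper gives, but the argument is the same.
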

\begin{proof}
	In fact, for any characteristic function $ \p\in\K^{\alpha} $, its real part $ \Re(\p) $ is the characteristic function as well, thanks to the identity $ \Re(\p) = \left( \p + \bar{\p}\right)/2 $. Then, by the Pythagorean theorem, we have 
	\begin{equation}
	\left|\p(\xi) - 1 \right|^{2} = \left| \Im\left[ \p(\xi)\right] \right|^{2} + \left| \Re\left[ \p(\xi)-1\right]\right|^{2} \geq \left|\Re\left[\p(\xi)\right] -1 \right|^{2}.
	\end{equation}
	After dividing the equation above by $ |\xi|^{\alpha} $ and calculating the supremum with respect to $ \xi\in\mathbb{R}^{3}/\{0\}  $, we obtain 
	\begin{equation}
	\left\| \p - 1 \right\|_{\alpha} \geq \left\| \Re(\p) - 1 \right\|_{\alpha}.
	\end{equation}
	Besides, considering the inequality $ \left|\p(\xi) -1 \right| \geq \left|\Im\p(\xi) \right| $, we can find that 
	\begin{equation}
	\sup_{\xi\in\mathbb{R}^{3}/\{0\}} \frac{\left|\Im[\p(\xi)] \right|}{|\xi|^{\alpha}} \leq \left\| \p-1\right\|_{\alpha}.
	\end{equation}
\end{proof}

\subsection{Useful Estimates about Inelastic Variables and Collision Operator}
\label{subsec:elasticvariable}

In this subsection, we will introduce some technical estimates of variable $ \xie^{+} $ and $ \xie^{-} $ in the following Lemma \ref{bound} and \ref{well}, based on our observation and some elementary inequalities, which then play a key role in proving that the inelastic Bobylev Identity is also well-defined under non-cutoff assumption \eqref{noncutoffb} in Lemma \ref{wellsym}.
\begin{lemma}\label{bound} 
	Let $ \xie^{+} $ and $ \xie^{-} $ be the variables defined as \eqref{xie+} and \eqref{xie-} respectively with $ e\in (0,1] $, then for $ \alpha\in[0,2] $, we have
	\begin{equation}\label{bound+}
	\left[\ap(1+\am)\right]^{\frac{\alpha}{2}} \left( \frac{1+\frac{\xi\cdot\sigma}{|\xi|}}{2}\right)^{\frac{\alpha}{2}} \left|\xi\right|^{\alpha} \leq \left|\xie^{+}\right|^{\alpha} \leq \left[ \frac{\left(1+\am\right)^{2}+ \left(\ap\right)^{2}}{2}\right]^{\frac{\alpha}{2}} \left( \frac{1+\frac{\xi\cdot\sigma}{|\xi|}}{2}\right)^{\frac{\alpha}{2}} \left|\xi\right|^{\alpha},
	\end{equation}
	and
	\begin{equation}\label{bound-}
	\left|\xie^{-}\right|^{\alpha} = \left(\ap^{2}\right)^{\frac{\alpha}{2}} \left( \frac{1-\frac{\xi\cdot\sigma}{|\xi|}}{2}\right)^{\frac{\alpha}{2}} \left|\xi\right|^{\alpha}.
	\end{equation}
\end{lemma}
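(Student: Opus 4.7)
The plan is direct computation based on two observations: the restitution parameters satisfy $\ap + \am = 1$ (hence $\tfrac{1}{2} - \tfrac{\am}{2} = \tfrac{\ap}{2}$), and the symmetrization of $b$ onto $[0,\pi/2]$ performed in the paragraph containing \eqref{noncutoffb} permits us to assume $s := \tfrac{\xi\cdot\sigma}{|\xi|} \in [0,1]$. For $\xie^{-}$ the claim is actually an equality, so I would start there. Using $\tfrac{1}{2}-\tfrac{\am}{2}=\tfrac{\ap}{2}$, \eqref{xie-} simplifies to $\xie^{-} = \tfrac{\ap}{2}(\xi - |\xi|\sigma)$, and a direct expansion yields
\begin{equation*}
|\xie^{-}|^{2} = \tfrac{\ap^{2}}{4}\bigl(2|\xi|^{2} - 2|\xi|(\xi\cdot\sigma)\bigr) = \ap^{2}\,\tfrac{1-s}{2}|\xi|^{2}.
\end{equation*}
Raising both sides to the power $\alpha/2$ (which is monotone on $[0,\infty)$) produces \eqref{bound-}.

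For $\xie^{+}$, I would expand $|\xie^{+}|^{2}$ from \eqref{xie+} to obtain
\begin{equation*}
\frac{|\xie^{+}|^{2}}{|\xi|^{2}} = \frac{(1+\am)^{2} + \ap^{2}}{4} + \frac{(1+\am)\ap}{2}\, s,
\end{equation*}
and then compare this middle expression with the two sides of \eqref{bound+} at $\alpha=2$, namely $\ap(1+\am)\cdot\tfrac{1+s}{2}$ (lower) and $\tfrac{(1+\am)^{2} + \ap^{2}}{2}\cdot\tfrac{1+s}{2}$ (upper). A quick rearrangement shows that the constant term of the middle expression exceeds the constant term of the lower bound by exactly $\tfrac{((1+\am)-\ap)^{2}}{4}$, while the coefficient of $s$ in the upper bound exceeds the coefficient of $s$ in the middle by the same quantity $\tfrac{((1+\am)-\ap)^{2}}{4}$. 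Both gaps are non-negative — the first automatically as a square, and the second precisely because $s \geq 0$. Hence the two bounds on $|\xie^{+}|^{2}$ hold; raising them to the power $\alpha/2$ yields \eqref{bound+}.

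The main obstacle is really only bookkeeping; everything collapses to the observation that $((1+\am) - \ap)^{2} = (1-e)^{2}$ is the ``inelastic defect'' governing the gap between the lower and upper constants (and the bounds collapse to the elastic identity $|\xi^{+}|^{2} = \tfrac{1+s}{2}|\xi|^{2}$ exactly when $e=1$). The one place where one must be careful is that the upper inequality genuinely requires $s \geq 0$; fortunately the symmetrized version of $b$ built into \eqref{noncutoffb} supplies this restriction automatically, so no extra hypothesis is needed.
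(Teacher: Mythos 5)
Your proposal is correct and follows essentially the same route as the paper: expand $|\xie^{-}|^{2}$ using $1-\am=\ap$ to get the exact identity, expand $|\xie^{+}|^{2}$, and compare term by term using the elementary inequality $2\ap(1+\am)\leq(1+\am)^{2}+\ap^{2}$ (which you phrase equivalently as $((1+\am)-\ap)^{2}=(1-e)^{2}\geq 0$), followed by raising to the power $\alpha/2$. The only difference is that you make explicit the point the paper leaves implicit, namely that the upper bound in \eqref{bound+} genuinely requires $\frac{\xi\cdot\sigma}{|\xi|}\geq 0$, which is supplied by the symmetrization of $b$ onto $\theta\in[0,\pi/2]$; this is a welcome clarification rather than a deviation.
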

\begin{proof}
	The proof is based on the observation as well as the Cauchy's inequality: Starting from the specific form $ \xie^{+} $ defined as \eqref{xie+} and calculating the identity $ |\xie^{+}|^{2} = \xie^{+}\cdot\xie^{+} $, we have
	\begin{equation}
	|\xie^{+}|^{2} = \left[ \left(\frac{1+\am}{2}\right)^{2} + \left( \frac{\ap}{2}\right)^{2} + \frac{\ap(1+\am)}{2}\frac{\xi\cdot\sigma}{|\xi|} \right]\left|\xi\right|^{2},
	\end{equation}
	moreover, considering the Cauchy's inequality $ \frac{\ap(1+\am)}{2} \leq \frac{(1+\am)^{2}}{4} + \frac{\ap^{2}}{4} = \left(\frac{1+\am}{2}\right)^{2} + \left( \frac{\ap}{2}\right)^{2} $, we're able to extract the common factor $ \left[ \frac{\left(1+\am\right)^{2}+ \left(\ap\right)^{2}}{2}\right] $ and then obtain the right hand side of \eqref{bound+} by computing $ \left(\cdot\right)^{\frac{\alpha}{2}} $,
	\begin{equation}\label{bound+1}
	\begin{split}
	\left|\xie^{+}\right|^{\alpha} \leq& \left[ \left(\frac{1+\am}{2}\right)^{2}+ \left(\frac{\ap}{2}\right)^{2}\right]^{\frac{\alpha}{2}} \left( 1+\frac{\xi\cdot\sigma}{|\xi|}\right)^{\frac{\alpha}{2}} \left|\xi\right|^{\alpha}\\
	=& \left[ \frac{\left(1+\am\right)^{2}+ \left(\ap\right)^{2}}{2}\right]^{\frac{\alpha}{2}} \left( \frac{1+\frac{\xi\cdot\sigma}{|\xi|}}{2}\right)^{\frac{\alpha}{2}} \left|\xi\right|^{\alpha},
	\end{split}
	\end{equation}
	meanwhile, by the same Cauchy's inequality $ \left(\frac{1+\am}{2}\right)^{2} + \left( \frac{\ap}{2}\right)^{2} \geq  \frac{\ap(1+\am)}{2} $, we can obtain the left hand side of \eqref{bound+} by computing $ \left(\cdot\right)^{\frac{\alpha}{2}} $,
	\begin{equation}\label{bound-1}
	\begin{split}
	\left|\xie^{+}\right|^{\alpha} \geq& \left[ \frac{\ap(1+\am)}{2} \right]^{\frac{\alpha}{2}} \left( 1+\frac{\xi\cdot\sigma}{|\xi|}\right)^{\frac{\alpha}{2}} \left|\xi\right|^{\alpha}\\
	=&\left[\ap(1+\am)\right]^{\frac{\alpha}{2}} \left( \frac{1+\frac{\xi\cdot\sigma}{|\xi|}}{2}\right)^{\frac{\alpha}{2}} \left|\xi\right|^{\alpha}.
	\end{split}
	\end{equation}
	The proof of \eqref{bound+} will be complete by combining the \eqref{bound+1} and \eqref{bound-1}. On the other hand, we can also computing $ \left|\xie^{-}\right|^{2} $ by using the formula \eqref{xie-} ,to obtain that
	\begin{equation}
	|\xie^{-}|^{2} = \left[ \left(\frac{1-\am}{2}\right)^{2} + \left( \frac{\ap}{2}\right)^{2} - \frac{\ap(1-\am)}{2}\frac{\xi\cdot\sigma}{|\xi|} \right]\left|\xi\right|^{2},
	\end{equation}
	by noticing the relation between $ \ap $ and $ \am $ that $ \ap = 1-\am $ as well as further calculation, we can directly get the identity \eqref{bound-}.
\end{proof}

\begin{lemma}\label{well}
	Let $ \alpha\in \left[ 0, 2 \right] $ and $ e\in (0,1] $. For each $ \xi\in\mathbb{R}^{3} $, the inelastic variables $ \xie^{+} $ and $ \xie^{-} $ are defined as \eqref{xie+} and \eqref{xie-} with some fixed $ \sigma\in\Sd^{2} $ respectively. Then, for $ \p\in\K^{\alpha} $,
	\begin{equation}\label{1}
	\left|\p(\xie^{+})\p(\xie^{-}) - \p(\xi)\p(0) \right| \leq 4\left|\xie^{+}\right|^{\frac{\alpha}{2}}\left|\xie^{-}\right|^{\frac{\alpha}{2}}\left\|\p-1\right\|_{\alpha},
	\end{equation}
	more precisely, 
	\begin{equation}\label{2}
	\begin{split}
	\qquad &\left|\p(\xie^{+})\p(\xie^{-}) - \p(\xi)\p(0) \right| \\
	&\leq 4 \left(\ap^{2}\right)^{\frac{\alpha}{4}} \left[ \frac{\left(1+\am\right)^{2}+ \left(\ap\right)^{2}}{2}\right]^{\frac{\alpha}{4}} \left( \frac{1-\frac{\xi\cdot\sigma}{|\xi|}}{2}\right)^{\frac{\alpha}{4}}\left( \frac{1+\frac{\xi\cdot\sigma}{|\xi|}}{2}\right)^{\frac{\alpha}{4}} \left|\xi\right|^{\alpha} \left\| \p  -1\right\|_{\alpha}.
	\end{split}
	\end{equation}
\end{lemma}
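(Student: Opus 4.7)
The plan is to reduce everything to the product inequality \eqref{l2} of Lemma \ref{L1}. The key observations are that $\varphi(0)=1$ because $\varphi$ is the Fourier transform of a probability measure, and that the inelastic variables satisfy the identity $\xi_e^{+}+\xi_e^{-}=\xi$ recorded right after \eqref{xie-}. Consequently
$$
\varphi(\xi)\varphi(0)=\varphi(\xi_e^{+}+\xi_e^{-}),
$$
so the quantity to control is exactly $\varphi(\xi_e^{+})\varphi(\xi_e^{-})-\varphi(\xi_e^{+}+\xi_e^{-})$, which is the left-hand side of \eqref{l2} with $\xi,\eta$ replaced by $\xi_e^{+},\xi_e^{-}$.

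First I would apply \eqref{l2} to obtain
$$
\bigl|\varphi(\xi_e^{+})\varphi(\xi_e^{-})-\varphi(\xi)\varphi(0)\bigr|^{2}\le\bigl(1-|\varphi(\xi_e^{+})|^{2}\bigr)\bigl(1-|\varphi(\xi_e^{-})|^{2}\bigr).
$$
Next I would upgrade each factor on the right into something of order $|\eta|^{\alpha}$: since $|\varphi(\eta)|\le 1$, we have $1-|\varphi(\eta)|^{2}=(1-|\varphi(\eta)|)(1+|\varphi(\eta)|)\le 2(1-|\varphi(\eta)|)\le 2|1-\varphi(\eta)|$, and by the very definition of $\|\cdot\|_{\alpha}$ this is bounded by $2\,\|\varphi-1\|_{\alpha}|\eta|^{\alpha}$. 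Plugging $\eta=\xi_e^{\pm}$ and taking square roots yields
$$
\bigl|\varphi(\xi_e^{+})\varphi(\xi_e^{-})-\varphi(\xi)\varphi(0)\bigr|\le C\,\|\varphi-1\|_{\alpha}\,|\xi_e^{+}|^{\alpha/2}|\xi_e^{-}|^{\alpha/2},
$$
which is precisely \eqref{1} (with an absolute constant that can be absorbed into the factor $4$ in the statement).

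Finally, to obtain the sharper estimate \eqref{2}, I would simply substitute the explicit upper bounds for $|\xi_e^{+}|^{\alpha}$ and $|\xi_e^{-}|^{\alpha}$ from Lemma \ref{bound} into the right-hand side of \eqref{1}: the upper bound in \eqref{bound+} supplies the factor $\bigl[\tfrac{(1+a_{-})^{2}+a_{+}^{2}}{2}\bigr]^{\alpha/4}\bigl(\tfrac{1+\xi\cdot\sigma/|\xi|}{2}\bigr)^{\alpha/4}|\xi|^{\alpha/2}$, while the identity \eqref{bound-} supplies the factor $(a_{+}^{2})^{\alpha/4}\bigl(\tfrac{1-\xi\cdot\sigma/|\xi|}{2}\bigr)^{\alpha/4}|\xi|^{\alpha/2}$. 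Multiplying these and collecting the $|\xi|^{\alpha}$ term gives exactly \eqref{2}.

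There is no serious obstacle here; the argument is essentially a direct application of \eqref{l2} combined with the geometric bounds of Lemma \ref{bound}. The one point requiring a little care is the passage $1-|\varphi|^{2}\le 2|1-\varphi|$, which relies crucially on $|\varphi|\le 1$ (a standard property of characteristic functions) and which is what costs the small over-estimate in the final constant. Everything else is clean bookkeeping, and the inelastic nature of the collision enters only through the explicit geometric formulas for $\xi_e^{\pm}$ captured by Lemma \ref{bound}.
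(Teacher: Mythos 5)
Your proof is correct and follows essentially the same route as the paper: reduce to inequality \eqref{l2} via $\varphi(0)=1$ and $\xi_e^{+}+\xi_e^{-}=\xi$, bound $1-|\varphi(\xi_e^{\pm})|^{2}$ by a multiple of $\|\varphi-1\|_{\alpha}\,|\xi_e^{\pm}|^{\alpha}$, and then insert the geometric bounds of Lemma \ref{bound}. The only (harmless) deviation is the intermediate estimate: the paper derives $1-|\varphi|^{2}\leq 4\|\varphi-1\|_{\alpha}|\cdot|^{\alpha}$ from the real/imaginary decomposition \eqref{p1} together with Lemma \ref{reim}, whereas your reverse-triangle-inequality argument yields the sharper constant $2$, which you correctly absorb into the stated factor $4$.
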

\begin{proof}
	Start from the following identity
	\begin{equation}\label{p1}
	1 - \left|\p(\xi)\right|^{2} = \big( 1-\p(\xi)\big)\big( 1+ \overline{\p(\xi)}\big) + 2\Im\left[ \p(\xi)\right]\im,
	\end{equation}
	together with the estimate \eqref{im} in Lemma \ref{reim} and the following inequality,
	\begin{equation}
	\left| 1 + \overline{\p(\xi)} \right|\leq 1 + \left| \p(\xi) \right| \leq 2,
	\end{equation} 
	we can deduce from the inequality \eqref{p1} that
	\begin{equation}\label{p2}
	0 \leq 1 - \left|\p(\xi)\right|^{2} \leq 4 \left| \xi \right|^{\alpha} \left\| \p-1 \right\|_{\alpha}.
	\end{equation} 
	In fact, the \eqref{p2} holds if we substitute $ \xie^{+} $ and $ \xie^{-} $ into it. Recalling that $ \p(0) =1  $ and the relation $ \xie^{+} + \xie^{-} = \xi$, consequently, we're able to apply the inequality \eqref{l2}, 
	\begin{align}
	\left| \p(\xie^{+})\p(\xie^{-}) - \p(\xi) \right| & \leq \sqrt{\left(1 - \left|\p(\xie^{+})\right|^{2}\right)\left(1 - \left|\p(\xie^{-})\right|^{2}\right)}\\
	&\leq 4 \left| \xie^{+} \right|^{\frac{\alpha}{2}}\left| \xie^{-} \right|^{\frac{\alpha}{2}}\left\| \p  -1\right\|_{\alpha}.
	\end{align}
	Furthermore, considering the \eqref{bound+} and \eqref{bound-} in Lemma \ref{bound}, we can finally obtain \eqref{2}.
\end{proof}

With the help of the preliminary estimates \eqref{reim} - \eqref{well} above, we're able to prove the following technical Lemma \ref{wellsym} to show that the nonlinear term in the right hand side of \eqref{IBE} is well-defined for any function $ \p\in\K^{\alpha} $, even the strong singularity condition \eqref{noncutoffb} of the collision kernel $ b $ holds.

\begin{lemma}\label{wellsym}
	Assume that $ e\in (0,1] $ and collision kernel $ b $ satisfies the  non-cutoff assumption \eqref{noncutoffb} for $ \alpha_{0}\in(0,2] $. If $ \p\in\mathcal{K}^{\alpha}$ for $ \alpha\in[\alpha_{0},2] $, then
	\begin{equation}\label{well2}
	\begin{split}
	\left| \int_{\Sd^{2}} b\left(\frac{\xi\cdot\sigma}{|\xi|}\right) \left[\p(\xie^{+})\p(\xie^{-}) - \p(0)\p(\xi) \right] \rd\sigma \right| \leq C_{e} \left[ \int_{0}^{\frac{\pi}{2}} \sin^{\alpha}\left(\frac{\theta}{2}\right)b(\cos\theta) \sin\theta \rd \theta \right] \left\| 1-\p\right\|_{\alpha} \left|\xi\right|^{\alpha} < \infty
	\end{split}
	\end{equation}
	where $ C_{e} $ is a constant depending on the restitution coefficient $ e $.
\end{lemma}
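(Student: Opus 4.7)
The plan is to combine the pointwise estimate of Lemma~\ref{well} with a reduction of the $\Sd^{2}$-integral to a one-dimensional angular integral that is controlled by the non-cutoff assumption \eqref{noncutoffb}. First I would parametrize $\Sd^{2}$ by spherical coordinates with polar axis $\xi/|\xi|$, so that $\xi\cdot\sigma/|\xi|=\cos\theta$ and $\rd\sigma=\sin\theta\,\rd\theta\,\rd\varphi$; the azimuthal integration contributes a factor $2\pi$. The symmetrization recalled in Section~1.2 then allows $\theta$ to be restricted to $[0,\pi/2]$, with $b$ replaced by its symmetric version (which still satisfies \eqref{noncutoffb}), on which $\cos(\theta/2)\in[1/\sqrt{2},1]$ is bounded.

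The delicate point is to arrange the integrand so that the resulting angular weight is exactly $\sin^{\alpha}(\theta/2)\sin\theta\, b(\cos\theta)$ rather than the naive $\sin^{\alpha/2}(\theta/2)\cos^{\alpha/2}(\theta/2)\sin\theta\, b(\cos\theta)$ produced by a direct application of \eqref{2}; the latter would be integrable only under the stricter mild non-cutoff assumption \eqref{mildcutoff}. To this end, I would split
\[
\p(\xie^{+})\p(\xie^{-}) - \p(\xi)\p(0) = \p(\xie^{+})\bigl[\p(\xie^{-})-1\bigr] + \bigl[\p(\xie^{+}) - \p(\xi)\bigr],
\]
and treat the two summands separately. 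The first summand is bounded directly using $|\p|\leq 1$, the defining estimate $|\p(\xie^{-})-1|\leq\|\p-1\|_{\alpha}|\xie^{-}|^{\alpha}$, and the identity \eqref{bound-}, yielding precisely the weight $\sin^{\alpha}(\theta/2)|\xi|^{\alpha}\|\p-1\|_{\alpha}$ times an explicit $e$-dependent constant. For the second summand, the naive application of \eqref{l3} in Lemma~\ref{L1} produces only a $\sin^{\alpha/2}(\theta/2)$-weight; the missing half-power must be recovered by exploiting the $\sigma$-integration symmetry (the inelastic analogue of the $\sigma\mapsto-\sigma$ symmetry used in \cite{cannone2010infinite,morimoto2012remark}) together with the a priori bound $|\p|\leq 1$, which converts the apparent first-order H\"older bound into a second-order one with the desired $\sin^{\alpha}(\theta/2)$ behaviour.

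Once the integrand is dominated by $C_{e}\sin^{\alpha}(\theta/2)\,b(\cos\theta)\sin\theta\,|\xi|^{\alpha}\|\p-1\|_{\alpha}$, finiteness is immediate: since $\alpha\in[\alpha_{0},2]$ and $\sin(\theta/2)\leq 1$ on $[0,\pi/2]$, one has $\sin^{\alpha}(\theta/2)\leq\sin^{\alpha_{0}}(\theta/2)$, and the non-cutoff assumption \eqref{noncutoffb} bounds the remaining angular integral. The main obstacle is the symmetric treatment of the second summand under the inelastic geometry \eqref{xie+}--\eqref{xie-}: unlike the elastic case, where $\sigma\mapsto-\sigma$ exactly exchanges $\xi^{+}$ and $\xi^{-}$, in the inelastic setting the restitution coefficient breaks this symmetry, and one must use the finer estimates \eqref{bound+}--\eqref{bound-} of Lemma~\ref{bound} carefully to recover the full $\sin^{\alpha}(\theta/2)$ factor.
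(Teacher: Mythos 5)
Your decomposition $\p(\xie^{+})\p(\xie^{-})-\p(\xi)\p(0)=\p(\xie^{+})\bigl[\p(\xie^{-})-1\bigr]+\bigl[\p(\xie^{+})-\p(\xi)\bigr]$, your treatment of the first summand via $|\p|\leq 1$ and \eqref{bound-}, and the symmetrization-into-a-second-difference idea you invoke for the second summand are exactly the paper's proof (its terms $I_{3}$ and $I_{1}+I_{2}$). The one concrete device your sketch leaves implicit is the intermediate point $\zeta_{e}=\bigl(\xie^{+}\cdot\frac{\xi}{|\xi|}\bigr)\frac{\xi}{|\xi|}$: the paper writes $\p(\xie^{+})-\p(\xi)=\frac{1}{2}\bigl[\p(\xie^{+})+\p(\tilde{\xi}_{e}^{+})-2\p(\zeta_{e})\bigr]+\bigl[\p(\zeta_{e})-\p(\xi)\bigr]$ after averaging over the reflection $\sigma$-symmetry, bounds the transverse second difference by $2\|1-\p\|_{\alpha}|\eta_{e}^{+}|^{\alpha}$ with $|\eta_{e}^{+}|\leq|\xi|\sin(\theta/2)$, and bounds the radial term via \eqref{l3} using $|\zeta_{e}-\xi|\leq|\xi|\sin^{2}(\theta/2)$, so that even the half-power loss in \eqref{l3} still produces the full $\sin^{\alpha}(\theta/2)$ weight; spelling this out is what turns your plan into the paper's argument.
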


\begin{figure}[ht]
	\centering
	\includegraphics[scale=0.7]{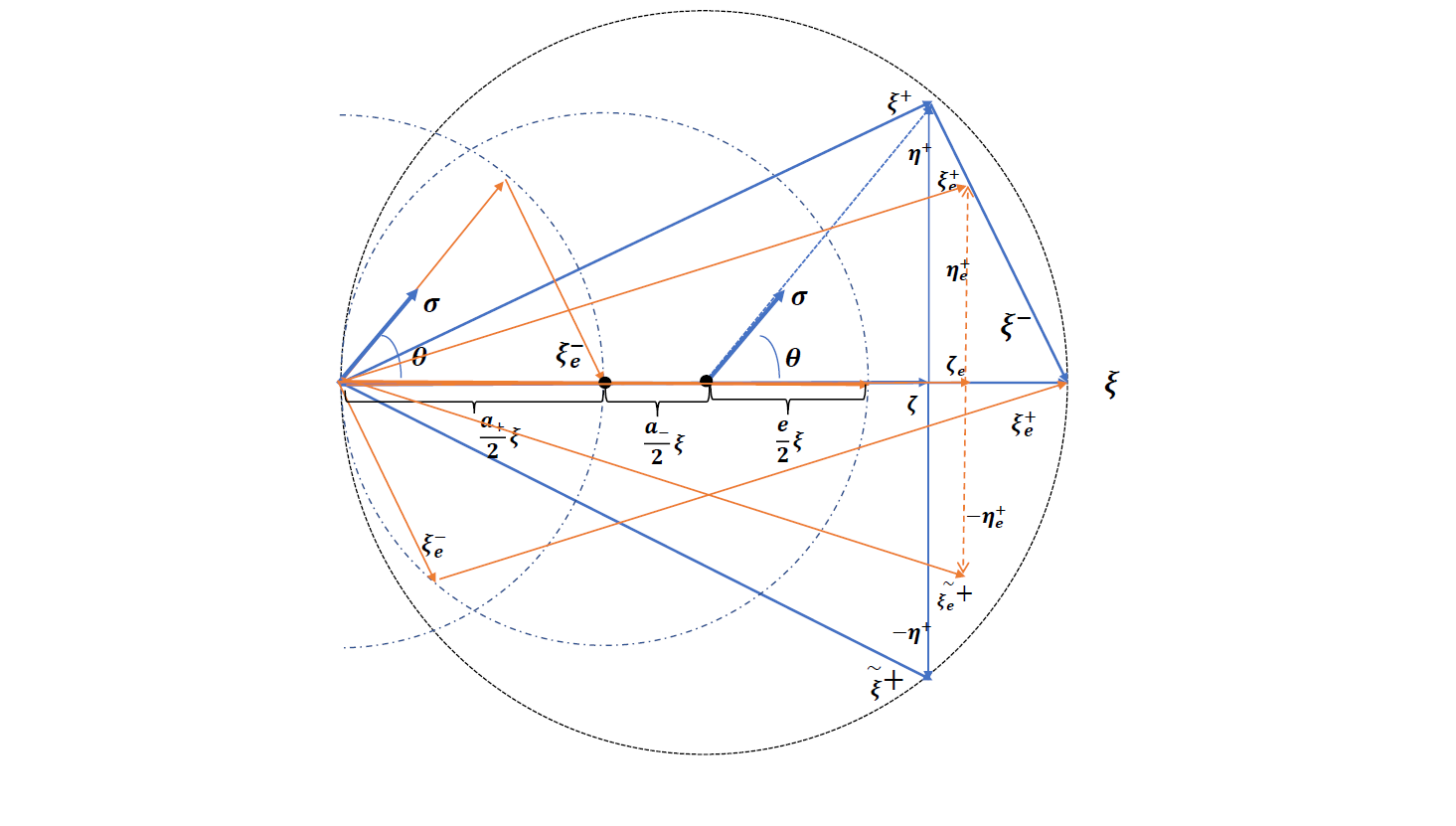}
	\caption{Illustration of the inelastic collision mechanism with $ \cos\theta = \frac{\xi\cdot\sigma}{|\xi|} $ and $ \eta_{e}^{+} = \xie^{+} - \zeta_{e} $.}
	\label{fig:label}
\end{figure}

\begin{proof}
	By introducing $ \zeta_{e} = \left(\xie^{+}\cdot\frac{\xi}{|\xi|}\right) \frac{\xi}{|\xi|} $ as the middle variable as well as considering the fact that $ \p(0) = 1 $,  %and then $ \tilde{\xi}^{+} = \zeta - (\xi^{+} - \zeta) $, which is symmetric to $ \xie^{+} $ on $ \Sd^{2} $, we have 
	\begin{align}
	&\int_{\Sd^{2}} b\left(\frac{\xi\cdot\sigma}{|\xi|}\right) \left[\p(\xie^{+})\p(\xie^{-}) -\p(0) \p(\xi) \right] \rd\sigma \\
	= & \int_{\Sd^{2}} b\left(\frac{\xi\cdot\sigma}{|\xi|}\right) \left[\p(\xie^{+})\p(\xie^{-}) - \p(\xie^{+}) + \p(\xie^{+}) - \p(\xi) \right] \rd\sigma \\
	= & \int_{\Sd^{2}} b\left(\frac{\xi\cdot\sigma}{|\xi|}\right) \left[\p(\xie^{+})  - \p(\xi) \right] \rd\sigma + \int_{\Sd^{2}} b\left(\frac{\xi\cdot\sigma}{|\xi|}\right) \p(\xie^{+})\left[\p(\xie^{-}) - 1 \right] \rd\sigma\\
	= & \frac{1}{2}\int_{\Sd^{2}} b\left(\frac{\xi\cdot\sigma}{|\xi|}\right) \left[\p(\xie^{+}) + \p(\tilde{\xie}^{+}) - 2\p(\xi) \right] \rd\sigma + \int_{\Sd^{2}} b\left(\frac{\xi\cdot\sigma}{|\xi|}\right) \p(\xie^{+})\left[\p(\xie^{-}) - 1 \right] \rd\sigma\\
	= & \frac{1}{2}\int_{\Sd^{2}} b\left(\frac{\xi\cdot\sigma}{|\xi|}\right) \left[\p(\xie^{+}) + \p(\tilde{\xie}^{+}) - 2\p(\zeta_{e}) \right] \rd\sigma + \int_{\Sd^{2}} b\left(\frac{\xi\cdot\sigma}{|\xi|}\right) \left[\p(\zeta_{e}) - \p(\xi) \right] \rd\sigma\\ &+\int_{\Sd^{2}} b\left(\frac{\xi\cdot\sigma}{|\xi|}\right) \p(\xie^{+})\left[\p(\xie^{-}) - 1 \right] \rd\sigma\\
	:= & I_{1} + I_{2} + I_{3}
	\end{align} %$ \xie^{+} =  \xi - \xie^{-} = \zeta + \eta^{+} + \xi^{-} -\xie^{-} $
	
	(i) For the first part $ I_{1} $, by considering the symmetric geometry relation $ \xie^{+} = \zeta_{e} + \eta_{e}^{+} $ and $ \xie^{-} = \zeta_{e} +(-\eta_{e}^{+}) $ as in Figure \ref{fig:label}, we obtain,
	\begin{align}
	&\left| \p(\xie^{+}) +\p(\tilde{\xie}^{+}) - 2\p(\zeta_{e}) \right| = \left| \int_{\mathbb{R}^{3}} \e^{-\im \zeta_{e}\cdot v} \left( \e^{-\im\eta_{e}^{+}\cdot v} + \e^{\im\eta_{e}^{+}\cdot v} -2  \right) \rd F(v) \right|\\
	\leq &  \int_{\mathbb{R}^{3}} \left|\e^{-\im \zeta_{e}\cdot v} \right|  \left( 2 - \e^{-\im\eta_{e}^{+}\cdot v} - \e^{\im\eta_{e}^{+}\cdot v}  \right) \rd F(v) \\
	= & 2 - \p(\eta_{e}^{+}) - \p(-\eta_{e}^{+})\\
	= & \left[ 1- \p(\eta_{e}^{+}) \right] + \left[ 1- \p(-\eta_{e}^{+}) \right]\\
	\leq & 2 \left\| 1-\p\right\|_{\alpha} \left| \eta_{e}^{+}\right|^{\alpha} \leq 2 \left\| 1-\p\right\|_{\alpha} \left| \eta^{+}\right|^{\alpha} \leq 2 \left\| 1-\p\right\|_{\alpha}  |\xi|^{\alpha} \sin^{\alpha}\left(\frac{\theta}{2}\right),
	\end{align}
	%	&\left| \p(\xie^{+}) - \p(\zeta) \right|= \left| \int_{\mathbb{R}^{3}} \e^{-\im \zeta\cdot v} \left( \e^{-\im(\eta^{+} + \xi^{-} -\xie^{-})\cdot v} -1  \right) \rd F(v) \right|\\
	%\leq &  \int_{\mathbb{R}^{3}} \left|\e^{-\im \zeta\cdot v} \right|  \left( 1 - \e^{-\im(\eta^{+} + \xi^{-} -\xie^{-})\cdot v} \right) \rd F(v) \\
	%= & 1 - \p(\eta^{+} + \xi^{-} -\xie^{-})\\
	%\leq & \left\| 1-\p\right\|_{\alpha} \left| \eta^{+} + \xi^{-} -\xie^{-}\right|^{\alpha}\\
	%\leq & \left\| 1-\p\right\|_{\alpha} \left( \left| \eta^{+} \right|^{\alpha} + \left| \xi^{-} -\xie^{-}\right|^{\alpha} \right) \label{i1}\\
	%\leq & \left\| 1-\p\right\|_{\alpha} \left[\left( \left| \xi^{+}\right|^{\alpha}\sin^{\alpha}\left(\frac{\theta}{2}\right)\right) + \left(1-\ap\right)^{\alpha} \left| \xi^{-}\right|^{\alpha} \right] \label{i2}\\
	%\leq & \left\| 1-\p\right\|_{\alpha} \left[\left( \left| \xi\right|^{\alpha}\sin^{\alpha}\left(\frac{\theta}{2}\right)\right) + \left(1-\ap\right)^{\alpha} \left| \xi\right|^{\alpha}\sin^{\alpha}\left(\frac{\theta}{2}\right) \right] \label{i3}\\
	%\leq & C_{e} \left\| 1-\p\right\|_{\alpha} |\xi|^{\alpha} \sin^{\alpha}\left(\frac{\theta}{2}\right), 
	%and $ \xie^{-} = \ap\xi^{-} $ from \eqref{i1} to \eqref{i2} as well as the fact $ |\xi^{-}| = |\xi|\sin\left(\frac{\theta}{2}\right) $ from \eqref{i2} to \eqref{i3}
	where we utilize the relationship $ \left| \eta^{+}\right| = \left| \xi^{+}\right|\sin\left(\frac{\theta}{2}\right) $ and $ \left| \xi^{+}\right| \leq \left| \xi\right| $ in the last inequality. As a result, we have, according to the assumption \eqref{noncutoffb},
	\begin{equation}
	\left| I_{1}\right| \leq C_{1} \left\|1-\p\right\|_{\alpha} \left| \xi\right|^{\alpha} \int_{0}^{\frac{\pi}{2}} \sin^{\alpha}\left(\frac{\theta}{2}\right)b(\cos\theta) \sin\theta \rd \theta < \infty.
	\end{equation}
	(ii) For the second part $ I_{2} $, with the help of the inequality \eqref{l3} in Lemma \ref{L1} and $ \zeta_{e}-\xi = \eta_{e} $ in Figure \ref{fig:label}, we have 
	\begin{equation}
	\left| \p(\zeta_{e}) - \p(\xi)\right| \leq \left\| \p-1 \right\|_{\alpha} \left( 4 |\xi|^{\frac{\alpha}{2}} |\eta_{e}|^{\frac{\alpha}{2}} + |\eta_{e}|^{\alpha} \right),
	\end{equation}
	together with the geometric relation $|\zeta_{e}-\xi| = |\eta_{e}| \leq |\eta| = |\zeta - \xi| = |\xi|\sin^{2}\left(\frac{\theta}{2}\right) $, we can further obtain that
	\begin{align}
	\left| I_{2}\right| \leq & \int_{\Sd^{2}} b\left(\frac{\xi\cdot\sigma}{|\xi|}\right) \left\| \p-1 \right\|_{\alpha} \left( 4 |\xi|^{\frac{\alpha}{2}} |\eta_{e}|^{\frac{\alpha}{2}} + |\eta_{e}|^{\alpha} \right) \rd\sigma\\
	\leq & C_{2} \left\|1-\p\right\|_{\alpha} \left| \xi\right|^{\alpha} \int_{0}^{\frac{\pi}{2}} \sin^{\alpha}\left(\frac{\theta}{2}\right)b(\cos\theta) \sin\theta \rd \theta < \infty.
	\end{align}
	(iii) For the last part $ I_{3} $, following the similar estimates above and considering the fact that $ \left| \p(\xie^{+}) \right| \leq 1$, we have, 
	\begin{align}
	&\left|I_{3}\right| = \left| \int_{\Sd^{2}} b\left(\frac{\xi\cdot\sigma}{|\xi|}\right) \p(\xie^{+})\left[\p(\xie^{-}) - 1 \right] \rd\sigma \right|\\
	\leq & \int_{\Sd^{2}} b\left(\frac{\xi\cdot\sigma}{|\xi|}\right) \left|\p(\xie^{-}) - 1 \right| \rd\sigma\\
	\leq & \int_{\Sd^{2}} b\left(\frac{\xi\cdot\sigma}{|\xi|}\right) \left\|1-\p\right\|_{\alpha} \left| \xie^{-} \right|^{\alpha} \rd\sigma\\
	\leq & \left\|1-\p\right\|_{\alpha} \int_{\Sd^{2}} b\left(\frac{\xi\cdot\sigma}{|\xi|}\right)  \left(\frac{\ap^{2}}{2}\right)^{\frac{\alpha}{2}} \left( 1-\frac{\xi\cdot\sigma}{|\xi|}\right)^{\frac{\alpha}{2}} \left|\xi\right|^{\alpha} \rd\sigma\\
	\leq & C_{3}\left\|1-\p\right\|_{\alpha}\left|\xi\right|^{\alpha} \int_{0}^{\frac{\pi}{2}} \sin^{\alpha}\left(\frac{\theta}{2}\right)b(\cos\theta) \sin\theta \rd \theta,
	\end{align}
	where we use the estimate \eqref{bound-} in Lemma \ref{bound} as well as the fact that $ \frac{\xi\cdot\sigma}{|\xi|} = \cos\theta $. Summing up the estimates in (i), (ii) and (iii), we obtain the desired estimate \eqref{well2}.
\end{proof}

\begin{remark}
	In fact, without considering the geometric relation in Figure \ref{fig:label}, we can still find that the initial value problem \eqref{IBE}-\eqref{initial} is well-defined if there is only mild singularity assumption \eqref{mildcutoff}, by the following simple calculation,
	\begin{equation}
	\begin{split}
	\partial_{t} \varphi(t,\xi) = & \int_{\Sd^{2}} b\left(\frac{\xi\cdot\sigma}{|\xi|}\right) \left[ \varphi(t,\xie^{+})\varphi(t,\xie^{-}) - \varphi(t,\xi)\varphi(t,0) \right] \,\rd\sigma\\
	\leq & 4\int_{\Sd^{2}} b\left(\frac{\xi\cdot\sigma}{|\xi|}\right) \left|\xie^{+}\right|^{\frac{\alpha}{2}}\left|\xie^{-}\right|^{\frac{\alpha}{2}}\left\|\p-1\right\|_{\alpha}  \,\rd\sigma \\
	\leq & 4 \left(\ap^{2}\right)^{\frac{\alpha}{4}} \left[ \frac{\left(1+\am\right)^{2}+ \left(\ap\right)^{2}}{2}\right]^{\frac{\alpha}{4}} \left|\xi\right|^{\alpha} \left\| \p  -1\right\|_{\alpha}\\
	& \int_{\Sd^{2}} b\left(\frac{\xi\cdot\sigma}{|\xi|}\right) \left(\frac{1-\frac{\xi\cdot\sigma}{|\xi|}}{2}\right)^{\frac{\alpha}{4}}\left( \frac{1+\frac{\xi\cdot\sigma}{|\xi|}}{2}\right)^{\frac{\alpha}{4}} \rd\sigma \\
	= & 4\gamma_{\frac{\alpha}{2}} \left(\ap^{2}\right)^{\frac{\alpha}{4}} \left[ \frac{\left(1+\am\right)^{2}+ \left(\ap\right)^{2}}{2}\right]^{\frac{\alpha}{4}} \left|\xi\right|^{\alpha} \left\| \p  -1\right\|_{\alpha} < \infty,
	\end{split}
	\end{equation}
	where $ \gamma_{\frac{\alpha}{2}} $ has the same definition as in \eqref{gammaelastic} below, and we utilize the estimate \eqref{1} of Lemma \ref{well} in the first inequality as well as estimate \eqref{bound+}-\eqref{bound-} of Lemma \ref{bound} in the second inequality above.
\end{remark}

\section{Well-posedness under the Cutoff Assumption}
\label{sec:cutoff}

\subsection{Technical Lemma of the Cutoff Collision Operator}
\label{subsec:technical}

In this section, we first construct the solution of the initial value problem \eqref{IBE}-\eqref{initial}, and study its stability in the space $ \K^{\alpha} $ under the \textit{cutoff assumption} on the collision kernel $ b $ in the sense that, for all $ \xi\in \mathbb{R}^{3}/\{0\} $,
\begin{equation}\label{cutoff}
\int_{\Sd^{2}} b\left(\frac{\xi\cdot\sigma}{|\xi|}\right) \rd \sigma < \infty,
\end{equation}
in fact, we will dispense with the assumption and prove the existence of solutions to the initial value problem \eqref{IBE}-\eqref{initial} by compactness argument in next section \ref{sec:noncutoff}.

%\subsection{Construction of Solution by Banach Fixed Point Theorem }
%\label{subsec:banach}

Before that, we introduce some corresponded parameters that will appear systematically in our following proof. 
\begin{lemma}\label{parametergl}
	Assume that $ e\in (0,1] $ and the collision kernel $ b $ satisfy the cutoff assumption \eqref{cutoff}, for all $ \alpha\in\left[0,2\right] $ and $ \xi\in\mathbb{R}^{3}/\left\lbrace 0 \right\rbrace  $, we define the parameter $ \gamma_{e,\alpha} $,
	\begin{equation}\label{gammae}
	\gamma_{e,\alpha} \equiv \int_{\Sd^{2}} b\left(\frac{\xi\cdot\sigma}{|\xi|}\right) \frac{|\xie^{+}|^{\alpha}+|\xie^{-}|^{\alpha}}{|\xi|^{\alpha}} \rd \sigma,
	\end{equation}
	and $ \gamma_{e,\alpha} = \gamma_{\alpha}$ if and only if the restitution coefficient $ e=1 $, where the $ \gamma_{\alpha} $ is the corresponded parameter in elastic case, 
	\begin{equation}\label{gammaelastic}
	\gamma_{\alpha} \equiv \int_{\Sd^{2}} b\left(\frac{\xi\cdot\sigma}{|\xi|}\right) \frac{|\xi^{+}|^{\alpha}+|\xi^{-}|^{\alpha}}{|\xi|^{\alpha}} \rd \sigma = 2\pi \int_{0}^{\frac{\pi}{2}} b(\cos\theta) \left(\sin^{\alpha}\frac{\theta}{2} + \cos^{\alpha}\frac{\theta}{2}\right) \sin\theta\rd\theta.
	\end{equation}
	Furthermore, if the collision kernel $ b $ satisfy the non-cutoff assumption \eqref{noncutoffb}, we have the parameter $ \lambda_{e,\alpha} $ defined as \eqref{lamdadefinitionmain} above,
	\begin{equation}\label{lamdadefinition1}
	\lambda_{e,\alpha} \equiv \int_{\Sd^{2}} b\left(\frac{\xi\cdot\sigma}{|\xi|}\right) \left(\frac{|\xie^{+}|^{\alpha}+|\xie^{-}|^{\alpha}}{|\xi|^{\alpha}} -1 \right)\rd \sigma.
	\end{equation}
	Then $ \gamma_{e,\alpha} $ and $ \lambda_{e,\alpha} $ are finite and independent of $ |\xi| $.
\end{lemma}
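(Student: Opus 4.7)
The plan is to establish the three assertions in sequence: independence of $|\xi|$, the elastic identification, and the finiteness claims.

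First, for the independence of $|\xi|$, I would note that both $|\xie^+|/|\xi|$ and $|\xie^-|/|\xi|$ depend on $\sigma$ only through the angle $\theta$ with $\xi/|\xi|$, as revealed by the explicit computation (using $\ap + \am = 1$) yielding $|\xie^-|^2 = \ap^2 \sin^2(\theta/2)|\xi|^2$ and $|\xie^+|^2 = [1 - \ap(1+\am)\sin^2(\theta/2)]|\xi|^2$ with $\cos\theta = \xi\cdot\sigma/|\xi|$. Choosing spherical coordinates on $\Sd^2$ with polar axis along $\xi/|\xi|$ collapses the azimuthal variable into a factor $2\pi$, and (after the customary symmetrization) each of \eqref{gammae} and \eqref{lamdadefinition1} reduces to a one-dimensional integral in $\theta$ over $[0,\pi/2]$ with Jacobian $\sin\theta$, manifestly independent of $|\xi|$.

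Second, for the elastic identification, plugging $e = 1$ gives $\ap = 1$ and $\am = 0$, so $\xie^\pm = (\xi \pm |\xi|\sigma)/2 = \xi^\pm$ and hence $\gamma_{e,\alpha} = \gamma_\alpha$ directly. For the converse I would specialize to $\alpha = 2$ and use the already-recorded identity $(|\xie^+|^2 + |\xie^-|^2)/|\xi|^2 = (1 + \ap^2 + \am^2)/2 + \ap\am \cos\theta$ to obtain
\[
\gamma_{e,2} = \tfrac{3+e^2}{4}\int_{\Sd^2} b(\cos\theta)\,\rd\sigma + \ap\am \int_{\Sd^2} b(\cos\theta)\cos\theta\,\rd\sigma.
\]
Equating this with $\gamma_2 = \int_{\Sd^2} b\,\rd\sigma$ and using the algebraic identity $1 - (3+e^2)/4 = \ap\am$ forces the relation $\int b\,\rd\sigma = \int b\cos\theta\,\rd\sigma$, which for the symmetrized kernel supported on $\theta \in [0,\pi/2]$ with $\cos\theta \in [0,1]$ is impossible unless $\ap\am = 0$, i.e., $e = 1$.

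Third, the finiteness of $\gamma_{e,\alpha}$ under the cutoff condition \eqref{cutoff} is immediate since $(|\xie^+|^\alpha + |\xie^-|^\alpha)/|\xi|^\alpha$ is uniformly bounded in $\theta$ (by $1 + \ap^\alpha$, say). For $\lambda_{e,\alpha}$ under the non-cutoff assumption \eqref{noncutoffb}, the key is the asymptotic near $\theta = 0$:
\[
\frac{|\xie^+|^\alpha + |\xie^-|^\alpha}{|\xi|^\alpha} - 1 = \left[1 - \ap(1+\am)\sin^2(\theta/2)\right]^{\alpha/2} - 1 + \ap^\alpha \sin^\alpha(\theta/2) = O\!\left(\sin^\alpha(\theta/2)\right),
\]
where for $\alpha \in [\alpha_0, 2)$ the term $\ap^\alpha \sin^\alpha(\theta/2)$ dominates and for $\alpha = 2$ the two expansions combine to $-2\ap\am \sin^2(\theta/2) + O(\sin^4(\theta/2))$. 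Assumption \eqref{noncutoffb} with $\alpha \geq \alpha_0$ then closes the estimate.

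The main obstacle is verifying the uniform $O(\sin^\alpha(\theta/2))$ control: the two pieces $[1-\ap(1+\am)\sin^2(\theta/2)]^{\alpha/2} - 1$ and $\ap^\alpha \sin^\alpha(\theta/2)$ are of different orders in $\sin(\theta/2)$ and must be bounded together. The elementary inequality $|(1-x)^{\alpha/2} - 1| \leq x^{\alpha/2}$ valid for $x \in [0,1]$ and $\alpha \in (0,2]$ (proved by showing the maximum of $[1 - (1-x)^{\alpha/2}]/x^{\alpha/2}$ on $[0,1]$ is attained at $x=1$ with value $1$), applied with $x = \ap(1+\am)\sin^2(\theta/2) \in [0,1]$, delivers this bound uniformly, after which \eqref{noncutoffb} gives the desired integrability.
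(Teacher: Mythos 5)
Your proposal is correct, and on the key finiteness claim it is actually more careful than the paper's own argument. The paper proves the lemma by sandwiching $\gamma_{e,\alpha}$ between constant multiples of the elastic $\gamma_{\alpha}$ via the two-sided bounds of Lemma \ref{bound} and then citing Cannone--Karch for the properties of $\gamma_{\alpha}$; for $\lambda_{e,\alpha}$ under \eqref{noncutoffb} it only bounds the integral \emph{from above}, using $|\xie^{+}|\leq|\xi|$ to discard the nonpositive contribution of $|\xie^{+}|^{\alpha}/|\xi|^{\alpha}-1$ and keeping only the $|\xie^{-}|^{\alpha}$ piece. Strictly speaking an upper bound alone does not establish finiteness (i.e.\ absolute convergence) of the integral; your route closes exactly this gap, since the subadditivity inequality $(1-x)^{\alpha/2}+x^{\alpha/2}\geq 1$ applied to $x=\ap(1+\am)\sin^{2}(\theta/2)\in[0,1]$ controls $\bigl|\,|\xie^{+}|^{\alpha}/|\xi|^{\alpha}-1\bigr|$ by $\sin^{\alpha}(\theta/2)$, giving $\bigl|\,(|\xie^{+}|^{\alpha}+|\xie^{-}|^{\alpha})/|\xi|^{\alpha}-1\bigr|\leq(1+\ap^{\alpha})\sin^{\alpha}(\theta/2)$ and hence genuine integrability against \eqref{noncutoffb}. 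Your explicit angular formulas $|\xie^{-}|^{2}=\ap^{2}\sin^{2}(\theta/2)|\xi|^{2}$ and $|\xie^{+}|^{2}=[1-\ap(1+\am)\sin^{2}(\theta/2)]|\xi|^{2}$ are correct and also settle the $|\xi|$-independence cleanly. You additionally attempt the ``only if'' direction of $\gamma_{e,\alpha}=\gamma_{\alpha}\Leftrightarrow e=1$, which the paper does not prove at all; your $\alpha=2$ computation is valid, but note it establishes the converse only at $\alpha=2$ (or for the family over all $\alpha$), and indeed the equivalence as literally stated cannot hold pointwise in $\alpha$ since $\gamma_{e,0}=2\gamma_{2}=\gamma_{0}$ for every $e$ --- a defect of the statement rather than of your argument, but one worth flagging if you keep that part.
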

\begin{proof}
	The proof is followed from the direct calculation by substituting the estimates of $ \xie^{+} $ and $ \xie^{-} $ in the Lemma \ref{bound}:
	for $ \gamma_{e,\alpha} $,
	\begin{equation}
	\left(\ap^{2}\right)^{\frac{\alpha}{2}} \left[\ap(1+\am)\right]^{\frac{\alpha}{2}} \gamma_{\alpha}  \leq \gamma_{e,\alpha} \leq \left(\ap^{2}\right)^{\frac{\alpha}{2}} \left[ \frac{\left(1+\am\right)^{2}+ \left(\ap\right)^{2}}{2}\right]^{\frac{\alpha}{2}} \gamma_{\alpha}.
	\end{equation}
	Note that the property of $ \gamma_{\alpha} $ has been proved in \cite[Lemma 4.1]{cannone2010infinite} corresponding to the elastic case. 
	
	For $ \lambda_{e,\alpha} $ under cutoff assumption, the finiteness can be immediately found with the help of $ \gamma_{e,\alpha} $ in \eqref{cutoff}; then to handle the non-cutoff collision kernel \eqref{noncutoffb}, we have the following estimate,
	\begin{equation}
	\begin{split}
	\lambda_{e,\alpha} = &  \int_{\Sd^{2}} b\left(\frac{\xi\cdot\sigma}{|\xi|}\right) \left( \frac{|\xie^{+}|^{\alpha}}{|\xi|^{\alpha}} + \frac{|\xie^{-}|^{\alpha}}{|\xi|^{\alpha}} -1 \right)\rd \sigma\\
	\leq &  \int_{\Sd^{2}} b\left(\frac{\xi\cdot\sigma}{|\xi|}\right) \left(\ap^{2}\right)^{\frac{\alpha}{2}} \left( \frac{1-\frac{\xi\cdot\sigma}{|\xi|}}{2}\right)^{\frac{\alpha}{2}} \rd \sigma\\
	= & 2\pi \left(\ap^{2}\right)^{\frac{\alpha}{2}} \int_{0}^{\frac{\pi}{2}} \sin^{\alpha}\left(\frac{\theta}{2}\right)b(\cos\theta) \sin\theta \rd \theta < \infty,
	\end{split}
	\end{equation}
	where in the middle inequality we apply the geometric relation $ |\xie^{+}| \leq |\xi| $ in Figure \ref{fig:label} as well as the estimate \eqref{bound-} of $ |\xie^{-}| $. This completes the proof of the Lemma \ref{parametergl}.
\end{proof}

In order to construct the solution by Banach fixed point theorem, we also present another technical Lemma \ref{GL} about the inelastic nonlinear operator $ \mathcal{G}_{e}[\p] $, defined as following:
\begin{equation}
\mathcal{G}_{e}[\p](\xi) := \int_{\Sd^{2}} b\left(\frac{\xi\cdot\sigma}{|\xi|}\right) \p(\xie^{+})\p(\xie^{-}) \ \rd \sigma,
\end{equation}
where $ \xie^{+} $ and $ \xie^{-} $ are defined in \eqref{xie+} and \eqref{xie-}.
\begin{lemma}\label{GL}
	Let $ e\in(0,1] $, $ \alpha\in \left[0,2\right] $ and the collision kernel $b$ satisfy the cutoff assumption \eqref{cutoff}. For any $\p\in\K^{\alpha}$, the function $ \mathcal{G}_{e}[\p] $ is continuous and positive definite. Moreover, we have 
	\begin{equation}\label{G}
	\left| \mathcal{G}_{e}[\p](\xi) - \mathcal{G}_{e}\left[ \tilde{\p} \right]  (\xi) \right| \leq \gamma_{e,\alpha} \left\| \p - \tilde{\p} \right\|_{\alpha} \left| \xi \right|^{\alpha}
	\end{equation}
	for all $ \p, \tilde{\p} \in \K^{\alpha} $ and all $ \xi\in\mathbb{R}^{3}/ \{ 0 \} $.
\end{lemma}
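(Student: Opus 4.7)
The plan is to handle the three conclusions (the Lipschitz-type estimate, continuity, and positive definiteness) separately, taking the quantitative inequality \eqref{G} as the main target.

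For the estimate \eqref{G}, I would use the standard telescoping identity
\[
\p(\xie^{+})\p(\xie^{-}) - \tilde{\p}(\xie^{+})\tilde{\p}(\xie^{-}) = \p(\xie^{+})\bigl[\p(\xie^{-}) - \tilde{\p}(\xie^{-})\bigr] + \tilde{\p}(\xie^{-})\bigl[\p(\xie^{+}) - \tilde{\p}(\xie^{+})\bigr].
\]
Since both $\p$ and $\tilde{\p}$ are characteristic functions, one has $|\p(\xie^{+})|\leq 1$ and $|\tilde{\p}(\xie^{-})|\leq 1$, while the definition of the $\alpha$-norm gives
$|\p(\xie^{\pm})-\tilde{\p}(\xie^{\pm})|\leq \|\p-\tilde{\p}\|_{\alpha}|\xie^{\pm}|^{\alpha}$.
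Taking absolute values, multiplying by $b(\xi\cdot\sigma/|\xi|)$, integrating over $\Sd^{2}$ (which is permissible under the cutoff assumption \eqref{cutoff}), and pulling out $\|\p-\tilde{\p}\|_{\alpha}$, the definition \eqref{gammae} of $\gamma_{e,\alpha}$ together with the finiteness established in Lemma \ref{parametergl} produces exactly the right-hand side of \eqref{G}.

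For positive definiteness I would invoke the Bobylev identity proved in the appendix (Section~\ref{sub:fourier+}): if $F$ denotes the probability measure whose Fourier transform is $\p$, then
\[
\mathcal{G}_{e}[\p](\xi) \;=\; \int_{\bR^{3}}\!\!\int_{\bR^{3}}\!\!\int_{\Sd^{2}} b\!\left(\tfrac{\xi\cdot\sigma}{|\xi|}\right)\e^{-\im\xi\cdot v'}\,\rd\sigma\,\rd F(v_{*})\,\rd F(v),
\]
i.e.\ $\mathcal{G}_{e}[\p]$ is the Fourier transform of the gain part $Q_{e}^{+}(F,F)$ of the collision operator. The weak formulation \eqref{weak} applied to non-negative test functions shows that $Q_{e}^{+}(F,F)$ is a finite non-negative Borel measure (with total mass $\int_{\Sd^{2}}b\,\rd\sigma<\infty$, which is where cutoff is used). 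Bochner's theorem then yields positive definiteness of $\mathcal{G}_{e}[\p]$. Continuity follows along the same lines: the Fourier transform of a finite positive measure is (uniformly) continuous, or equivalently, one appeals to dominated convergence with the integrable majorant $b(\xi\cdot\sigma/|\xi|)$ after noting that $\xi\mapsto \xie^{\pm}$ is continuous off the origin and that $\p$ itself is continuous.

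The only potentially delicate point is the positive definiteness: for a fixed $\sigma$ the map $\xi\mapsto \p(\xie^{+})\p(\xie^{-})$ is not a characteristic function because $\xie^{\pm}$ depend on $\xi$ through $|\xi|$, so one cannot simply write $\mathcal{G}_{e}[\p]/\!\int b\,\rd\sigma$ as a convex combination of characteristic functions. The Bobylev change of variables (aligning $\hat{\xi}$ with $\widehat{v-v_{*}}$ by an orthogonal transformation, and using rotational invariance of $\rd\sigma$) is precisely what repairs this, and this is where the geometric structure of the inelastic collision enters. Once that identity is in hand, both continuity and positive definiteness are essentially free, and the bulk of the work really lies in the algebraic manipulation leading to \eqref{G}.
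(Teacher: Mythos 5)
Your derivation of the estimate \eqref{G} coincides with the paper's own proof: the same telescoping decomposition of $\p(\xie^{+})\p(\xie^{-}) - \tilde{\p}(\xie^{+})\tilde{\p}(\xie^{-})$, the bounds $|\p(\xie^{\pm})|\leq 1$ and $|\tilde{\p}(\xie^{\pm})|\leq 1$, the $\alpha$-norm estimate on each difference, and finally the definition \eqref{gammae} of $\gamma_{e,\alpha}$ together with its finiteness from Lemma \ref{parametergl}. Where you genuinely diverge is on continuity and positive definiteness: the paper disposes of these with the single remark that ``it suffices to show the estimate \eqref{G} holds,'' which is not literally a proof, since an inequality of the form \eqref{G} does not by itself imply positive definiteness. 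Your route --- identifying $\mathcal{G}_{e}[\p]$ via the inelastic Bobylev identity of the appendix as the Fourier transform of the finite non-negative measure $Q_{e}^{+}(F,F)$ (of total mass $\gamma_{2}$, finite precisely because of the cutoff \eqref{cutoff}), then invoking Bochner's theorem and the uniform continuity of Fourier transforms of finite positive measures --- is the argument used in Cannone--Karch's original elastic treatment and actually fills the gap the paper leaves; your closing observation that $\xi\mapsto\p(\xie^{+})\p(\xie^{-})$ is \emph{not} a characteristic function for fixed $\sigma$, so that the Bobylev change of variables is genuinely needed, is exactly the right subtlety to flag. One small caveat: your displayed formula mixes the two sides of that change of variables, pairing the kernel argument $\frac{\xi\cdot\sigma}{|\xi|}$ with the exponent $\e^{-\im\xi\cdot v'}$, whereas before the swap the kernel argument is $\frac{(v-v_{*})\cdot\sigma}{|v-v_{*}|}$ and after the swap the exponent involves $|\xi|\sigma$ rather than $|v-v_{*}|\sigma$; since you describe the correct mechanism in words immediately afterwards, this is a notational slip rather than a genuine gap.
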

\begin{proof}
	For all $\p\in\K^{\alpha}$, to show that $ \mathcal{G}_{e}[\p] $ is continuous and positive definite, it suffices to show the estimate \eqref{G} holds, since the properties for $\p\in\K^{\alpha}$, we have $ \left| \p(\xie^{-}) \right| \leq 1 $, $ \left| \tilde{\p}(\xie^{+}) \right| \leq 1 $, we obtain
	\begin{equation}
	\begin{split}
	&\left| \mathcal{G}_{e}[\p](\xi) - \mathcal{G}_{e}\left[\tilde{\p} \right]  (\xi) \right| \\
	= & \left| \int_{\Sd^{2}} b\left(\frac{\xi\cdot\sigma}{|\xi|}\right) \left[ \left( \p(\xie^{+}) - \tilde{\p}(\xie^{+}) \right)\p(\xie^{-})  + \tilde{\p}(\xie^{+}) \left( \p(\xie^{-}) - \tilde{\p}(\xie^{-}) \right) \right] \ \rd \sigma  \right|\\
	\leq & \int_{\Sd^{2}} b\left(\frac{\xi\cdot\sigma}{|\xi|}\right) \left( \left\| \p-\tilde{\p} \right\|_{\alpha} \left| \xie^{+} \right|^{\alpha}  + \left\| \p-\tilde{\p} \right\|_{\alpha}\left| \xie^{-} \right|^{\alpha} \right) \ \rd \sigma\\
	= & \left\| \p-\tilde{\p} \right\|_{\alpha} \int_{\Sd^{2}} b\left(\frac{\xi\cdot\sigma}{|\xi|}\right) \left( \left| \xie^{+} \right|^{\alpha}  + \left| \xie^{-} \right|^{\alpha} \right) \ \rd \sigma\\
	\leq & \gamma_{e,\alpha} \left\| \p-\tilde{\p} \right\|_{\alpha} \left| \xi \right|^{\alpha}
	\end{split}
	\end{equation}
	for all $ \xi\in\mathbb{R}^{3} $.
\end{proof}

\subsection{Well-posedness under Cutoff Assumption}
\label{subsec:proofcutoff}

Now we are ready to give the construction of solution to the initial value equation \eqref{IBE}-\eqref{initial} in space $ \K^{\alpha} $.
Firstly, based on the cutoff assumption \eqref{cutoff}, we denote the consistent notation as in \cite{cannone2010infinite}, 
\begin{equation}
\gamma_{2} = \int_{\Sd^{2}} b\left(\frac{\xi\cdot\sigma}{|\xi|}\right) \rd \sigma = 2\pi \int_{0}^{\frac{\pi}{2}} b(\cos\theta) \sin\theta \rd \theta < \infty,
\end{equation}
meanwhile, considering the fact that $ \p(0,\xi) = 1 $ for all $ t \geq 0 $, we are able to rewrite the equation \eqref{IBE} into the following form:
\begin{equation}\label{i}
\partial_{t} \p(t,\xi) + \gamma_{2} \p(t,\xi) = \mathcal{G}_{e}[\p]  (t,\xi).
\end{equation}
Then multiplying \eqref{i} by the factor $ \e^{\gamma_{2}t} $ and integrating with respect to $ t $, we obtain the following equivalent formulation of equation \eqref{IBE}-\eqref{initial}:
\begin{equation}\label{mildsolution}
\p(t,\xi) = \p_{0}(\xi) \e^{-\gamma_{2}t} + \int_{0}^{t} \e^{-\gamma_{2}(t-\tau)} \mathcal{G}_{e}[\p] (\tau,\xi) \,\rd\tau.
\end{equation}

\begin{theorem}\label{local}
	\emph{(Well-posedness under cutoff assumption)} Let $ e\in(0,1] $, $ \alpha\in \left[0,2\right] $ and the collision kernel $b$ satisfy the cutoff assumption \eqref{cutoff}. For each initial datum $ \p_{0}\in\K^{\alpha} $, there exists a unique solution $ \p(t,\xi) $ to problem \eqref{IBE}-\eqref{initial} such that $ \p\in \chi^{\alpha}:= C\left(  \left[0,\infty\right), \K^{\alpha} \right) $.\\
	Furthermore, if $ \p, \tilde{\p} \in C\left( \left[0,\infty\right), \K^{\alpha} \right) $ are two solutions corresponding to the initial datum $ \p_{0}, \tilde{\p}_{0} $ respectively. Then, for every $ t \geq 0 $ and $ R\in\left(0,\infty\right] $,
	\begin{equation}\label{stabilitycutoff}
	\left\| \p(t,\cdot)-\tilde{\p}(t,\cdot) \right\|_{\alpha, R} \leq \e^{\lambda_{e,\alpha}t} \left\| \p_{0}-\tilde{\p}_{0} \right\|_{\alpha, R}
	\end{equation}
	in the sense of the quasi-metric as following: for any $ R\in\left( 0, \infty \right] $ and $ \p, \tilde{\p} \in \K^{\alpha} $,
	\begin{equation}
	\left\| \p(t,\cdot)-\tilde{\p}(t,\cdot) \right\|_{\alpha, R} \equiv \sup_{|\xi|\leq R} \frac{\left| \p(t,\xi) - \tilde{\p}(t,\xi) \right|}{|\xi|^{\alpha}}, 
	\end{equation}
	where the constant $ \lambda_{e,\alpha} = \gamma_{e,\alpha} - \gamma_{2} $. 
\end{theorem}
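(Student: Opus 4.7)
\textbf{Proof plan for Theorem \ref{local}.}
The plan is to construct the solution via the Banach fixed point theorem applied to the mild formulation \eqref{mildsolution} and then derive the stability estimate directly from that same integral identity by a Gronwall argument. Accordingly, I will define the mapping
\begin{equation}
\mathcal{T}[\p](t,\xi) := \p_{0}(\xi)\,\e^{-\gamma_{2}t} + \int_{0}^{t} \e^{-\gamma_{2}(t-\tau)}\,\mathcal{G}_{e}[\p](\tau,\xi)\,\rd\tau,
\end{equation}
and show that a fixed point exists in $\chi^{\alpha}=C([0,\infty),\mathcal{K}^{\alpha})$.

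The first step is to check that $\mathcal{T}$ maps $\chi^{\alpha}$ into itself. The characteristic-function property is preserved because $\xi\mapsto\p(\xie^{+})\p(\xie^{-})$ is itself a characteristic function (it equals the Fourier transform of the pushforward-then-convolution of the measure $F$ under the two linear maps encoded by $\xie^{\pm}$), so $\mathcal{G}_{e}[\p]/\gamma_{2}$ is a convex combination of characteristic functions against the probability measure $b(\cos\theta)\,\rd\sigma/\gamma_{2}$. Then $\mathcal{T}[\p](t,\cdot)$ is the convex combination of $\p_{0}$ and $\mathcal{G}_{e}[\p]/\gamma_{2}$ with weights $\e^{-\gamma_{2}t}$ and $1-\e^{-\gamma_{2}t}$, hence again a characteristic function. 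To see that it remains in $\mathcal{K}^{\alpha}$, apply Lemma \ref{GL} with $\tilde{\p}\equiv 1$ (which gives $\mathcal{G}_{e}[1]=\gamma_{2}$) to obtain $|\mathcal{G}_{e}[\p](\xi)-\gamma_{2}|\le \gamma_{e,\alpha}\|\p-1\|_{\alpha}|\xi|^{\alpha}$, and combine this with the analogous control on $\p_{0}-1$ in the integral formula.

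The second step is to run the contraction argument. Lemma \ref{GL} gives, for all $\p,\tilde{\p}\in \chi^{\alpha}$,
\begin{equation}
\|\mathcal{T}[\p](t,\cdot)-\mathcal{T}[\tilde{\p}](t,\cdot)\|_{\alpha}
\le \gamma_{e,\alpha}\int_{0}^{t}\e^{-\gamma_{2}(t-\tau)}\|\p(\tau,\cdot)-\tilde{\p}(\tau,\cdot)\|_{\alpha}\,\rd\tau.
\end{equation}
Equipping $C([0,T],\mathcal{K}^{\alpha})$ with the exponentially weighted metric $d_{\Lambda}(\p,\tilde\p)=\sup_{t\in[0,T]}\e^{-\Lambda t}\|\p(t,\cdot)-\tilde{\p}(t,\cdot)\|_{\alpha}$ and choosing $\Lambda>\gamma_{e,\alpha}$ makes $\mathcal{T}$ a strict contraction on every $[0,T]$; the fixed point on successive intervals patches into a unique solution $\p\in\chi^{\alpha}$. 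Continuity in $t$ follows from continuity of the integrand, which is itself a consequence of Lemma \ref{GL}.

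The third step is the quasi-metric stability \eqref{stabilitycutoff}. Subtracting the mild formulations of $\p$ and $\tilde{\p}$, dividing by $|\xi|^{\alpha}$ and taking $\sup_{|\xi|\le R}$, Lemma \ref{GL} yields
\begin{equation}
\|\p(t,\cdot)-\tilde{\p}(t,\cdot)\|_{\alpha,R}\le \e^{-\gamma_{2}t}\|\p_{0}-\tilde{\p}_{0}\|_{\alpha,R}+\gamma_{e,\alpha}\int_{0}^{t}\e^{-\gamma_{2}(t-\tau)}\|\p(\tau,\cdot)-\tilde{\p}(\tau,\cdot)\|_{\alpha,R}\,\rd\tau.
\end{equation}
Multiplying by $\e^{\gamma_{2}t}$ and applying Gronwall's inequality to $h(t)=\e^{\gamma_{2}t}\|\p(t,\cdot)-\tilde{\p}(t,\cdot)\|_{\alpha,R}$ gives $h(t)\le h(0)\e^{\gamma_{e,\alpha}t}$, which rearranges to \eqref{stabilitycutoff} with $\lambda_{e,\alpha}=\gamma_{e,\alpha}-\gamma_{2}$. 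Taking $R=\infty$ recovers uniqueness in $\chi^{\alpha}$ as a byproduct.

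The main subtlety I expect is the verification that $\mathcal{T}$ keeps us inside $\mathcal{K}^{\alpha}$ for all times: the contraction step only controls differences, so one must separately confirm both the positive-definiteness (handled by the convex-combination argument above) and the finiteness of $\|\mathcal{T}[\p](t,\cdot)-1\|_{\alpha}$ (handled by another Gronwall applied to the estimate for $\mathcal{G}_{e}[\p]-\gamma_{2}$). Everything else is routine once Lemma \ref{GL} is in place.
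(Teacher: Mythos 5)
Your proposal is correct and follows essentially the same route as the paper: the Banach fixed point theorem applied to the mild formulation \eqref{mildsolution} using Lemma \ref{GL} for both the invariance of $\mathcal{K}^{\alpha}$ and the contraction, followed by a Gronwall argument for the stability estimate \eqref{stabilitycutoff}. The only cosmetic differences are that you use an exponentially weighted norm where the paper takes $T<1/\gamma_{e,\alpha}$ and continues, and an integral rather than differential form of Gronwall; both are equivalent, provided you note (as the paper does implicitly) that $\left|\xie^{\pm}\right|\leq\left|\xi\right|$ so that the $R$-restricted quasi-metric survives the application of Lemma \ref{GL}.
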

\begin{proof}
	\textit{(I) Proof of Existence and Uniqueness:}
	For fixed $ \p_{0}\in\K^{\alpha}$ and any $ \p\in\K^{\alpha} $, we're ready to apply the Banach fixed point theorem to the non-linear operator,
	\begin{equation}\label{P}
	\mathcal{P}[\p](t,\xi) \equiv \p_{0}(\xi) \e^{-\gamma_{2}t} + \int_{0}^{t} \e^{-\gamma_{2}(t-\tau)} \mathcal{G}_{e}[\p](\tau,\xi) \,\rd\tau.
	\end{equation}
	We prove the local existence and uniqueness by showing that operator $\mathcal{P} : \chi^{\alpha}_{T} \mapsto C( [0,T], \K^{\alpha} )$ has a unique fixed point in the space  $\chi^{\alpha}_{T} \subset C( [0,T], \K^{\alpha} )$ defined as
	\begin{equation}
	\chi^{\alpha}_{T} := \left\lbrace  \p\in C\left( [0,T],\K^{\alpha} \right) : \sup\limits_{t\in [0,T]}\left\|\p(t, \cdot)\right\|_{\alpha} < \infty \right\rbrace,
	\end{equation}
	which is a complete metric space with respect to the induced norm
	\begin{equation}
	\left\| \cdot \right\|_{\chi^{\alpha}_{T}}: =\sup\limits_{t\in [0,T]} \left\| \cdot \right\|_{\alpha}.
	\end{equation}
	(i) We need to show that, for any $ \p\in \chi^{\alpha}_{T} $ and every $ t\in\left[0,T\right] $, the function $ \mathcal{P}[\p](\cdot,\xi) \in \K^{\alpha}$, which means that $ \mathcal{P}[\p](t,\xi) $ is still continuous and positive definite: actually considering the Lemma \ref{GL}, we find that $ \mathcal{G}_{e}[\p](\tau,\cdot)$ is continuous and positive definite for every $ \tau\in\left[0,t\right] $, then $ \mathcal{P}[\p](t,\xi) \in \K^{\alpha}$ can directly follow the \cite[Lemma 3.5]{cannone2010infinite} (which implies that the linear combination with positive coefficients of positive definite functions is still a positive definite function), if one approximates the integral on the right hand side of \eqref{P} by finite sums with positive coefficients.
	
	Hence, for every $  \p\in \chi^{\alpha}_{T} $, by noticing the integration that $ \gamma_{2} \int_{0}^{t} \e^{-\gamma_{2}(t-\tau)} \rd \tau = 1 - \e^{-\gamma_{2}t} $, we rewrite the equation \eqref{P} as following 
	\begin{equation}
	\mathcal{P}[\p](t,\xi) - 1 = \left[\p_{0}(\xi)-1\right] \e^{-\gamma_{2}t} + \int_{0}^{t} \e^{-\gamma_{2}(t-\tau)} \left[\mathcal{G}_{e}[\p](\tau,\xi) - \gamma_{2}\right] \,\rd\tau.
	\end{equation}
	Furthermore, by the observation that $ \gamma_{2} = \mathcal{G}_{e}[1]$ as well as $ \e^{-\gamma_{2}(t-\tau)} \leq 1$ for every $ \tau\in\left[0,t\right] $, we obtain
	\begin{equation}
	\left| \mathcal{P}[\p](t,\xi) - 1 \right| \leq \left\| \p_{0} - 1 \right\|_{\alpha} \left| \xi \right|^{\alpha} + \gamma_{e,\alpha} \int_{0}^{t} \left\| \p(\tau,\xi) - 1 \right\|_{\alpha} \,\rd \tau \left| \xi \right|^{\alpha}.
	\end{equation} 
    After dividing the inequality above by $ |\xi|^{\alpha} $ and computing the supremum with respect to the variable $ \xi\in\mathbb{R}^{3} $ and $ t\in\left[0,T\right] $, we can finally prove that $ \mathcal{P}: \chi^{\alpha}_{T} \mapsto \chi^{\alpha}_{T} $ satisfying the following estimate:
    \begin{equation}
    \left\| \mathcal{P}[\p] - 1 \right\|_{\chi^{\alpha}_{T}} \leq \left\| \p_{0} - 1 \right\|_{\alpha} + \gamma_{e,\alpha} T \left\| \p - 1 \right\|_{\chi^{\alpha}_{T}} <\infty.
    \end{equation}
(ii) To prove that $ \mathcal{P}[\p](\cdot,\xi) \in \K^{\alpha}$ is a contraction in $\chi^{\alpha}_{T}$, we introduce another $ \mathcal{P}[\tilde{\p}](\cdot,\xi) \in \K^{\alpha} $, and make the subtraction between them. Then for the same initial datum $ \p_{0} $, we have, 
	\begin{equation}
	\begin{split}
	\left| \mathcal{P}[\p](t,\xi) - \mathcal{P}[\tilde{\p}](t,\xi) \right| \leq & \int_{0}^{t} \e^{-\gamma_{2}(t-\tau)} \left[\mathcal{G}_{e}[\p](\tau,\xi) - \mathcal{G}_{e}[\tilde{\p}](\tau,\xi) \right] \,\rd\tau\\
	\leq & \gamma_{e,\alpha} T \left\|\p - \tilde{\p}  \right\|_{\chi^{\alpha}_{T}} \left| \xi \right|^{\alpha}
	\end{split}
	\end{equation}
	where we utilize the Lemma \eqref{GL} in the last inequality. Consequently, after dividing the inequality above by $ |\xi|^{\alpha} $ with respect to the variable $ \xi\in\mathbb{R}^{3} $, we can obtain
	\begin{equation}
	\left\| \mathcal{P}[\p] - \mathcal{P}[\tilde{\p}] \right\|_{\chi^{\alpha}_{T}} \leq \gamma_{e,\alpha} T \left\|\p - \tilde{\p}  \right\|_{\chi^{\alpha}_{T}}.
	\end{equation}
	Combining (1) and (2), the Banach fixed point theorem provides the unique solution of \eqref{mildsolution} in the space $ \chi^{\alpha}_{T} $ provided that $ T < 1/\gamma_{e,\alpha} $.
	
	Note that finally we construct the unique solution on the time interval $ \left[ 0 ,T\right] $, where $ T $ is independent of the initial datum, therefore, by the continuation argument, we can extend the unique solution to $ \left[ T,2T\right] $ by choosing $ \p(T,\xi) $ as the initial datum. Consequently, repeating the same procedure, we manage to construct the unique solution on any finite time interval.
	
	\textit{(II) Proof of the Stability:}
	Starting from the function $ d\left(t,\xi\right) $ defined as following:
	\begin{equation}\label{d}
	d\left(t,\xi\right) := \frac{\p(t,\xi) - \tilde{\p}(t,\xi) }{|\xi|^{\alpha}},
	\end{equation}
	next, recalling equation \eqref{i} and the fact $ \p(t,0) = 1 $, we can obtain the equation satisfied by function $ d\left(t,\xi\right) $ after making subtraction between the equation \eqref{IBE} with respect to $ \p $ and $ \tilde{\p} $ separately:
	\begin{equation}
	\partial_{t} d\left(t,\xi\right) + \gamma_{2}d(t,\xi)= \int_{\Sd^{2}} b\left(\frac{\xi\cdot\sigma}{|\xi|}\right) \left[ \frac{\p(t,\xie^{+})\p(t,\xie^{-}) - \tilde{\p}(t,\xie^{+})\tilde{\p}(t,\xie^{-})}{|\xi|^{\alpha}} \right] \,\rd \sigma.
	\end{equation}
	Then, note that for $ |\xie^{+}| \leq R $ and $ |\xie^{-}| \leq R $, we have the inequality
	\begin{equation} \label{+2}
	\begin{split}
	&\left| \p(t,\xie^{+})\p(t,\xie^{-}) - \tilde{\p}(t,\xie^{+})\tilde{\p}(t,\xie^{-}) \right|\\
	\leq & \left| \p(t,\xie^{+})\p(t,\xie^{-}) - \tilde{\p}(t,\xie^{+})\p(t,\xie^{-}) + \tilde{\p}(t,\xie^{+})\p(t,\xie^{-}) - \tilde{\p}(t,\xie^{+})\tilde{\p}(t,\xie^{-}) \right|\\
	\leq & \left| \p(t,\xie^{+}) - \tilde{\p}(t,\xie^{+}) \right| \left| \p(t,\xie^{-}) \right| + \left| \p(t,\xie^{-}) - \tilde{\p}(t,\xie^{-}) \right| \left| \tilde{\p}(t,\xie^{+}) \right|\\
	\leq & \left\| \p(t,\cdot)-\tilde{\p}(t,\cdot) \right\|_{\alpha, R} \left(  \left| \xie^{+} \right|^{\alpha} + \left| \xie^{-} \right|^{\alpha} \right),
	\end{split}
	\end{equation}
	as a result, we further deduce the inequality satisfied by $ d(t,\xi) $, 
	\begin{equation}\label{s1}
	\partial_{t} d\left(t,\xi\right) + \gamma_{2}d(t,\xi) \leq \gamma_{e,\alpha} \left\| \p(t,\cdot)-\tilde{\p}(t,\cdot) \right\|_{\alpha, R}
	\end{equation}
	with the constants $ \gamma_{2} $ and $ \gamma_{e,\alpha} $. Moreover, we're able to solve the inequality \eqref{s1} by multiplying $ \e^{\gamma_{2}t} $ to both sides of it,
	\begin{equation}
	\partial_{t}\left( \e^{\gamma_{2}t} d(t,\xi) \right) \leq \gamma_{e,\alpha} \e^{\gamma_{2}t} \left\| \p(t,\cdot)-\tilde{\p}(t,\cdot) \right\|_{\alpha, R}
	\end{equation}
	and integrating the time variable from $ 0 $ to $ t $, hence,
	\begin{equation}
	\e^{\gamma_{2}t} d(t,\xi) \leq d(0,\xi) + \gamma_{e,\alpha}  \int_{0}^{t} \e^{\gamma_{2}s} \left\| \p(s,\cdot)-\tilde{\p}(s,\cdot) \right\|_{\alpha, R} \rd s.
	\end{equation}
	Finally, we compute the supremum with respect to $ |\xi| \leq R $,
	\begin{equation}
	\e^{\gamma_{2}t}\left\| \p(t,\cdot)-\tilde{\p}(t,\cdot) \right\|_{\alpha, R} \leq \left\| \p_{0}-\tilde{\p}_{0} \right\|_{\alpha, R} + \gamma_{e,\alpha} \int_{0}^{t} \e^{\gamma_{2}s} \left\| \p(s,\cdot)-\tilde{\p}(s,\cdot) \right\|_{\alpha, R} \rd s,
	\end{equation}
	and apply the integral form of Gr\" onwall's inequality to obtain
	\begin{equation}
	\left\| \p(t,\cdot)-\tilde{\p}(t,\cdot) \right\|_{\alpha, R} \leq \left\| \p_{0}-\tilde{\p}_{0} \right\|_{\alpha, R} \e^{\left(\gamma_{e,\alpha}-\gamma_{2}\right)t},
	\end{equation}
	where note that$ \gamma_{e,\alpha} - \gamma_{2} = \lambda_{e,\alpha} $ under cutoff assumption. In fact, though here the stability result \eqref{stabilitycutoff} is proved in the case of integrable collision kernel, but it can be generalized for the solutions to initial value problem \eqref{IBE}-\eqref{initial} with any non-cutoff collision kernel satisfying \eqref{noncutoffb} in the next section \ref{sec:noncutoff}.
\end{proof}

\section{Existence and Uniqueness with Non-Cutoff assumption}
\label{sec:noncutoff}
In this section, we complete the proof of the well-posedness of solutions to the initial value problem \eqref{IBE}-\eqref{initial} with non-cutoff assumption on the collision kernel, which implies that 
\begin{equation}\label{noncutoff}
\int_{\Sd^{2}} b\left(\frac{\xi\cdot\sigma}{|\xi|}\right) \ \rd \sigma = \infty,
\end{equation}
more precisely, $ b $ satisfies the singularity condition \eqref{noncutoffb}.

\subsection{Technical Lemma of the Non-Cutoff Collision Operator}
\label{subsec:technicalnoncutoff}
In fact, our strategy is to construct the solutions to \eqref{IBE}-\eqref{initial} with non-cutoff collision kernel based on compactness argument, hence, we first consider the increasing sequence of bounded collision kernels,
\begin{equation}
b_{n}(s) \equiv \min \left\lbrace b(s), n \right\rbrace \leq b(s), \quad n\in\mathbb{N},
\end{equation}
and, for every $ \alpha\in \left[ \alpha_{0}, 2 \right] $, the sequence of $ \p_{n} \in C\left( \left[0,\infty\right) ,\K^{\alpha} \right) $ of corresponding solutions to \eqref{IBE}-\eqref{initial} with cutoff collision kernels $ b_{n} $ and with the same initial datum $ \p_{0} \in \K^{\alpha} $. Furthermore, under the non-cutoff assumption \eqref{noncutoffb}, we have 
\begin{equation}
\lambda_{e,\alpha,n} \equiv \int_{\Sd^{2}} b_{n}\left(\frac{\xi\cdot\sigma}{|\xi|}\right) \left( \frac{|\xie^{+}|^{\alpha}+|\xie^{-}|^{\alpha}}{|\xi|^{\alpha}} -1 \right) \ \rd \sigma \leq \lambda_{e,\alpha},
\end{equation}
therefore, by the stability result \eqref{stabilitycutoff} with $ R = \infty $, it follows that
\begin{equation}\label{sta}
\left\| \p_{n}(t,\cdot)-1 \right\|_{\alpha} \leq \e^{\lambda_{e,\alpha,n}t} \left\| \p_{0}-1 \right\|_{\alpha} \leq \e^{\lambda_{e,\alpha}t} \left\| \p_{0}-1 \right\|_{\alpha},
\end{equation}
for all $ t\geq 0 $.

Before the specific proof of well-posedness theorem \ref{noncutoffwellposedness}, we give the following Lemma \eqref{uniequi} about the properties satisfied by the sequence of solution  $ \p_{n} \in C\left( \left[0,\infty\right) ,\K^{\alpha} \right) $,
\begin{lemma}\label{uniequi}
	Assume that $ e\in (0,1] $ and the collision kernel $ b $ satisfies the non-cutoff assumption \eqref{noncutoffb} with some $ \alpha_{0}\in\left[0,2\right] $. Let $ \alpha\in\left[\alpha_{0}, 2 \right] $, then the sequence of solutions $ \left\lbrace \p_{n}\right\rbrace^{\infty}_{n=1} \subset C\left(  \left[0,\infty\right), \K^{\alpha} \right) $ is bounded in $ C\left( \mathbb{R}^{3} \times \left[0,\infty\right) \right) $ and equicontinuous.
\end{lemma}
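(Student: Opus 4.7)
The plan is to treat boundedness and equicontinuity as two separate steps, with equicontinuity further decomposed into its spatial and temporal parts.

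\textbf{Boundedness.} This is essentially free: each $\p_{n}(t,\cdot)$ is a characteristic function (as produced by Theorem \ref{local}), so $\left|\p_{n}(t,\xi)\right|\leq \p_{n}(t,0)=1$ uniformly in $n$, $t$ and $\xi$. So $\{\p_n\}$ is uniformly bounded in $C(\mathbb{R}^{3}\times[0,\infty))$.

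\textbf{Equicontinuity --- spatial part.} Fix any compact $\Omega\subset\mathbb{R}^{3}\times[0,T]$. I would control $\left|\p_{n}(t,\xi)-\p_{n}(t,\eta)\right|$ via Lemma \ref{L1}, estimate \eqref{l3}, which gives
\begin{equation*}
\left|\p_{n}(t,\xi)-\p_{n}(t,\eta)\right| \leq \left\|\p_{n}(t,\cdot)-1\right\|_{\alpha}\bigl(4|\eta|^{\alpha/2}|\xi-\eta|^{\alpha/2} + |\xi-\eta|^{\alpha}\bigr).
\end{equation*}
The key input is that the a priori bound \eqref{sta} furnishes a uniform-in-$n$ estimate $\left\|\p_{n}(t,\cdot)-1\right\|_{\alpha}\leq \e^{\lambda_{e,\alpha}T}\left\|\p_{0}-1\right\|_{\alpha}$ on $[0,T]$, so the right-hand side above is controlled by a quantity depending only on $T$, $|\eta|$ and $|\xi-\eta|$, and vanishes as $|\xi-\eta|\to 0$ uniformly on $\Omega$.

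\textbf{Equicontinuity --- temporal part.} Here I would use the equation \eqref{IBE} itself in integral form together with Lemma \ref{wellsym}. For $0\leq s\leq t\leq T$,
\begin{equation*}
\p_{n}(t,\eta)-\p_{n}(s,\eta) = \int_{s}^{t}\int_{\Sd^{2}} b_{n}\!\left(\tfrac{\eta\cdot\sigma}{|\eta|}\right)\!\bigl[\p_{n}(\tau,\eta_{e}^{+})\p_{n}(\tau,\eta_{e}^{-})-\p_{n}(\tau,\eta)\p_{n}(\tau,0)\bigr]\rd\sigma\,\rd\tau.
\end{equation*}
Since $b_{n}\leq b$, applying Lemma \ref{wellsym} with the \emph{full} kernel $b$ (whose non-cutoff integral is finite by \eqref{noncutoffb}) to the inner integral yields a bound $C_{e}\bigl[\int_{0}^{\pi/2}\sin^{\alpha}(\theta/2)b(\cos\theta)\sin\theta\,\rd\theta\bigr]\left\|1-\p_{n}(\tau,\cdot)\right\|_{\alpha}|\eta|^{\alpha}$ which, combined again with \eqref{sta}, gives $\left|\p_{n}(t,\eta)-\p_{n}(s,\eta)\right|\leq C(T,\alpha,e,b)|\eta|^{\alpha}|t-s|$. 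Together with the spatial estimate and the triangle inequality, this yields equicontinuity of $\{\p_{n}\}$ on any compact $\Omega$.

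\textbf{Main obstacle.} The only delicate point is securing a bound on $\partial_{t}\p_{n}$ that is uniform in $n$ even though the cutoff kernels $b_{n}$ approximate a singular one. This is precisely where Lemma \ref{wellsym} is essential: its conclusion is \emph{monotone} in the kernel and is controlled by the singular integral $\int_{0}^{\pi/2}\sin^{\alpha}(\theta/2)b(\cos\theta)\sin\theta\,\rd\theta$, which is finite thanks to the non-cutoff assumption \eqref{noncutoffb} and independent of $n$. Without this geometric estimate one would only obtain a bound involving $\int b_{n}\,\rd\sigma\to\infty$, destroying equicontinuity; exploiting the cancellation encoded in the $(\xi_{e}^{+},\xi_{e}^{-},\xi)$ geometry is therefore the heart of the argument.
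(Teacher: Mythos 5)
Your proposal is correct and follows essentially the same three-step structure as the paper's proof: the characteristic-function bound $|\p_n(t,\xi)|\leq\p_n(t,0)=1$ for uniform boundedness, Lemma \ref{wellsym} combined with the uniform-in-$n$ stability estimate \eqref{sta} for temporal equicontinuity (with the same key observation that $b_n\leq b$ makes the singular integral bound independent of $n$), and Lemma \ref{L1} for spatial equicontinuity. The only cosmetic difference is that you invoke estimate \eqref{l3} for the spatial modulus of continuity where the paper uses \eqref{l1} together with $1-\Re\,\p_n(t,\xi-\eta)\leq\left\|\p_n(t,\cdot)-1\right\|_{\alpha}|\xi-\eta|^{\alpha}$; both yield the same uniform H\"older-type control.
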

\begin{proof}
	\textit{Step I: Uniform Bound:} According to Theorem \ref{local}, the sequence of solution $ \p_{n}\left(t,\cdot\right) \in \K^{\alpha} $ under cutoff assumption are all chacteristic function for every $ t\geq 0 $, hence, we have 
	\begin{equation}
	\left| \p_{n}\left( t,\xi \right) \right| \leq \p_{n}(t,0) = 1,
	\end{equation}
	for all $ \xi\in\mathbb{R}^{3} $ and $ t\geq 0 $, which illustrates the uniform bound of $ \p_{n}\left( t,\xi \right) $.
	
	\textit{Step II: Continuity with respect to time variable $ t $.} We utilize the equation satisfied by $ \p_{n} $ as well as Lemma \ref{well} to obtain that
	\begin{equation}\label{step2}
	\begin{split}
	\left| \partial_{t} \p_{n}(t,\xi) \right| \leq & \int_{\Sd^{2}} b_{n}\left(\frac{\xi\cdot\sigma}{|\xi|}\right) \left| \varphi(t,\xie^{+})\varphi(t,\xie^{-}) - \varphi(t,\xi)\varphi(t,0) \right| \,\rd\sigma\\
	\leq & C_{e} \left\| \p_{n}(t,\cdot)- 1 \right\|_{\alpha} \left| \xi \right|^{\alpha} \left[ \int_{0}^{\frac{\pi}{2}} \sin^{\alpha}\left(\frac{\theta}{2}\right)b_{n}(\cos\theta) \sin\theta \,\rd \theta \right]\\
	\leq & C_{e} \e^{\lambda_{e,\alpha}t} \left\| \p_{0}-1 \right\|_{\alpha} \left| \xi \right|^{\alpha}\left[ \int_{0}^{\frac{\pi}{2}} \sin^{\alpha}\left(\frac{\theta}{2}\right)b_{n}(\cos\theta) \sin\theta \rd \theta \right],
	\end{split}
	\end{equation}
	for all $ \xi\in\mathbb{R}^{3} $ and $ t\geq 0 $, where we apply the stability result \eqref{sta}  in the last inequality. 
	
	\textit{Step III: Continuity with respect to fourier variable $ \xi $.} To prove this, it suffices to apply Lemma \ref{L1}, combined with Lemma \ref{reim} to obtain the following estimate:
	\begin{equation}
	\begin{split}
	\left| \p_{n} (t,\xi) - \p_{n}(t,\eta) \right| \leq & \sqrt{2\left[ 1- \Re\p_{n}\left(t,\xi-\eta\right)\right]}\\
	\leq & \sqrt{2}\left| \xi-\eta \right|^{\frac{\alpha}{2}} \left\| \p_{n}(t,\cdot) - 1 \right\|_{\alpha}^{\frac{1}{2}}\\
	\leq & \sqrt{2}\left| \xi-\eta \right|^{\frac{\alpha}{2}} \e^{\frac{\lambda_{e,\alpha}}{2}} \left\| \p_{0}-1 \right\|_{\alpha}^{\frac{1}{2}},
	\end{split}
	\end{equation}
	for all $ t\geq 0 $, where the stability result \eqref{sta} is used in the last inequality and the right-hand side is independent of $ n $.
\end{proof}

\subsection{Proof of Theorem \ref{noncutoffwellposedness}}
\label{subsec:proofnoncutoff}
Now in this subsection, we will present a complete proof of Theorem \ref{noncutoffwellposedness}, where the existence is guaranteed by the standard compactness argument and and uniqueness is given based on the stability estimate without cutoff assumption. 
\begin{proof}
	\textit{(I) Proof of  Existence:}
	According to the Ascoli-Arzel\` a theorem and the Cantor diagnal argument, we can deduce that there exists a subsequence of solutions $ \left\lbrace \p_{n_{k}}\right\rbrace_{n_{k}\in\mathbb{N}}  $ converging uniformly in any compact set of $ \mathbb{R}^{3}\times\left[0,\infty\right) $ based on the Lemma \ref{uniequi}. 
	
	Then, we need to prove the limit of functions $ \left\lbrace \p_{n_{k}}\right\rbrace_{n_{k}\in\mathbb{N}}  $,
	\begin{equation}\label{limit1}
	\p(t,\xi) = \lim\limits_{n_{k}\rightarrow\infty} \p_{n_{k}}(t,\xi)
	\end{equation}
	is the solution to the initial value problem \eqref{IBE}-\eqref{initial} under non-cutoff assumption \eqref{noncutoffb}. Note that $ \p\left(t,\cdot\right) $ is a characteristic function for every $ t\geq 0 $, as the pointwise limit of characteristic functions.\\
	Here we can apply the Lebesgue dominated convergence theorem to take the limit $ n_{k}\rightarrow\infty $ in the Boltzmann collision operator,
	\begin{equation}\label{B}
	\int_{\Sd^{2}} b_{n_{k}}\left(\frac{\xi\cdot\sigma}{|\xi|}\right) \left[ \p_{n_{k}}(t,\xie^{+})\p_{n_{k}}(t,\xie^{-}) - \p_{n_{k}}(t,\xi)\p_{n_{k}}(t,0) \right] \,\rd\sigma
	\end{equation}
	which, according to the calculation \eqref{step2} in the proof of Lemma \ref{uniequi}, can be controlled by the integrable function as following:
	\begin{equation}
	4 \e^{\lambda_{e,\alpha}t} \left\| \p_{0}-1 \right\|_{\alpha}  b\left(\frac{\xi\cdot\sigma}{|\xi|}\right) \left|\xie^{+}\right|^{\frac{\alpha}{2}}\left|\xie^{-}\right|^{\frac{\alpha}{2}}.
	\end{equation}
	On the other hand, since the Boltzmann collision operator in \eqref{B} converges uniformly on every compact subset of $ \mathbb{R}^{3} \times\left[0,\infty\right) $, there exists a continuous function $ \varsigma = \varsigma(t,\xi) $ such that $ \partial_{t}\p_{n_{k}} \rightarrow \varsigma $ as $ n_{k}\rightarrow\infty $. Meanwhile, considering the limit relation \eqref{limit1}, we immediately conclude that $ \varsigma = \partial_{t}\p $. Hence, the limit function $ \p(t,\xi) $ is a solution to the initial value problem \eqref{IBE}-\eqref{initial}. \\
	Finally, to show the limit function $ \p(\cdot,\xi) \in \K^{\alpha} $, it suffices to pass to the limit $ n_{k}\rightarrow\infty $  in the stability result \eqref{sta} in the following equivalent way
	\begin{equation}
	\frac{\left|  \p_{n_{k}}(t,\xi) - 1\right|}{|\xi|^{\alpha}} \leq \e^{\lambda_{e,\alpha}t} \left\| \p_{0}-1 \right\|_{\alpha} 
	\end{equation}
	for all $ \xi\in\mathbb{R}^{3} /\left\lbrace 0 \right\rbrace  $ and $ t\geq 0 $.
	
	\textit{(II) Proof of the Stability and Uniqueness:} As for the uniqueness of the solution we construct above, if we consider two sequences of solution $ \left\lbrace \p_{n}\right\rbrace_{n\in\mathbb{N}} $ and $ \left\lbrace \tilde{\p}_{n}\right\rbrace_{n\in\mathbb{N}} $ to the equation \eqref{IBE} with the cutoff kernel $ b_{n} $ as well as corresponding to the initial condition $ \p_{0} $ and $ \tilde{\p}_{0} $, respectively.\\
	By the compactness argument from Lemma \ref{uniequi}, there exists a subsequence $ n_{k}\rightarrow\infty $ and the solution to \eqref{IBE} by taking limit in the sense that
	\begin{equation}
	\p(t,\xi) = \lim\limits_{n_{k}\rightarrow\infty} \p_{n_{k}}(t,\xi) \quad \text{and} \quad \tilde{\p}(t,\xi) = \lim\limits_{n_{k}\rightarrow\infty} \tilde{\p}_{n_{k}}(t,\xi).
	\end{equation}
	Thus, in order to prove the uniqueness, we need to check the stability results under non-cutoff assumption: similar to the procedures under cutoff assumption, we have the following estimate by introducing the same $ d(t,\xi) $ as in \eqref{d} and dividing the integral domain of $ \sigma $ into four parts,
	\begin{equation}
	\begin{split}
	\partial_{t} d\left(t,\xi\right) = & \int_{\Sd^{2}} b\left(\frac{\xi\cdot\sigma}{|\xi|}\right) \left[ \frac{\p(t,\xie^{+})\p(t,\xie^{-}) - \tilde{\p}(t,\xie^{+})\tilde{\p}(t,\xie^{-})}{|\xi|^{\alpha}} - d(t,\xi) \right] \,\rd \sigma \\
	= & \int_{\Sd^{2}\cap \Omega^{c}_{\epsilon} } b\left(\frac{\xi\cdot\sigma}{|\xi|}\right) \left[ \frac{\p(t,\xie^{+})\p(t,\xie^{-}) - \tilde{\p}(t,\xie^{+})\tilde{\p}(t,\xie^{-})}{|\xi|^{\alpha}} \right]\rd \sigma - \left[\int_{\Sd^{2}\cap \Omega^{c}_{\epsilon} } b\left(\frac{\xi\cdot\sigma}{|\xi|}\right) \,\rd \sigma \right] d(t,\xi) \\
	+ & \int_{\Sd^{2}\cap \Omega_{\epsilon} } b\left(\frac{\xi\cdot\sigma}{|\xi|}\right) \left[ \frac{\p(t,\xie^{+})\p(t,\xie^{-}) - \tilde{\p}(t,0)\tilde{\p}(t,\xi)}{|\xi|^{\alpha}} \right]\,\rd \sigma \\
	- & \int_{\Sd^{2}\cap \Omega_{\epsilon} } b\left(\frac{\xi\cdot\sigma}{|\xi|}\right) \left[ \frac{\tilde{\p}(t,\xie^{+})\tilde{\p}(t,\xie^{-}) - \p(t,0)\p(t,\xi) }{|\xi|^{\alpha}} \right] \,\rd \sigma \\
	:= & I_{e,\epsilon} (t,\xi) - \gamma_{\epsilon}d(t,\xi) + R_{e,\p,\epsilon}(t,\xi) - R_{e,\tilde{\p},\epsilon}(t,\xi), 
	\end{split} % h change to d
	\end{equation}
	where $ \Omega_{\epsilon} $ ($ \Omega^{c}_{\epsilon} $ denotes its complement) is defined as 
	\begin{equation}
	\Omega_{\epsilon} := \Omega_{\epsilon}(\xi) = \left\lbrace \sigma\in\Sd^{2}; 1 - \frac{\xi}{|\xi|}\cdot\sigma \leq 2\left(\frac{\epsilon}{\pi}\right)^{2}\right\rbrace,
	\end{equation}
	for any $ \epsilon >0 $ and then $ \gamma_{\epsilon} $ can represented as 
	\begin{equation}
	\gamma_{\epsilon} = 2\pi \int_{\left[0,\frac{\pi}{2}\right]\cap \left\lbrace \sin\frac{\theta}{2} > \frac{\epsilon}{\pi} \right\rbrace } b(\cos\theta) \sin\theta \,\rd\theta \rightarrow \infty,
	\end{equation}
	as $ \epsilon\rightarrow 0^{+} $. Let $ R>0 $ and then with the help of \eqref{+2}, we have, for any $ |\xi| \leq R $,
	\begin{equation}
	\left|\frac{\p(t,\xie^{+})\p(t,\xie^{-}) - \tilde{\p}(t,\xie^{+})\tilde{\p}(t,\xie^{-})}{|\xi|^{\alpha}}\right| \leq \left\| \p(t,\cdot)-\tilde{\p}(t,\cdot) \right\|_{\alpha, R} \frac{\left| \xie^{+} \right|^{\alpha} + \left| \xie^{-} \right|^{\alpha}}{|\xi|^{\alpha}},
	\end{equation}%\left(\sup_{|\xi|\leq R}\left|d(\xi,t)\right|\right)
	combined the fact that $ \left| \xie^{\pm} \right| \leq \left|\xi\right| $, we further obtain,
	\begin{equation}
	\left|I_{e,\epsilon} (t,\xi)\right| \leq \gamma_{e,\alpha,\epsilon} \left\| \p(t,\cdot) -\tilde{\p}(t,\cdot) \right\|_{\alpha, R} \leq 2\gamma_{\alpha,\epsilon} \left\| \p(t,\cdot) -\tilde{\p}(t,\cdot) \right\|_{\alpha, R},
	\end{equation}
	where 
	\begin{equation}
	\gamma_{\alpha,\epsilon} = 2\pi \int_{\left[0,\frac{\pi}{2}\right]\cap \left\lbrace \sin\frac{\theta}{2} > \frac{\epsilon}{\pi} \right\rbrace } b(\cos\theta) \left(\sin^{\alpha}\frac{\theta}{2} + \cos^{\alpha}\frac{\theta}{2}\right) \sin\theta \,\rd\theta < \infty.
	\end{equation}
	Since the solutions $ \p(t,\xi), \tilde{\p}(t,\xi) \in C\left(  \left[0,\infty\right), \K^{\alpha} \right)$, it follows that for any fixed $ T>0 $,
	\begin{equation}
	\sup_{t\in(0,T],|\xi|\leq R} \left(\left|R_{e,\p,\epsilon}(t,\xi)\right| + \left|R_{e,\tilde{\p},\epsilon}(t,\xi)\right|\right) = r_{\epsilon} \rightarrow 0,
	\end{equation}
	as $ \epsilon\rightarrow 0^{+} $, which can be obtained by the following estimate with the help of Lemma \ref{wellsym},
	\begin{equation}
	\begin{split}
	\left|R_{e,\epsilon,\p}(t,\xi)\right| =& \left|\int_{\Sd^{2}\cap \Omega_{\epsilon}} b\left(\frac{\xi\cdot\sigma}{|\xi|}\right) \frac{\left[\p(t,\xie^{+})\p(t,\xie^{-}) - \p(t,\xi) \right]}{|\xi|^{\alpha}} \rd\sigma \right|\\
	\leq & C_{e} \left\|1-\p(t,\cdot) \right\|_{\alpha} \int_{0}^{\epsilon} \sin^{\alpha}\left(\frac{\theta}{2}\right)b(\cos\theta) \sin\theta \,\rd \theta \rightarrow 0
	\end{split}
	\end{equation}
	as $\epsilon\rightarrow 0^{+}$.\\
	Hence, we obtain the differential inequality of $ d(t,\xi) $, for any $ |\xi|\leq R $, 
	\begin{equation}
       \left|\partial_{t}d(t,\xi) + \gamma_{\epsilon}d(t,\xi)\right| \leq \gamma_{e,\alpha,\epsilon} \left\| \p(t,\cdot) -\tilde{\p}(t,\cdot) \right\|_{\alpha, R} + r_{\epsilon},
	\end{equation}
	and furthermore, by computing the supremum with respect to $ |\xi| \leq R $, we have 
	\begin{equation}
	\left\| \p(t,\cdot)-\tilde{\p}(t,\cdot) \right\|_{\alpha, R} \leq \e^{(\gamma_{e,\alpha,\epsilon}-\gamma_{\epsilon})t} \left\| \p_{0}-\tilde{\p}_{0}  \right\|_{\alpha, R} + \frac{r_{\epsilon}}{\gamma_{e,\alpha,\epsilon} - \gamma_{\epsilon}}\left[ \e^{(\gamma_{e,\alpha,\epsilon}-\gamma_{\epsilon})t} -1 \right].
	\end{equation}
	By taking the limit $ \epsilon\rightarrow 0 $ and letting $ R\rightarrow \infty $, we finally prove the stability result under non-cutoff assumption,
	\begin{equation}
	\left\| \p(t,\cdot)-\tilde{\p}(t,\cdot) \right\|_{\alpha} \leq \e^{\lambda_{e,\alpha}t} \left\| \p_{0}-\tilde{\p}_{0} \right\|_{\alpha},
	\end{equation}
	which then, implies the uniqueness of solution to \eqref{IBE}-\eqref{initial} in the space $ C \left( \left[0,\infty\right), \K^{\alpha} \right) $.

%	Furthermore, by using the stability results \eqref{sta} under the cutoff assumption, we obtain that
%	\begin{equation}
%	\frac{\left|  \p_{n_{k}}(\xi,t) - \tilde{\p}_{n_{k}}(\xi,t) \right|}{|\xi|^{\alpha}} \leq \e^{\lambda_{e,\alpha}t} \left\| \p_{0}-\tilde{\p}_{0} \right\|_{\alpha} 
%	\end{equation}
%	for all $ \xi\in\mathbb{R}^{3}/\left\lbrace 0 \right\rbrace  $ and $ t\geq 0 $. So far, by passing the limit $ {n_{k}} \rightarrow \infty $, we finally prove the stability results under non-cutoff assumption,
%	\begin{equation}
%	\left\| \p(t)-\tilde{\p}(t) \right\|_{\alpha} \leq \e^{\lambda_{e,\alpha}t} \left\| \p_{0}-\tilde{\p}_{0} \right\|_{\alpha}
%	\end{equation}
%	which then, implies the uniqueness of solution to \eqref{IBE}-\eqref{initial} in the space $ C \left( \left[0,\infty\right), \K^{\alpha} \right) $.
\end{proof}

\section{Large-time Asymptotic Behavior to Self-similar Solutions for the Inelastic Boltzmann Equation}
\label{sec:selfsimilar}
\subsection{Self-similar Solutions for the Inelastic Boltzmann Equation}
In this subsection, we will present the self-similar solution for the inelastic equation \eqref{IBE} in three-dimension, which may have infinite energy. Starting from introducing the isotropic function following the similar strategy as \cite{BC2002selfsimilarapplication}, 
\begin{equation}
u(t,x) = \p(t,|\xi|), \quad \text{where} \quad x = \frac{\left| \xi \right|^2}{2},
\end{equation}
together with the change of variable, we can reduce the original equation \eqref{IBE} to 
\begin{equation}\label{u1}
\partial_{t} u(t,x) = \int_{0}^{1} G\left(s\right) \left\lbrace u\left[t, \mathbf{a}(s) x \right] u\left[ t, \mathbf{b}(s) x \right] - u(t,0)u\left( t,x \right)\right\rbrace \,\rd s
\end{equation}
where 
\begin{equation}\label{asbs}
\mathbf{a}(s) = \ap^{2}s, \quad \mathbf{b}(s) = 1 - \ap\left(1+\am\right)s, \quad G(s) = \pi b\left(1-s\right)
\end{equation}
for any $ s\in\left(0,1\right) $, meanwhile, noting that
\begin{equation}
u(t,0) = \p(t,0) = \int_{\bR^{3}} f(v) \,\rd v = 1
\end{equation}
and the typical behaviour of characteristic functions of the infinite energy solution near the origin is described by the following asymptotic formula:
\begin{equation}
u(\cdot,x) = 1 - k x^{p} + O\left(x^{p+\epsilon}\right), \quad x\rightarrow 0^{+}, \quad 0<p = \frac{\alpha}{2}\leq 1, 
\end{equation}
with some $ k > 0 $ and $ \epsilon > 0 $. Considering the usual class of rapidly decreasing functions with $ p=1 $, we expect to extend this type of functions to real positive values of $ p $ by letting:
\begin{equation}\label{uform}
u(t,x) = \sum_{n=0}^{\infty} u_{n}(t) \frac{x^{np}}{\Gamma(np+1)}, \quad p > 0.
\end{equation}
In fact, such solutions for $ p > 1 $, which imply finite energy, have been considered in \cite{Bobylev1976invariant}, and then for $ 0 < p < 1 $, if one seeks for the solution in the form of \eqref{uform} and substitute the series of \eqref{uform} into equation \eqref{u1}, then the first two coefficients can be found immediately:
\begin{equation}
u_{0}(t) = 1, \quad u_{1}(t) = u_{1}(0) \e^{\lambda_{e,p}t} 
\end{equation}
where $ \lambda_{e,p} $ has the same form as \eqref{lamdadefinition1} after changing of variable,
\begin{equation}\label{lamdaep}
\lambda_{e,p} = \lambda_{e}(p) = \int_{0}^{1} G(s) \left[ \mathbf{a}(s)^{p} + \mathbf{b}(s)^{p} - 1\right] \,\rd s, \quad 0 <p <1.
\end{equation}
that is to say, the solution in the form of \eqref{uform} with $ 0 <p <1 $ has asymptotic behaviour for small enough $ x $ like:
\begin{equation}
u(t,x) \sim 1 - k x^{p} \e^{\lambda_{e,p}t} = 1 - k \left( x \e^{\mu_{e,p}t} \right)^{p}.
\end{equation}

Following the analysis above, we are now ready to state the next Proposition \ref{theoremx}, where the existence of the self-similar solution $ \Psi^{(p)}\left( x\e^{\mu_{e,p} t }\right) $ with respect to $ u(t,x) $ is presented; moreover, another special form solution $ \psi\left(t, x\e^{\mu_{e,p}t} \right) $ to \eqref{u1} with certain initial datum has been formulated as well, the limit of which is exactly the self-similar profile $ \Psi $.

\begin{proposition}\label{theoremx}
	Assume that $ e\in(0,1] $ and the scaled collision kernel $ 0\leq G(s) \leq k_{e}s^{-(1+\beta)} $ for some constants $ k_{e}>0 $ and $ 0 < \beta < 1 $, then for the initial condition as following,
	\begin{equation}\label{uinitial}
	\p_{0}(\xi)=u(0,x) = \sum_{n=0}^{\infty} u_{n}(0)\frac{x^{np}}{\Gamma\left(np+1\right)} \quad \text{with} \quad u_{0}(0) = 1, u_{1}(0) \neq 0,
	\end{equation} 
	where $ x = \left| \xi \right|^{2}/2 $ and $ p = \alpha/2 $, there exists a special unique solution $ u_{s}(t,x) $ to \eqref{u1} in the form
	\begin{equation}
	u_{s}(t,x) = \psi\left( t,x\e^{\mu_{e,p}t} \right),
	\end{equation}
	where $ \psi(t,x) $ is given by the series \eqref{psiform} with \eqref{psin} and $ \mu_{e,p} $ is defined as \eqref{muep} below.\\
	%	\begin{equation}\label{muep}
	%	\mu_{e,p} = \frac{1}{p} \int_{0}^{1} G(s) \left[ \mathbf{a}(s)^{p} + \mathbf{b}(s)^{p} - 1 \right] \rd s.
	%	\end{equation}
	Furthermore, for any constant $ \mu_{e,p} $ defined as \eqref{muep} above with $ \beta < p < 1 $, there exists a self-similar solution $ u(t,x) = \Psi^{(p)}\left( x\e^{\mu_{e,p} t }\right) $ given by the following series:
	\begin{equation}\label{Psiconclusion}
	\Psi^{(p)} (x) = \sum_{n=0}^{\infty} \Psi_{n}^{(p)} \frac{x^{np}}{\Gamma\left(np+1\right)} \quad \text{with} \quad \sup_{n = 1,2,3,...} \left| \Psi_{n}^{(p)} \right|^{\frac{1}{n}} < \infty,
	\end{equation}
	where $ \Psi_{0}^{(p)} = 1 $, $ \Psi_{1}^{(p)}\neq 0 $ can be chosen arbitrarily and $ \Psi_{n}^{(p)}(n = 2,3,... )$ are given by the recurrence formula as \eqref{oreccerence2}, such that 
	\begin{equation}
	\lim\limits_{t\rightarrow\infty} \psi(t,x) = \Psi^{(p)} (x),
	\end{equation}
	for any $ x \geq 0 $, provided $ \Psi_{1}^{(p)} = \psi_{1}(0) $.
\end{proposition}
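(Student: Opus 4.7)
The plan is to construct the solution as a formal power series in $x^p$, reducing the integrodifferential equation \eqref{u1} to an infinite triangular ODE system for the time-dependent coefficients. First I would substitute the ansatz $u(t,x)=\sum_{n\ge 0}u_n(t)\,x^{np}/\Gamma(np+1)$ with $u_0\equiv 1$ into \eqref{u1} and match coefficients of $x^{np}$, obtaining
$$u_n'(t)=\lambda_{e,np}\,u_n(t)+\sum_{\substack{k+m=n\\k,m\ge 1}}c_{k,m}\,u_k(t)u_m(t),$$
where $\lambda_{e,np}$ is the quantity in \eqref{lamdaep} (arising by absorbing the formally divergent $-u_n(t)\int_0^1 G(s)\,\rd s$ together with the $m=0$ and $k=0$ boundary terms of the convolution) and $c_{k,m}=\frac{\Gamma(np+1)}{\Gamma(kp+1)\Gamma(mp+1)}\int_0^1 G(s)\mathbf{a}(s)^{kp}\mathbf{b}(s)^{mp}\,\rd s$. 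All these integrals are finite under $G(s)\le k_e s^{-1-\beta}$ and $\beta<p<1$, using $\mathbf{a}(s)^{kp}=O(s^{kp})$ with $kp\ge p>\beta$ for $k\ge 1$, and $\mathbf{b}(s)^{np}-1=O(s)$ near $s=0$. The triangular structure makes $u_1(t)=u_1(0)\e^{\lambda_{e,p}t}$ explicit, and each $u_n$ with $n\ge 2$ is then uniquely determined by a linear inhomogeneous ODE whose forcing depends only on $u_1,\ldots,u_{n-1}$, giving both existence and uniqueness of the formal series $u_s$.

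Next I would pass to self-similar variables by setting $\psi_n(t):=u_n(t)\e^{-\mu_{e,p}\,np\,t}$, which corresponds to $\psi(t,y):=u(t,y\e^{-\mu_{e,p}t})$, and the induced system becomes
$$\psi_n'(t)=-\Lambda_n\psi_n(t)+\sum_{\substack{k+m=n\\k,m\ge 1}}c_{k,m}\,\psi_k(t)\psi_m(t),\qquad \Lambda_n:=n\lambda_{e,p}-\lambda_{e,np}.$$
Requiring $\Lambda_1=0$ (so that $\psi_1(t)\equiv u_1(0)$ is constant) forces $\mu_{e,p}=\lambda_{e,p}/p$, consistent with \eqref{muea}. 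For $n\ge 2$ the positivity $\Lambda_n>0$ would follow from strict convexity and strict monotonicity of $q\mapsto\lambda_{e,q}$ (differentiating twice under the integral gives $\partial_q^2(\mathbf{a}^q+\mathbf{b}^q)>0$ since $\mathbf{a},\mathbf{b}\in(0,1)$) combined with the dissipation inequality $\lambda_{e,1}\le 0$ coming from $\mathbf{a}(s)+\mathbf{b}(s)\le 1$; the argument splits on the sign of $\lambda_{e,p}$, but in either case one gets $n\lambda_{e,p}>\lambda_{e,np}$. By Duhamel,
$$\psi_n(t)=\e^{-\Lambda_n t}\psi_n(0)+\int_0^t\e^{-\Lambda_n(t-\tau)}\sum_{\substack{k+m=n\\k,m\ge 1}}c_{k,m}\psi_k(\tau)\psi_m(\tau)\,\rd\tau,$$
and an induction on $n$ (using the limits $\psi_k(\tau)\to\Psi_k^{(p)}$ for $k<n$) yields $\psi_n(t)\to\Psi_n^{(p)}:=\Lambda_n^{-1}\sum_{k+m=n}c_{k,m}\Psi_k^{(p)}\Psi_m^{(p)}$ with $\Psi_1^{(p)}=u_1(0)$, which is exactly the recurrence announced in the statement.

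The main obstacle will be the growth estimate $\sup_n|\Psi_n^{(p)}|^{1/n}<\infty$, which is needed so that $\Psi^{(p)}(x)=\sum\Psi_n^{(p)}x^{np}/\Gamma(np+1)$ defines a genuine analytic function near the origin (and then, by the Bochner-type argument used in \cite{cannone2010infinite}, extends to a characteristic function on all of $\mathbb{R}^3$). I would treat this by a majorant argument: setting $M_n:=|\Psi_n^{(p)}|$, the recursion gives
$$M_n\le\frac{1}{\Lambda_n}\sum_{\substack{k+m=n\\k,m\ge 1}}c_{k,m}M_kM_m,$$
and combining (i) the linear growth $\Lambda_n\ge c\,n$ for large $n$ (since $\lambda_{e,np}$ is bounded above while $n\lambda_{e,p}$ grows linearly) with (ii) Stirling-type control of the Gamma-function ratios in $c_{k,m}$, the standard super-solution analysis of quadratic recursions with a linearly-growing divisor yields $M_n\le CR^n$ inductively, for some $R$ depending only on $|u_1(0)|$ and the kernel parameters. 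The same majorant, applied uniformly in $t\ge 0$ via the Duhamel formula, produces a $t$-independent dominating series for $|\psi_n(t)|$; dominated convergence then justifies termwise passage to the limit in $\psi(t,x)$ and gives $\lim_{t\to\infty}\psi(t,x)=\Psi^{(p)}(x)$ uniformly on compact sets, completing the proposition.
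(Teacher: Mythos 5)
Your proposal is correct and follows essentially the same route as the paper: substituting the power-series ansatz to obtain the triangular ODE system, fixing $\mu_{e,p}=\lambda_{e,p}/p$ so that $\psi_1$ is constant, proving $\gamma_{e,n}(p)=n\lambda_{e,p}-\lambda_{e}(np)>0$ for $n\geq 2$, solving by Duhamel, and closing the argument with an inductive geometric bound $|\psi_n(t)|\leq A_0^n b_e^{n-1}$ obtained from Beta/Gamma-function control of $\sum_{i+j=n}B_{e,p}(i,j)$ (the paper's Lemma 5.2 is exactly your majorant step). The only cosmetic difference is that you first derive the system for $u_n$ and then rescale to $\psi_n$, whereas the paper rescales at the level of the equation before expanding; your slightly more careful discussion of the sign of $\gamma_{e,n}(p)$ (via convexity of $q\mapsto\lambda_e(q)$ rather than only $\lambda_e'<0$ plus $\lambda_{e,p}>0$) is a welcome refinement but not a different method.
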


\begin{proof}
For the sake of convenience, given $ u(t,x) = \p(t,|\xi|) $, we consider a new scaled function, for any $ 0 <p <1 $,
\begin{equation} \label{muep}
u \left(t,x\right) = \psi\left( t, x \e^{\mu_{e,p} t}\right), \quad \text{with} \quad \mu_{e,p} = \frac{\lambda_{e,p}}{p}
\end{equation}
which is apparently the solution to the following initial value problem,
\begin{equation}\label{psi1}
\partial_{t} \psi\left(t,x\right) + \mu_{e,p} x \cdot\nabla\psi = \int_{0}^{1} G\left(s\right) \left\lbrace \psi\left[\mathbf{a}(s) x \right] \psi\left[ \mathbf{b}(s) x \right] - \psi(0)\psi\left( x \right)\right\rbrace \,\rd s,
\end{equation}
with initial datum
\begin{equation}\label{psii1}
\psi\left(0,x\right) = u\left(0,x\right).
\end{equation}
Furthermore, in order to find the specific solution $ \psi $, we substitute the formal series 
\begin{equation}\label{psiform}
\psi(x) = \sum_{n=0}^{\infty} \psi_{n}(t) \frac{x^{np}}{\Gamma(np+1)}, \quad p > 0
\end{equation}
into the equation \eqref{psi1} and obtain the following set of recurrence equation:\
\begin{align}
\frac{\rd \psi_{0}}{\rd t} =& \frac{\rd \psi_{1}}{\rd t} = 0, \label{recurrence01}\\
\frac{\rd \psi_{n} }{\rd t} + \gamma_{e,n}(p) \psi_{n} =& \sum_{\begin{subarray}{c}i=1,\\ i+ j=n\end{subarray}}^{n-1} B_{e,p}(i,j) \psi_{i} \psi_{j} , \quad \text{for}\quad n = 2,3,... \label{recurrence}
\end{align}
where 
\begin{align}
\gamma_{e,n}(p) =& np\mu_{e,p} - \lambda_{e}(np) = n\lambda_{e,p} - \lambda_{e}(np), \\
%\lambda_{e,p} =& \lambda_{e}(p) =  \int_{0}^{1} G(s) \left[  \mathbf{a}(s)^{p} \mathbf{b}(s)^{p} - 1 \right] \ \rd s,\\
B_{e,p}(i,j) =& \frac{\Gamma(np+1)}{\Gamma(ip+1)\Gamma(jp+1)} \int_{0}^{1} G(s) \left[  \mathbf{a}(s)^{ip} \mathbf{b}(s)^{jp} \right] \ \rd s, \quad \text{for}\quad n = 2,3,...\label{Bij}
\end{align}
Moreover, thanks to the Leibniz integral rule, 
\begin{equation}
\lambda'_{e}(p) = \int_{0}^{1} G(s) \left[ \mathbf{a}(s)^{p}\ln \mathbf{a}(s) + \mathbf{b}(s)^{p}\ln \mathbf{b}(s) \right] \,\rd s,
\end{equation}
and considering the fact that $ 0 < \mathbf{a}(s) < 1$ and $ 0 < \mathbf{b}(s) < 1 $, we can further obtain $ \lambda'_{e}(p) < 0 $ and then the following estimate for $ \gamma_{e,n}(p) $,
\begin{equation}\label{gammaen1}
\gamma_{e,n}(p) = n\lambda_{e,p} - \lambda_{e}(np) \geq \left(n-1\right)\lambda_{e,p},
\end{equation}
such that $ \gamma_{e,n}(p) > 0 $, if $ n \geq 2 $. As a result, we are able to solve the recurrence relation \eqref{recurrence} of the coefficients $ \psi_{n}(t) $ that, for $ n = 2,3,... $,
\begin{equation}\label{psin}
\psi_{n}(t) = \psi_{n}(0) \e^{- \gamma_{e,n}(p) t} + \sum_{\begin{subarray}{c}i=1,\\ i+ j=n\end{subarray}}^{n-1} B_{e,p}(i,j) \int_{0}^{t} \e^{- \gamma_{e,n}(p) (t-\tau)} \psi_{i}(\tau) \psi_{j}(\tau) \ \rd\tau
\end{equation}
from which, we can formally deduce that, for $ n = 0,1,2,... $,
\begin{equation}
\psi_{n}(t) \rightarrow \Psi_{n}, \quad \text{as} \quad t\rightarrow\infty, 
\end{equation}
where $ \left\lbrace \Psi_{n} \right\rbrace_{n=0}^{\infty} $ are the steady solution to \eqref{recurrence01}-\eqref{recurrence} given by the recurrence relation:
\begin{align}
\Psi_{0} =& 1, \quad \Psi_{1} = \psi_{1}, \label{oreccerence1}\\
\Psi_{n} =& \frac{1}{\gamma_{e,n}(p)} \sum_{\begin{subarray}{c}i=1,\\ i+ j=n\end{subarray}}^{n-1} B_{e,p}(i,j) \Psi_{i} \Psi_{j}, \label{oreccerence2}
\end{align}
and are also the coefficients of the series solution $ \Psi $ to the following equation,
\begin{equation}\label{omega}
\mu x \cdot \nabla \Psi = \int_{0}^{1} G\left(s\right) \left\lbrace \Psi\left[\mathbf{a}(s) x \right] \Psi\left[ \mathbf{b}(s) x \right] - \Psi\left( x \right)\Psi\left(0\right)\right\rbrace \,\rd s,
\end{equation}
which is the corresponded steady equation derived by substituting self-similar profile $ u(t,x) = \Psi(x\e^{\mu_{e,p} t}) $ into \eqref{u1}.

As we mentioned before, so far our calculations above have been quite formal, as there is no evidence to show the convergence of series \eqref{psiform}, as a result, we are now prepared to rigorously prove the convergence of series \eqref{psiform}, by showing that the solutions $ \psi_{n}(t) $ have the following uniform bound $ A_{e}^{n} $, for any $ t\in\left[0,\infty\right) $,
\begin{equation}
\left| \psi_{n}(t) \right| \leq A_{e}^{n}, \quad \text{for} \quad n = 1,2,...
\end{equation}
under the assumption about the initial datum $ \psi_{n}(0) $ in the sense that there exists a constant $ A_{0}>0 $ such that
\begin{equation}\label{psiassumption1}
\left| \psi_{n}(0) \right| \leq A_{0}^{n}, \quad \text{for} \quad n = 1,2,...
\end{equation}
which suffices to guarantee the convergence of series of \eqref{psiform}. Thus, we can complete the proof combining with the following Lemma \ref{technical1}.
\end{proof}

Finally, in order to illustrate this, we present the technical Lemma \ref{technical1}, which will play an important role in proving the convergence of series \eqref{psiform} for the \textit{non-cutoff Maxwellian collision kernels}.
\begin{lemma}\label{technical1}
	Assume that $ e\in(0,1] $ and $ 0\leq G(s) \leq k_{e}s^{-(1+\beta)} $ for some constants $ k_{e}>0 $ and $ 0 < \beta < 1 $, then there exists a constant $ C=C(p,\beta) $ such that, for any $ p>\beta $, 
	\begin{equation}\label{n-1}
	\frac{1}{n-1} \sum_{\begin{subarray}{c}i=1,\\ i+ j=n\end{subarray}}^{n-1}  B_{e,p}(i,j)  \leq k_{e} C(p,\beta), \quad \text{for} \quad n= 2, 3,...
	\end{equation}
	where the definition of coefficients $ B_{e,p}(i,j) $ has been given in \eqref{Bij}. \\
	Furthermore, if the initial datums $ \psi_{n}(0) $ satisfy the assumption \eqref{psiassumption1}, then for any $ t\geq 0 $, 
	\begin{equation}\label{psiuniform}
	\left| \psi_{n}(t) \right| \leq A_{0}^{n} \left[ 1 + \frac{k_{e}}{\lambda_{e,p}} C(p,\beta)\right]^{n-1} \quad \text{for} \quad n = 1,2,... 
	\end{equation}
	where the definition of $ \lambda_{e,p} $ has been given in \eqref{lamdaep}.
\end{lemma}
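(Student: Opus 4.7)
The plan is to first establish the combinatorial bound \eqref{n-1}, then run a strong induction on $n$ using the integral representation \eqref{psin} to obtain \eqref{psiuniform}.

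For \eqref{n-1}, I would substitute $\mathbf{a}(s) = \ap^{2} s$ and $\mathbf{b}(s) = 1 - c_{e} s$ with $c_{e} := \ap(1+\am) \in (0,1]$ into \eqref{Bij}, and apply $G(s)\le k_{e} s^{-1-\beta}$. Changing variables $u = c_{e} s$ turns the $s$-integral into a truncated Beta integral,
\[
\int_{0}^{1} s^{ip-1-\beta}(1-c_{e} s)^{jp}\,\rd s \leq c_{e}^{\beta - ip}\int_{0}^{1} u^{ip-1-\beta}(1-u)^{jp}\,\rd u = c_{e}^{\beta - ip}\,\frac{\Gamma(ip-\beta)\Gamma(jp+1)}{\Gamma(np+1-\beta)},
\]
which is meaningful because $ip > \beta$ (since $i\ge 1$ and $p > \beta$). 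Multiplying by $\Gamma(np+1)/[\Gamma(ip+1)\Gamma(jp+1)]$, the $\Gamma(jp+1)$ factors cancel and I obtain
\[
B_{e,p}(i,j) \leq k_{e}\,c_{e}^{\beta}\, r^{ip}\,\frac{\Gamma(np+1)}{\Gamma(np+1-\beta)}\,\frac{\Gamma(ip-\beta)}{\Gamma(ip+1)}, \qquad r := \frac{\ap^{2}}{c_{e}} = \frac{1+e}{3-e} \in (0,1].
\]
Standard gamma-function asymptotics give $\Gamma(np+1)/\Gamma(np+1-\beta) \leq C_{2}(p,\beta)\,n^{\beta}$ for all $n\ge 1$, while $\Gamma(ip-\beta)/\Gamma(ip+1) = O((ip)^{-1-\beta})$ as $i\to\infty$, so the series $\sum_{i=1}^{\infty} r^{ip}\,\Gamma(ip-\beta)/\Gamma(ip+1)$ sums to some $C_{1}(p,\beta)<\infty$ (thanks to $\beta > 0$, which gives summability even in the borderline elastic case $r=1$). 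Hence $\sum_{i=1}^{n-1} B_{e,p}(i,n-i) \le k_{e}\,C_{1}C_{2}\,c_{e}^{\beta}\, n^{\beta}$, and since $\beta < 1$ the ratio $n^{\beta}/(n-1)$ is uniformly bounded for $n\ge 2$. This yields \eqref{n-1} with $C(p,\beta)$ equal to, say, $2C_{1}C_{2}c_{e}^{\beta}$.

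For \eqref{psiuniform}, I argue by strong induction on $n$. The base case $n=1$ is immediate since $\psi_{1}'\equiv 0$ by \eqref{recurrence01}, giving $|\psi_{1}(t)| = |\psi_{1}(0)| \le A_{0}$. For the inductive step, set $D := 1 + k_{e}C(p,\beta)/\lambda_{e,p}$ and assume $|\psi_{k}(t)|\le A_{0}^{k} D^{k-1}$ for all $k<n$ and all $t\ge 0$. Then $|\psi_{i}(\tau)\psi_{j}(\tau)| \le A_{0}^{n} D^{n-2}$ for every $i+j=n$ with $i,j\ge 1$. Inserting into \eqref{psin} and using \eqref{n-1} together with the lower bound $\gamma_{e,n}(p)\ge (n-1)\lambda_{e,p}$ from \eqref{gammaen1},
\[
|\psi_{n}(t)| \le A_{0}^{n}\,\e^{-\gamma_{e,n}(p)t} + A_{0}^{n} D^{n-2}\,\frac{(n-1)k_{e}C(p,\beta)}{\gamma_{e,n}(p)}\bigl(1-\e^{-\gamma_{e,n}(p)t}\bigr) \le A_{0}^{n}\bigl[1 + D^{n-2}(D-1)\bigr] \le A_{0}^{n}\,D^{n-1},
\]
closing the induction.

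The main obstacle will be the first part: extracting polynomial-in-$n$ growth for the combinatorial sum. Using only $\mathbf{b}(s)\le 1$ leaves behind the full multinomial-type factor $\Gamma(np+1)/[\Gamma(ip+1)\Gamma(jp+1)]$, which for integer $p$ would be a binomial coefficient of exponential size. The refinement $\mathbf{b}(s) = 1 - c_{e}s$ turns the integrand into a Beta kernel whose gamma representation cancels $\Gamma(jp+1)$ and absorbs most of $\Gamma(ip+1)$, so that only the mild asymptotic $n^{\beta}/(ip)^{1+\beta}$ remains; the borderline elastic case $r=1$ remains summable in $i$ solely because $\beta > 0$, which is precisely the hypothesis \emph{$0<\beta<1$} in the lemma.
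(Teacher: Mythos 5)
Your proposal is correct and follows essentially the same route as the paper: you reduce the $s$-integral in $B_{e,p}(i,j)$ to a Beta function (the paper does this via $\mathbf{b}(s)\le 1-\mathbf{a}(s)$ and the substitution $\mathbf{a}(s)$, you via $u=c_{e}s$, which only changes the bookkeeping of the constants $r^{ip}$ and $c_{e}^{\beta}$), then invoke the same Gamma-function asymptotics to get summability in $i$ and $O(n^{\beta})$ growth in $n$, and close with the identical strong induction using $\gamma_{e,n}(p)\ge (n-1)\lambda_{e,p}$ and $1+D^{n-2}(D-1)\le D^{n-1}$. No substantive difference from the paper's argument.
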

\begin{proof}
	In \cite{BC2002selfsimilarapplication}, the similar Lemma is true for the elastic case, whose proof is based on the well-known identities of the classical Beta- and Gamma-functions, 
	\begin{equation}\label{GammaBeta}
	\int_{0}^{1} s^{y_{1}-1} \left(1-s\right)^{y_{2}-1} \rd s = \frac{\Gamma(y_{1})\Gamma(y_{2})}{\Gamma(y_{1}+y_{2})} \quad\text{and}\quad \lim\limits_{z\rightarrow\infty} \frac{\Gamma(z)z^{p}}{\Gamma(z+p)} = 1.
	\end{equation}
	Here we will extend the result to the inelastic case whenever restitution coefficient $ 0 < e \leq 1 $ with the help of some additional estimates. \\
	(i) By noticing that $ 1 + \am > \ap $ and formula \eqref{asbs}, we have
	\begin{equation}
	\mathbf{b}(s) = 1-\ap(1+\am)s \leq 1 - \ap^{2}s = 1 - \mathbf{a}(s)
	\end{equation}
	and then the formula \eqref{Bij} of coefficient $ B_{e,p}(i,j) $ has the following estimate with the help of first identity in \eqref{GammaBeta} as well as the assumption of $ G(s) $,
	\begin{equation}
	\begin{split}
	B_{e,p}(i,j) \leq& \frac{\Gamma(np+1)}{\Gamma(ip+1)\Gamma(jp+1)} \int_{0}^{1} ks^{-(1+\beta)} \left[  \mathbf{a}(s)^{ip} \left[1 - \mathbf{a}(s)\right]^{jp} \right] \rd s \\
	\leq& \frac{k}{\ap^{4}} \frac{\Gamma(np+1)}{\Gamma(ip+1)\Gamma(jp+1)} \int_{0}^{1} \mathbf{a}(s)^{ip-\beta-1}\left[1 - \mathbf{a}(s)\right]^{jp+1-1} \rd \mathbf{a}(s)\\
	\leq& k_{e} \frac{\Gamma(ip-\beta)\Gamma(np+1)}{\Gamma(ip+1)\Gamma(np+1-\beta)}
	\end{split}
	\end{equation}
	consequently, by summing up with respect to $ i $ and $ j $,
	\begin{equation}
	\frac{1}{n-1} \sum_{\begin{subarray}{c}i=1,\\ i+ j=n\end{subarray}}^{n-1}  B_{e,p}(i,j) \leq k_{e} \frac{\Gamma(np+1)}{(n-1)\Gamma(np+1-\beta)} \sum_{\begin{subarray}{c}i=1,\\ i+ j=n\end{subarray}}^{n-1} \frac{\Gamma(ip-\beta)}{\Gamma(ip+1)}, \quad \text{for} \quad n=2,3,...
	\end{equation}
	Thanks to the second identity in \eqref{GammaBeta}, we have, 
	\begin{equation}
	\lim\limits_{i\rightarrow\infty} \frac{\Gamma(ip-\beta)}{\Gamma(ip+1)} =  \left(ip\right)^{-(1+\beta)} \quad \text{and} \quad \lim\limits_{n\rightarrow\infty}\frac{\Gamma(np+1)}{\Gamma(np+1-\beta)} = \left(np\right)^{\beta}.
	\end{equation}
    from which, we can conclude that, for $ 0 < \beta < p < 1 $,
    \begin{equation}
    S(p,\beta) = \sum_{\begin{subarray}{c}i=1,\\ i+ j=n\end{subarray}}^{n-1} \frac{\Gamma(ip-\beta)}{\Gamma(ip+1)} < \infty \quad \text{and} \quad r(p,\beta) = \sup_{n = 2,3,...}\frac{\Gamma(np+1)}{(n-1)\Gamma(np+1-\beta)} < \infty,
    \end{equation}
    hence, we can obtain the estimate \eqref{n-1} by letting $ C(p,\beta) = r(p,\beta)S(p,\beta) $.\\
    (ii) As for the estimate \eqref{psiuniform}, we complete the proof by using the induction method: first of all, it is true for $ n=1 $ according to the recurrence formula \eqref{recurrence01}:
    \begin{equation}
    \psi_{1}(t) = \psi_{1}(0),
    \end{equation} 
    then we assume that the estimate \eqref{psiuniform} holds for $ n=2,3,...,m-1 $ with $ m-1\geq 2 $, and substitute the case $ n=m-1 $ into \eqref{psin} to obtain the estimate for $ n=m $ as following,
    \begin{equation}
    \left| \psi_{m}(t) \right| \leq A_{0}^{m} \left[ \e^{-\gamma_{e,m}(p)t} + b_{e}^{m-2} \sum_{\begin{subarray}{c}i=1,\\ i + j=m\end{subarray}}^{m-1}  B_{e,p}(i,j) \frac{1-\e^{-\gamma_{e,m}(p)t}}{\gamma_{e,m}(p)} \right],
    \end{equation}
    where 
    \begin{equation}\label{be}
    b_{e} = 1 + \frac{k_{e}}{\lambda_{e,p}} C(p,\beta), \quad \text{for} \quad  0 < \beta < p < 1.
    \end{equation}
    Meanwhile, note that the inequality \eqref{gammaen1} of $ \gamma_{e,m}(p) $ implies the fact that $ \e^{-\gamma_{e,m}(p)t} \leq 1 $ for any $ t \geq 0 $, which further results in the following estimate of $ \left| \psi_{m}(t) \right| $,
    \begin{equation}
    \left| \psi_{m}(t) \right| \leq A_{0}^{m} \left[ 1 + \frac{b_{e}^{m-2}}{\left(m-1\right)\lambda_{e,p}} \sum_{\begin{subarray}{c}i=1,\\ i + j=m\end{subarray}}^{m-1}  B_{e,p}(i,j) \right].
    \end{equation}
    Hence, according to the estimate \eqref{n-1} as well as the definition of $ b_{e} $ of \eqref{be}, we can obtain the final estimate of $ \left| \psi_{m}(t) \right| $,
    \begin{equation}
    \begin{split}
    \left| \psi_{m}(t) \right| \leq A_{0}^{m} \left[ 1 + \frac{b_{e}^{m-2}}{\lambda_{e,p}} k_{e} C\left(p,\beta\right) \right] = A_{0}^{m} \left[ 1 + b_{e}^{m-2} \left(b_{e} -1 \right)\right] \leq  A_{0}^{m} b_{e}^{m-1},
    \end{split}
    \end{equation}
    where we utilize the fact that $ b_{e} > 1 $ in the last inequality above. This completes the standard induction procedures.
\end{proof}
\begin{remark}
	By observing the recurrence relation \eqref{oreccerence1}-\eqref{oreccerence2}, the similar estimates can be obtained for coefficients $ \left\lbrace \Psi^{(p)}_{n} \right\rbrace_{n=0}^{\infty} $ that 
	\begin{equation}
	\left| \Psi^{(p)}_{n}\right| \leq \left| \Psi^{(p)}_{1} \right|^{n} \left(b_{e} -1 \right)^{n-1},
	\end{equation}
	where $ b_{e} $ is defined as \eqref{be}. 
\end{remark}

\subsection{Proof of the Theorem \ref{SteadyExistence}}
%Changing back to our original notation, we obtain that
%\begin{equation}
%\phi(\xi,t) = \psi(\xi\e^{\mu_{e,\alpha} t })
%\end{equation}

In this subsection, we give a detailed proof of Theorem \ref{SteadyExistence} about the existence of  steady solution $ \Phi $, which, in fact, is the direct consequence of Proposition \ref{theoremx} by changing variable $ x $ back to the original notation $ \eta $.

\begin{proof}
	For the singularity condition of the collision kernel, although the Lemma \ref{technical1} and Proposition \ref{theoremx} is proved under the assumption of the scaled collision kernel $ 0\leq G(s) \leq k_{e}s^{-(1+\beta)} $ for some constants $ k_{e}>0 $ and $ 0 < \beta < 1 $, this can be replaced by the assumption of original collision kernel form $ b $ with the help of the transformation $ G(s) = \pi b\left(1-s\right) $ in \eqref{asbs}. Indeed, after changing variables, $ [s(1-s)]^{\beta} G(s)\in L^{1}[0,1)$ will return to the assumption \eqref{noncutoffs} of $ b $, where the singularity appears at $ s\rightarrow 1 $, by setting $ \beta = \alpha/2 $:
	\begin{equation}\label{weaker}
	(1-s)^{\frac{\alpha}{2}} b(s) \in L^{1}[0,1), 
	\end{equation}
	for some $ \alpha\in [0,2] $, which actually can fall into our original non-cutoff assumption \eqref{noncutoffb}.
	% apparently satisfies the setting \eqref{weaker}.
	
	On the other hand, the steady solution $ \Phi(\eta) = \Phi(\left| \eta \right|) $ is constructed in the following form of the series by returning back $ \alpha = 2p $,
	\begin{equation}
	\Phi\left( \xi\e^{\mu_{e,\alpha} t }\right) = \Phi^{(\alpha)}_{e,K} \left(\eta\right)= \sum_{n=0}^{\infty}\Psi^{(\alpha)}_{n} \frac{ \left(\left| \eta \right|^{\alpha}\right)^{n}}{\Gamma\left(n \frac{\alpha}{2}+1\right)},
	\end{equation}
	which leads to the estimate \eqref{Phiasym}. Still, we need to prove the solution $ \Phi^{(\alpha)}_{e,K} $ is a characteristic function: in fact, we can conclude this by considering fact, if the initial datum is characteristic function, that the series \eqref{psiform} converges uniformly on $ t\in \left[0,\infty\right) $ to corresponded solution $ \psi $, which is a characteristic function at any $ t>0 $ by Lemma \ref{technical1}, and on the other hand, the $ \Psi $ is a pointwise limit of $ \psi $ as $ t\rightarrow \infty $ with uniqueness property. Thus, by changing back to variable $ \eta $, the steady solution $ \Phi^{(\alpha)}_{e,K} $ is also a characteristic function such that $ \Phi^{(\alpha)}_{e,K} \in \mathcal{K}^{\alpha} $.
\end{proof}

\subsection{Proof of the Asymptotic Stability Theorem \ref{AsymptoticStability}}
\label{subsec:asymptotic}
Finally we are in a position to give a complete proof of stability result of the rescaled initial value problem \eqref{phi}-\eqref{phii}, combined which, we can find that the solution $ \p(t,\xi) = \phi_{e}^{(\alpha)}(\xi\e^{\mu_{e,\alpha}t}, t) $ to \eqref{IBE}-\eqref{initial} converges (in self-similar variables) towards the self-similar profile $ \Phi $ under some specific initial condition.

\begin{proof}
The proof is partially relied on the stability result of $ \p(t,\xi) $, where it follows the stability results \eqref{stabilitynoncutoff} for any collision kernel satisfying the \eqref{noncutoffb}:
For any two solutions $ \p(t,\xi) = \phi_{e}^{(\alpha)}(t, \xi\e^{\mu_{e,\alpha}t}) $ and $ \tilde{\p}_{e}^{(\alpha)}(t,\xi) = \tilde{\phi}(t,\xi\e^{\mu_{e,\alpha}t}) $, by means of the observation under change of variable,
\begin{equation}\label{phistaa}
\sup_{|\xi|\leq R} \frac{\left|\phi_{e}^{(\alpha)}(t,\xi\e^{\mu_{e,\alpha}t}) - \tilde{\phi}_{e}^{(\alpha)}(t,\xi\e^{\mu_{e,\alpha}t}) \right|}{\left| \xi \right|^{\alpha}} = \e^{\lambda_{e,\alpha}t} \sup_{|\xi|\leq R\e^{\mu_{e,\alpha}t}} \frac{\left| \phi_{e}^{(\alpha)}(t,\xi) - \tilde{\phi}_{e}^{(\alpha)}(t,\xi) \right|}{\left| \xi \right|^{\alpha}}
\end{equation}
combined with \eqref{stabilitynoncutoff} such that, for all $ t>0 $ and $ R \in \left(0,\infty\right] $,
\begin{equation}\label{phistab}
\sup_{|\xi|\leq R} \frac{\left|\phi_{e}^{(\alpha)}(t,\xi\e^{\mu_{e,\alpha}t}) - \tilde{\phi}_{e}^{(\alpha)}(t,\xi\e^{\mu_{e,\alpha}t}) \right|}{\left| \xi \right|^{\alpha}} \leq \e^{\lambda_{e,\alpha}t} \sup_{|\xi|\leq R} \frac{\left| \phi_{0}(\xi) - \tilde{\phi}_{0}(\xi) \right|}{\left| \xi \right|^{\alpha}}
\end{equation}
we then obtain the estimate as following by linking \eqref{phistaa} with \eqref{phistab} ,
\begin{equation}
\sup_{|\xi|\leq R\e^{\mu_{e,\alpha}t}} \frac{\left| \phi_{e}^{(\alpha)}(t,\xi) - \tilde{\phi}_{e}^{(\alpha)}(t,\xi) \right|}{\left| \xi \right|^{\alpha}} \leq \sup_{|\xi|\leq R} \frac{\left| \phi_{0}(\xi) - \tilde{\phi}_{0}(\xi) \right|}{\left| \xi \right|^{\alpha}}.
\end{equation}
Moreover, let $ S = R\e^{\mu_{e,\alpha}t} $, we have
\begin{equation}\label{<S}
\sup_{|\xi|\leq S} \frac{\left| \phi_{e}^{(\alpha)}(t,\xi) - \tilde{\phi}_{e}^{(\alpha)}(t,\xi) \right|}{\left| \xi \right|^{\alpha}} \leq \sup_{|\xi|\leq S\e^{-\mu_{e,\alpha}t}} \frac{\left| \phi_{0}(\xi) - \tilde{\phi}_{0}(\xi) \right|}{\left| \xi \right|^{\alpha}}.
\end{equation}
Now we're able to complete the proof by study the estimate of $ \left\| \phi_{e}^{(\alpha)}(t,\cdot) - \tilde{\phi}_{e}^{(\alpha)}(t,\cdot) \right\|_{\alpha} $ as following
\begin{align}\label{stabilityIII}
\left\| \phi_{e}^{(\alpha)}(t,\cdot) - \tilde{\phi}_{e}^{(\alpha)}(t,\cdot) \right\|_{\alpha} =& \sup_{|\xi|\leq S} \frac{\left| \phi_{e}^{(\alpha)}(t,\xi) - \tilde{\phi}_{e}^{(\alpha)}(t,\xi) \right|}{\left| \xi \right|^{\alpha}} + \sup_{|\xi| > S} \frac{\left| \phi_{e}^{(\alpha)}(t,\xi) - \tilde{\phi}_{e}^{(\alpha)}(t,\xi) \right|}{\left| \xi \right|^{\alpha}}\\
:=& I_{1} + I_{2},
\end{align}
where we can get the estimate for $ I_{1} $ directly from \eqref{<S}. As for term $ I_{2} $, by recalling the fact that $ \left| \phi_{e}^{(\alpha)}(t,\xi) \right| < 1 $ and $ \left| \tilde{\phi}_{e}^{(\alpha)}(t,\xi) \right| < 1 $, we find that, for any arbitrary small $ \epsilon > 0 $, there exists $ S>0 $ such that
\begin{equation}
\sup_{|\xi| > S} \frac{\left| \phi_{e}^{(\alpha)}(t,\xi) - \tilde{\phi}_{e}^{(\alpha)}(t,\xi) \right|}{\left| \xi \right|^{\alpha}} \leq \sup_{|\xi| > S} \frac{2}{\left| \xi \right|^{\alpha}} \leq \frac{2}{R^{\alpha}} \leq \epsilon,
\end{equation}
where in the last two inequalities above we utilize that $ R = S \e^{-\mu_{e,\alpha}t} < S$, for all $ t>0 $ and each $ R \in \left(0,\infty\right] $.

Consequently, the estimate \eqref{stabilityIII} leads to that,
\begin{equation}\label{key}
\left\| \phi_{e}^{(\alpha)}(t,\cdot) - \tilde{\phi}_{e}^{(\alpha)}(t,\cdot) \right\|_{\alpha} \leq \sup_{|\xi|\leq S\e^{-\mu_{e,\alpha}t}} \frac{\left| \phi_{0}(\xi) - \tilde{\phi}_{0}(\xi) \right|}{\left| \xi \right|^{\alpha}} + \epsilon,
\end{equation}
and we can further conclude the large-time asymptotic stability by letting $ t\rightarrow \infty $ as well as noting the fact that $ \epsilon > 0 $ can be arbitrary small.
\end{proof}
%%%%%%%%%%%%%%%%%%%%%%%%%%%%%%%%%%%%%%%%%%%%%%%%%%%%%%%%%%%%%%%%%%%%%%%%%%%%%%%%%%%%%%%%%%%%%%%%%

%\listoftodos

\appendix
\section{Appendix}
\label{sec:app}

\subsection*{A.1 Fourier Transform of $ Q_{e}^{+} $}
\label{sub:fourier+}
For the sake of completeness, we present the Fourier transformation for the inelastic collision operator, where we try to keep consistency with the notation used in \cite[Theorem 12]{DesvillettesNote2003}. 
In the elastic case, after the Fourier transformation, we can get the beautiful formula, which is called Bobylev identity, likewise, we expect to find the formula of inelastic Boltzmann equation.
Here, we take the inelastic gain term $ Q^{+}_{e}(g,f)(v) $ as example, as the loss term $ \mathcal{F}\left[ Q^{-}_{e}(g,f) \right] $ is the same as the elastic case $ \mathcal{F}\left[ Q^{-}(g,f) \right] $. By performing the weak formulation, for any test function $ \phi $, we have,
\begin{equation}
\int_{\bR^{3}} Q^{+}_{e}(g,f)(v)\phi(v) \rd v =  \int_{\bR^{3}} \int_{\bR^{3}} \int_{\Sd^{2}} b\left(\frac{v-\vs}{|v-\vs|}\cdot \sigma\right) g(\vs) f(v) \phi(v')  \,\rd \sigma \,\rd \vs \,\rd v.
\end{equation}
Selecting $ \phi(v) = \e^{-iv\cdot\xi} $ in the identity above, we have
\begin{equation}
\begin{split}
&\mathcal{F}\left[ Q^{+}_{e}(g,f) \right](\xi) \\
=& \int_{\bR^{3}} \int_{\bR^{3}} \int_{\Sd^{2}} g(\vs) f(v) b\left(\frac{v-\vs}{|v-\vs|}\cdot \sigma\right) \e^{ -i \left(\frac{v+\vs}{2} + \frac{1-e}{4}(v-\vs) + \frac{1+e}{4}|v-\vs|\sigma\right)\cdot \xi}  \,\rd \sigma \,\rd \vs \,\rd v\\
=& \int_{\bR^{3}} \int_{\bR^{3}} \int_{\Sd^{2}} g(\vs) f(v) b\left(\frac{v-\vs}{|v-\vs|}\cdot \sigma\right) \e^{ -i\frac{v+\vs}{2}\cdot\xi}\e^{-i \left(\frac{1-e}{4}(v-\vs) + \frac{1+e}{4}|v-\vs|\sigma\right)\cdot\xi} \,\rd \sigma \,\rd \vs \,\rd v 
\end{split}
\end{equation}
according to the general change of variable,
\begin{equation}
\int_{\Sd^{2}} F(k\cdot\sigma, l\cdot\sigma) \rd \sigma = \int_{\Sd^{2}} F(l\cdot\sigma, k\cdot\sigma) \rd\sigma , \quad  |l| = |k| = 1,
\end{equation}
due to the existence of an isometry on $ \Sd^{2} $ exchanging $ l $ and $ k $, we have, by exchanging the rule of $ \frac{\xi}{|\xi|} $ and $ \frac{v-\vs}{|v-\vs|} $,
\begin{equation}
\begin{split}
\int_{\Sd^{2}} g(\vs) f(v) b\left(\frac{v-\vs}{|v-\vs|}\cdot \sigma\right) \e^{-i \left(\frac{1-e}{4}(v-\vs) + \frac{1+e}{4}|v-\vs|\sigma\right)\cdot\xi}  \,\rd \sigma \\
= \int_{\Sd^{2}} g(\vs) f(v) b\left(\frac{\xi}{|\xi|}\cdot \sigma\right) \e^{-i \left(\frac{1-e}{4}\xi + \frac{1+e}{4}|\xi|\sigma\right)\cdot(v-\vs)}  \,\rd \sigma
\end{split}
\end{equation}
Thus, 
\begin{equation}
\begin{split}
&\mathcal{F}\left[ Q^{+}_{e}(g,f) \right](\xi) \\
=&\int_{\bR^{3}} \int_{\bR^{3}} \int_{\Sd^{2}} g(\vs) f(v) b\left(\frac{v-\vs}{|v-\vs|}\cdot \sigma\right) \e^{ -i\frac{v+\vs}{2}\cdot\xi}\e^{-i \left(\frac{1-e}{4}(v-\vs) + \frac{1+e}{4}|v-\vs|\sigma\right)\cdot\xi} \,\rd \sigma \,\rd \vs \,\rd v \\
=& \int_{\bR^{3}} \int_{\bR^{3}} \int_{\Sd^{2}} g(\vs) f(v) b\left(\frac{\xi}{|\xi|}\cdot \sigma\right) \e^{ -i\frac{v+\vs}{2}\cdot\xi} \e^{-i \left(\frac{1-e}{4}\xi + \frac{1+e}{4}|\xi|\sigma\right)\cdot(v-\vs)} \,\rd \sigma \,\rd \vs \,\rd v\\
=& \int_{\bR^{3}} \int_{\bR^{3}} \int_{\Sd^{2}} g(\vs) f(v) b\left(\frac{\xi}{|\xi|}\cdot \sigma\right) \e^{ -iv\cdot\left( \frac{\xi}{2} + \frac{1-e}{4}\xi + \frac{1+e}{4}|\xi|\sigma \right)} \e^{-i \vs \cdot\left( \frac{\xi}{2} - \frac{1-e}{4}\xi - \frac{1+e}{4}|\xi|\sigma \right) } \,\rd \sigma \,\rd \vs \,\rd v\\
=&  \int_{\Sd^{2}} b\left(\frac{\xi}{|\xi|}\cdot \sigma\right) \hat{f}(\xie^{+}) \hat{g}(\xie^{-})   \,\rd \sigma,
\end{split}
\end{equation}
where, unlike the elastic case, the $ \xi^{+} $ and $ \xi^{-} $ are defined as
\begin{equation}
\xie^{+} = \frac{\xi}{2} + \frac{1-e}{4}\xi + \frac{1+e}{4}|\xi|\sigma, \quad
\xie^{-} = \frac{\xi}{2} - \frac{1-e}{4}\xi - \frac{1+e}{4}|\xi|\sigma.
\end{equation}

\section*{Acknowledgement}
\label{sec:ack}
The author would like to express sincere gratitude to
Prof.~Tong Yang for his giving the related topic and constant support. Also the author would thank Dr.~Shuaikun Wang for his helpful discussion and valuable advice.

\bibliographystyle{plain}
\bibliography{QI_bibtex}

\end{document}